\title{Margin of Victory in Tournaments: Structural and Experimental Results}
\author{
    Markus Brill,\textsuperscript{\rm 1}
    Ulrike Schmidt-Kraepelin,\textsuperscript{\rm 1}
    Warut Suksompong\textsuperscript{\rm 2} \\
}
\newcites{Appendix}{References}
\newcommand{\cmark}{\ding{51}}%
\newcommand{\xmark}{\ding{55}}%
\tikzset{>={Latex[width=2mm,length=2mm]}}
\newtheorem{theorem}{Theorem}
\newtheorem{corollary}[theorem]{Corollary}
\newtheorem{proposition}[theorem]{Proposition}
\newtheorem{lemma}[theorem]{Lemma}
\theoremstyle{definition}
\newtheorem{definition}[theorem]{Definition}
\newtheorem*{claim}{Claim}
\newcommand{\cp}{\ensuremath{\mathit{CO}}\xspace}
\newcommand{\tc}{\ensuremath{\mathit{TC}}\xspace}
\newcommand{\uc}{\ensuremath{\mathit{UC}}\xspace}
\newcommand{\ba}{\ensuremath{\mathit{BA}}\xspace}
\newcommand{\E}{\mathbb{E}}
\newcommand{\mov}{\ensuremath{\mathsf{MoV}}}
\newcommand{\crs}{CRS\xspace}
\newcommand{\drs}{DRS\xspace}
\newcommand{\ie}{i.e.,\xspace}
\newcommand{\midd}{\mathrel{:}}
\DeclareMathOperator{\outdeg}{outdeg}
\DeclareMathOperator{\indeg}{indeg}
\DeclareMathOperator{\mincut}{min-cut}
\newcommand{\yes}{\textcolor{green!50!black}{\cmark}}
\newcommand{\no}{\textcolor{red!50!black}{\xmark}}
\newcommand{\secref}[1]{Section~\ref{#1}}
\newcommand{\citemov}{Brill et al. \shortcite{BrillScSu20}\xspace}
\newcommand{\citemovfull}{(Brill et al. 2020)\xspace}
\newcounter{int}
\newcommand{\citen}[1] {\setcounter{int}{0}\@for\tmp:=#1\do{%
\ifnum \value{int}>0; \fi%
\setcounter{int}{1}%
\citeauthor{\tmp} \shortcite{\tmp}}}
\newcommand{\citenp}[1]{\setcounter{int}{0}\@for\tmp:=#1\do{%
\ifnum \value{int}>0; \fi%
\setcounter{int}{1}%
\citeauthor{\tmp}, \citeyear{\tmp}}}
\begin{document}

\maketitle

\begin{abstract}
Tournament solutions are standard tools for identifying winners based on pairwise comparisons between competing alternatives. The recently studied notion of \textit{margin of victory (MoV)} offers a general method for refining the winner set of any given tournament solution, thereby increasing the discriminative power of the solution. 
In this paper, we reveal a number of structural insights on the MoV by investigating fundamental properties such as monotonicity and consistency with respect to the covering relation. 
Furthermore, we provide experimental evidence on the extent to which the MoV notion refines winner sets in tournaments generated according to various stochastic models. 
\end{abstract}

\section{Introduction}

Tournaments serve as a practical tool for modeling scenarios involving a set of alternatives along with pairwise comparisons between them.
Perhaps the most common example of a tournament is a round-robin sports competition, where every pair of teams play each other once and there is no tie in match outcomes.
Another application, typical especially in the social choice literature, concerns elections: here, alternatives represent election candidates, and pairwise comparisons capture the majority relation between pairs of candidates.
In order to select the ``winners'' of a tournament in a consistent manner, numerous methods---known as \emph{tournament solutions}---have been proposed.
Given the ubiquity of tournaments, it is hardly surprising that tournament solutions have drawn substantial interest from researchers in the past few decades~\citep{Laslier97,Woeginger03,Hudry09,AzizBrFi15,Dey17a,BrandtBrHa18,BrandtBrSe18,HanVa19}. 

While tournament solutions are useful for selecting the best alternatives according to various desiderata, several solutions suffer from the setback that they tend to choose large winner sets.
For instance, \citet{Fey08} showed that the top cycle, the uncovered set, and the Banks set are likely to include all alternatives in a large random tournament.
To address this issue, we recently introduced the notion of \emph{margin of victory (MoV)} for tournaments \citep{BrillScSu20}.
The MoV of a winner is defined as the minimum number of pairwise comparisons that need to be reversed in order for the winner to drop out of the winner set.
Analogously, the MoV of a non-winner is defined as the negative of the minimum number of comparisons that must be reversed for it to enter the winner set.\footnote{In our previous paper \citemovfull, we considered a more general setting where each pairwise comparison can have a weight representing the cost of reversing it, but here we will focus on the unweighted setting.}
In addition to refining tournament solutions, the notion also has a natural interpretation in terms of bribery and manipulation: the MoV of an alternative reflects the cost of bribing voters or manipulating match outcomes so that the status of the alternative changes with respect to the winner set.
For a number of common tournament solutions, we studied the complexity of computing the MoV and provided bounds on its values for both winners and non-winners \citemovfull.

Our previous results paint an initial picture on the properties of the MoV in tournaments.
Nevertheless, several important questions about the notion remain unanswered from that work.
For each tournament solution, how many different values does the MoV take on average?
How large is the set of alternatives with the highest MoV in a random tournament?
If two alternatives dominate the same number of other alternatives, for which tournament solutions is it the case that the MoV of both alternatives must be equal?
If an alternative ``covers'' another (i.e., the former alternative dominates the latter along with all alternatives that the latter dominates), for which tournament solutions is it always true that the MoV of the former alternative is at least that of the latter?
In this paper, we provide a comprehensive answer to these questions for several common tournament solutions using axiomatic and probabilistic analysis (\Cref{sec:structural}) as well as through experiments (\Cref{sec:experiments}).

\subsection{Related Work}

Despite their origins in social choice theory, tournament solutions have found applications in a wide range of areas including game theory~\citep{FisherRy95}, webpage ranking~\citep{BrandtFi07}, dueling bandit problems~\citep{RamamohanRaAg16}, and philosophical decision theory \citep{Podg20a}. 
As is the case for social choice theory in general, early studies of tournament solutions were primarily based on the axiomatic approach.
With the rise of computational social choice in the past fifteen years or so, tournament solutions have also been thoroughly examined from an algorithmic perspective.
For an overview of the literature, we refer to the surveys of \citet{Laslier97}, \citet{Hudry09}, and \citet{BrandtBrHa16}.

While we introduced the MoV concept for tournament solutions \citemovfull, similar concepts have been applied to a large number of settings, perhaps most notably voting.
In addition, various forms of bribery and manipulation have been considered for both elections and sports tournaments. 
We refer to our previous paper for relevant references, but note here that the MoV continues to be a popular concept in recent work, for example in the context of sports modeling~\citep{Kovalchik20}, election control~\citep{CastiglioniFeGa20}, and political and educational districting~\citep{StoicaChDe20}.
\citet{YangGu17} gave a parameterized complexity result for the decision version of computing the MoV with respect to the uncovered set.

The discriminative power of tournament solutions has been studied both analytically and experimentally.
As we mentioned earlier, 
\citet{Fey08} showed that in a large tournament drawn uniformly at random, the top cycle, the uncovered set, and the Banks set are unlikely to exclude any alternative.
\citet{ScottFe12} established an analogous result for the minimal covering set, while \citet{FisherRy95} proved that the bipartisan set selects half of the alternatives on average.
\citet{SaileSu20} extended some of these results to more general probability distributions, and \citet{BrandtSe16} performed experiments using both stochastic models and empirical data.

\section{Preliminaries}
\label{sec:prelims}

A \emph{tournament} $T=(V,E)$ is a directed graph in which exactly one directed edge exists between any pair of vertices. 
The vertices of $T$, denoted by $V(T)$, are often referred to as \textit{alternatives}, and their number $n:=|V(T)|$ is referred to as the \textit{size} of $T$. 
The set of directed edges of $T$, denoted by $E(T)$, represents an asymmetric and connex \emph{dominance relation} between the alternatives. 
An alternative $x$ is said to \emph{dominate} another alternative $y$ if $(x,y)\in E(T)$ (i.e., there is a directed edge from~$x$ to~$y$). 
When the tournament is clear from the context, we often write $x \succ y$ to denote $(x,y) \in E(T)$. By definition, for each pair $x,y$ of distinct alternatives, either $x$ dominates $y$ ($x \succ y$) or $y$ dominates $x$ ($y \succ x$), but not both. 
The dominance relation can be extended to sets of alternatives by writing $X\succ Y$ if $x\succ y$ for all $x\in X$ and all $y\in Y$. 

For a given tournament $T$ and an alternative $x \in V(T)$,
the \emph{dominion} of $x$, denoted by $D(x)$, is the set of alternatives~$y$ such that $x\succ y$.
Similarly, the set of \emph{dominators} of $x$, denoted by $\overline{D}(x)$, is the set of alternatives $y$ such that $y\succ x$.
The \emph{outdegree} of $x$ is denoted by $\outdeg(x) = |D(x)|$, and the \emph{indegree} of $x$ by $\indeg(x) = |\overline{D}(x)|$.
For any $x\in V(T)$, it holds that $\outdeg(x)+\indeg(x) = n-1$.
An alternative $x \in V(T)$ is said to be a \emph{Condorcet winner} in $T$ if it dominates every other alternative (i.e., $\outdeg(x) = n-1$), and a \emph{Condorcet loser} in $T$ if it is dominated by every other alternative (i.e., $\outdeg(x)=0$). 
A tournament is \emph{regular} if all alternatives have the same outdegree.
A regular tournament exists for every odd size, but not for any even size. 

\subsection{Tournament Solutions}

A \emph{tournament solution} is a function that maps each tournament to a nonempty subset of its alternatives, usually referred to as the set of \emph{winners} or the \emph{choice set}.
A tournament solution must not distinguish between isomorphic tournaments; in particular, if there is an automorphism that maps an alternative $x$ to another alternative $y$ in the same tournament, any tournament solution must either choose both $x$ and $y$ or neither of them.
The set of winners of a tournament $T$ with respect to a tournament solution $S$ is denoted by $S(T)$.
The tournament solutions considered in this paper are as follows:

\begin{itemize}
\item The \emph{Copeland set} (\cp) is the set of alternatives with the largest outdegree. 
The outdegree of an alternative is also referred to as its \emph{Copeland score}. 
\item The \emph{top cycle} (\tc) is the (unique) nonempty smallest set $X$ of alternatives such that $X \succ V(T) \setminus X$.
Equivalently, \tc is the set of alternatives that can reach every other alternative via a directed path.
\item The \emph{uncovered set} (\uc), is the set of alternatives that are not ``covered'' by any other alternative. 
An alternative $x$ is said to \emph{cover} another alternative $y$ if $D(y) \subseteq D(x)$. 
Equivalently, \uc is the set of alternatives reaching every other alternative via a directed path of length at most two.
\item The set of \emph{$k$-kings}, for an integer $k\geq 3$, is the set of alternatives that can reach every other alternative via a directed path of length at most $k$.
\item The \emph{Banks set} (\ba) is the set of alternatives that appear as the Condorcet winner of some transitive subtournament that cannot be extended.\footnote{
We say that an alternative $x \in V(T) \setminus V(T’)$ \emph{extends} a transitive subtournament $T’$ if $x$ dominates all alternatives in $T’$.
}
\end{itemize}

All of these tournament solutions satisfy \emph{Condorcet-consistency}, meaning that whenever a Condorcet winner exists, it is chosen as the unique winner.

It is clear from the definitions that \uc (the set of ``$2$-kings'') is contained in the set of $k$-kings for any $k\geq 3$, which is in turn a subset of \tc (the set of ``$(n-1)$-kings'', as any directed path has length at most $n-1$).
Moreover, both \cp and \ba are contained in \uc 
\citep{Laslier97}.

Given a tournament $T$ and an edge $e=(x,y)\in E(T)$, we let $\overline{e} := (y,x)$ denote its reversal.
Denote by $T^e$ the tournament that results from $T$ when reversing $e$.
A tournament solution $S$ is said to be \emph{monotonic} if for any edge $e=(y,x)\in E(T)$,
\[
x\in S(T) \quad \text{implies} \quad x\in S(T^e).
\]
In other words, a tournament solution is monotonic if a winner remains in the choice
set whenever its dominion is enlarged (while everything else is unchanged).
Equivalently, monotonicity means that a non-winner remains outside of the choice set whenever it becomes dominated by an additional alternative.

\subsection{Margin of Victory}

For a set of edges $R \subseteq E(T)$ of a tournament $T$, we define 
$\overline{R} := \{\overline{e}: e \in R\}$.
Denote by $T^R$ the tournament that results from $T$ when reversing all edges in~$R$, i.e.,
$V(T^R) = V(T)$ and $E(T^R) = (E(T) \setminus R) \cup \overline{R}$.

Fix a tournament solution $S$ and consider a tournament~$T$. An edge set $R \subseteq E(T)$ is called a \textit{destructive reversal set (\drs)} for $x \in S(T)$ if $x \notin S(T^R)$.
Analogously, $R$ is called a \textit{constructive reversal set (\crs)} for $x \in V(T)\setminus S(T)$ if $x \in S(T^R)$.
The \emph{margin of victory} of  $x \in S(T)$ is given by
\[\mov_S(x,T) = \min \{|R| \midd R \text{ is a \drs for $x$ in $T$} \},\] and for $x \not\in S(T)$ it is given by \[\mov_S(x,T)=-\min \{|R| \midd R \text{ is a \crs for $x$ in $T$} \}.\]

\newcommand{\movv}{\mov{} value\xspace}
\newcommand{\movvs}{\mov{} values\xspace}
By definition, $\mov_S(x,T)$ is a positive integer if $x \in S(T)$, and a negative integer otherwise.\footnote{The only exception is the degenerate case where $S$ selects all alternatives for all tournaments of some size $n$; in this case we define $\mov_S(x,T) = \infty$ for all alternatives $x$ and all tournaments~$T$ of that size. 
For ease of exposition, we will assume for the rest of the paper that the degenerate case does not occur, but all of our results still hold even when this case occurs.}

It follows from the definition of \mov{} that edge reversals have limited effects on the \movv of alternatives: If a single edge $e$ of a tournament $T$ is reversed, then $\mov_S(x,T)$ and $\mov_S(x,T^e)$ differ by at most $1$, unless $x$ is a winner in exactly one of the two tournaments $T$ and $T^e$ (in which case $|\mov_S(x,T)-\mov_S(x,T^e)|=2$).

Furthermore, \mov{} values behave monotonically with respect to edge reversals, provided the underlying tournament solution is monotonic. 

\begin{restatable}{proposition}{propmon}
\label{prop:mon}
Let $S$ be a monotonic tournament solution and consider two tournaments $T$ and $T^e$, where $e=(y,x) \in E(T)$. Then, $\mov_S(x,T^e) \ge  \mov_S(x,T)$. 
\end{restatable}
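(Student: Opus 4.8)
The plan is to do a case analysis based on the winner status of $x$ in the two tournaments $T$ and $T^e$, and in the substantive cases to convert an optimal reversal set for one tournament into a reversal set for the other of at most the same size, with monotonicity as the only real ingredient. Since $e=(y,x)$, reversing $e$ enlarges the dominion of $x$, so monotonicity of $S$ immediately rules out the combination $x\in S(T)$ and $x\notin S(T^e)$. If $x\notin S(T)$ but $x\in S(T^e)$, then $\mov_S(x,T)<0<\mov_S(x,T^e)$ and the claim is immediate. Hence only two cases remain: $x$ is a non-winner in both tournaments, or a winner in both.

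If $x\notin S(T)$ and $x\notin S(T^e)$, both \movvs are negative, so it suffices to show that the minimum size of a \crs for $x$ in $T^e$ is at most the minimum size of a \crs for $x$ in $T$. Starting from a minimum \crs $R\subseteq E(T)$ for $x$ in $T$, I would distinguish whether $e\in R$. If $e\in R$, then $R\setminus\{e\}\subseteq E(T^e)$ and $(T^e)^{R\setminus\{e\}}=T^R$, so $R\setminus\{e\}$ is a \crs for $x$ in $T^e$ of size $|R|-1$. If $e\notin R$, then $R\subseteq E(T^e)$, and since reversals of distinct edges commute we have $(T^e)^R=(T^R)^e$; applying monotonicity of $S$ to the tournament $T^R$ and the edge $(y,x)$ (which is still present in $T^R$ because $e\notin R$) turns $x\in S(T^R)$ into $x\in S((T^R)^e)=S((T^e)^R)$, so $R$ itself is a \crs for $x$ in $T^e$ of size $|R|$. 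Either way we obtain the desired inequality on minimum sizes, hence on \movvs.

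The case $x\in S(T)$ and $x\in S(T^e)$ is handled symmetrically: both \movvs are positive, so it suffices to show the minimum size of a \drs for $x$ in $T$ is at most that of a \drs for $x$ in $T^e$. Starting from a minimum \drs $R\subseteq E(T^e)$ for $x$ in $T^e$, one distinguishes whether $\overline{e}=(x,y)\in R$. If so, then $R\setminus\{\overline{e}\}\subseteq E(T)$ and $(T^e)^{R}=T^{R\setminus\{\overline{e}\}}$ (reversing the same physical edge twice is the identity), so $R\setminus\{\overline{e}\}$ is a \drs for $x$ in $T$ of size $|R|-1$. If not, then $R\subseteq E(T)$, $(T^e)^R=(T^R)^e$, and the contrapositive of monotonicity applied to $T^R$ and the edge $(y,x)$ ($x\notin S((T^R)^e)$ implies $x\notin S(T^R)$) shows $R$ is a \drs for $x$ in $T$ of size $|R|$.

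The only point requiring care is the bookkeeping identities relating $(T^e)^R$, $(T^R)^e$, and $T^R$ under the various intersection patterns of $R$ with $\{e,\overline{e}\}$; these are elementary, following from the facts that reversals of distinct edges commute and that reversing the same edge twice returns the original tournament. Once these are in place, each of the two remaining cases is a one-line invocation of monotonicity, so I do not anticipate a genuine obstacle beyond keeping the case distinctions straight.
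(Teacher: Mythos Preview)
Your proposal is correct and follows essentially the same approach as the paper's proof: a case split on the winner status of $x$, followed by converting an optimal reversal set for one tournament into one for the other (removing $e$ or $\overline{e}$ if present, otherwise invoking monotonicity on the modified tournament). Your treatment is in fact slightly more careful than the paper's, which conflates $e$ and $\overline{e}$ in the bookkeeping and leaves the two subcases implicit.
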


\begin{proof}
Let $T$ be a tournament and $e=(y,x) \in E(T)$, and let $T'=T^e$.
We first consider the case where $x \in S(T)$, \ie $\mov_S(x,T)>0$. By monotonicity of $S$, it follows that $x \in S(T')$. Suppose for contradiction that $\mov_S(x,T')<\mov_S(x,T)$, and let $R'$ be a minimal DRS for $x$ with respect to $T'$. 
Then, we will find a DRS $R$ for $x$ with respect to $T$ of size $|R|\le |R'|$, contradicting the assumption $\mov_S(x,T)>\mov_S(x,T')=|R'|$. To this end, define
$R = R' \setminus \{e\}$. To see that $R$ is a DRS for $x$ with respect to $T$, we need to show that $x \notin S(T^R)$. Since $R'$ is a DRS for $x$ with respect to $T'$, we have $x \notin S({T'}^{R'})$. In the case $e \in R'$, the tournament ${T'}^{R'}$ is identical to $T^R$, and the claim follows. 
In the case $e \notin R'$, we have ${T'}^{R'} = {T'}^R = T^{R \cup \{e\}}$.
Since $x \notin S(T^{R \cup \{e\}})$ and $S$ is monotonic, we have $x \notin S(T^R)$.

An analogous argument can be applied in the case where $x \in V(T) \setminus S(T)$, \ie $\mov_S(x,T)<0$. If $x \in S(T')$, then $\mov_S(x,T') > 0 >  \mov_S(x,T)$ holds trivially. Therefore, we assume that $x \in V(T) \setminus S(T')$, so $\mov_S(x,T') < 0$. 
Suppose for contradiction that $\mov_S(x,T')<\mov_S(x,T)$ and let $R$ be a minimal CRS for $x$ with respect to $T$. Then, by monotonicity, $R' = R \setminus \{e\}$ is a CRS for $x$ with respect to $T'$, contradicting the assumption that $\mov_S(x,T') < \mov_S(x,T) = -|R|$.
\end{proof}

\section{Structural Results}
\label{sec:structural}

In this section we provide a number of results relating the \mov{} notion to structural properties of the tournament in question. 
In particular, we identify conditions on tournament solutions ensuring that the corresponding \mov{} values are consistent with the covering relation (Section \ref{sec:cover-consistency}) and we examine the relationship between \mov{} values and Copeland scores (Sections \ref{sec:degree-consistency} and \ref{sec:probabilistic}). 
Our results are summarized in Table~\ref{tab:results}.

\subsection{Cover-Consistency}
\label{sec:cover-consistency}

Recall from \secref{sec:prelims} that an alternative $x$ covers another alternative $y$ if $D(y) \subseteq D(x)$. In particular, this implies that $x$ dominates $y$ (as otherwise $x \in D(y)$).  
The covering relation, which forms the basis for defining the uncovered set $\uc$, is transitive and has a close connection to Pareto dominance in voting settings \citep{BGH14a}. 

\begin{table}
\begin{center}
\begin{tabular}{ l c c c c c c c c c }
\toprule
& \pbox{2cm}{cover-\\cons.} & \pbox{2cm}{strong\\  deg.-cons.} & \pbox{2cm}{degree-\\cons.} & \pbox{2cm}{equal-\\  deg.-cons.}\\
\midrule
$\mov_\cp$ & \yes & \no & \yes & \no \\
$\mov_\tc$ & \yes & \yes & \yes & \yes \\
$\mov_\uc$ & \yes & \no & \no & \no \\
$\mov_{k\text{-kings}}$ & \yes & \no & \no & \no \\
$\mov_\ba$ & \yes & \no & \no & \no\\
\bottomrule
\end{tabular}
\caption{Consistency properties of the margin of victory for the tournament solutions $\cp$, $\tc$, $\uc$, $k$-kings, and $\ba$.}
\label{tab:results}
\end{center}
\end{table}

Intuitively, if $x$ covers $y$, there is a strong argument that $x$ is preferable to $y$. 
We show that for all of the tournament solutions that we consider, their corresponding \movvs are indeed consistent with this intuition. 

\begin{definition}
For a tournament solution $S$, we say that $\mov_S{}$ is \emph{cover-consistent} if, for any tournament $T$ and any alternatives $x,y\in V(T)$, $x$~covers~$y$ implies $\mov_S(x,T)\geq \mov_S(y,T)$.
\end{definition}

We introduce a new property that will be useful for showing that a tournament solution is cover-consistent.

\begin{definition}
A tournament solution $S$ is said to be \emph{transfer-monotonic} if for any edges $(y,z),(z,x)\in E(T)$,
\[
x\in S(T) \quad \text{implies} \quad x\in S(T'),
\]
where $T'$ is the tournament obtained from $T$ by reversing edges $(y,z)$ and $(z,x)$.
\end{definition}
\noindent In other words, if an alternative $x$ is chosen, then it remains chosen when an alternative $z$ is ``transferred'' from the dominion $D(y)$ of another alternative $y$ to its dominion $D(x)$.

We show that monotonicity and transfer-monotonicity together imply cover-consistency of the margin of victory.

\begin{lemma}
\label{lem:cover-consistency}
If a tournament solution $S$ is monotonic and transfer-monotonic, then $\mov_S$ satisfies cover-consistency.
\end{lemma}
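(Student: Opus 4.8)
The plan is to exploit a near-automorphism of $T$ that swaps $x$ and $y$. Assume $x$ covers $y$; recall that this forces $x \succ y$ and $D(y)\subseteq D(x)$, and hence also $\overline{D}(x)\subseteq\overline{D}(y)$ (if $y\succ c$ then $c\in D(y)\subseteq D(x)$, contradicting $c\succ x$). Write $B := D(x)\setminus(D(y)\cup\{y\})$ for the set of alternatives that $x$ dominates but $y$ does not, so that $V(T)\setminus\{x,y\}=D(y)\sqcup B\sqcup\overline{D}(x)$, and set
\[
F := \{(x,y)\} \;\cup\; \bigcup_{b\in B}\bigl\{(x,b),\,(b,y)\bigr\} \;\subseteq\; E(T).
\]
I claim that the transposition $\sigma$ interchanging $x$ and $y$ (and fixing every other alternative) is an isomorphism from $T$ to $T^F$. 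Indeed, $\sigma$ fixes every edge not incident to $x$ or $y$; every $a\in D(y)$ is dominated by both $x$ and $y$ and every $c\in\overline{D}(x)$ dominates both $x$ and $y$, so $\sigma$ respects all edges between $\{x,y\}$ and $D(y)\cup\overline{D}(x)$; reversing $(x,y)$ realizes $\sigma$ on that edge; and for $b\in B$ we have $x\succ b$ and $b\succ y$ in $T$, so reversing the pair $(x,b),(b,y)$ yields $b\succ x$ and $y\succ b$, which is exactly the $\sigma$-image of those two edges. Since the margin of victory is invariant under isomorphisms, $\mov_S(x,T)=\mov_S(\sigma(x),T^F)=\mov_S(y,T^F)$, so the lemma reduces to proving $\mov_S(y,T^F)\ge\mov_S(y,T)$.

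To establish this, I would realize $T^F$ from $T$ one ``move'' at a time: for each $b\in B$, simultaneously reverse $(x,b)$ and $(b,y)$ --- which is precisely the transfer of $b$ out of $D(x)$ and into $D(y)$ --- and finally reverse the single edge $(x,y)$, enlarging $D(y)$ by $x$. (All of these edges survive the earlier moves, so each move is well defined, and the composition of all moves is $T^F$.) The final move weakly increases $\mov_S(y)$ by \Cref{prop:mon}. For the transfer moves I would first prove the following transfer analogue of \Cref{prop:mon}: \emph{if $(p,z),(z,q)\in E(T)$ and $T'$ is obtained by reversing both edges, then $\mov_S(q,T')\ge\mov_S(q,T)$}, where $q$ is the alternative whose dominion gains $z$. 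Since each transfer move above is a transfer into $D(y)$, applying this analogue repeatedly shows that every move weakly increases $\mov_S(y)$, and hence $\mov_S(y,T^F)\ge\mov_S(y,T)$. Note that, because both \Cref{prop:mon} and its transfer analogue apply uniformly to winners and non-winners, no case distinction on whether $x$ or $y$ lies in $S(T)$ is needed.

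The transfer analogue of \Cref{prop:mon} would be proved by imitating the proof of \Cref{prop:mon}, with transfer-monotonicity in place of monotonicity. For $q\in S(T)$, transfer-monotonicity gives $q\in S(T')$; then, given a minimal \drs $R'$ for $q$ in $T'$, one produces a \drs $R$ for $q$ in $T$ with $|R|\le|R'|$ by a case distinction on which of the two reversed edges belong to $R'$ --- in the three cases where at least one does, $R$ is obtained from $R'$ by inserting or deleting at most those two edges, and in the remaining case (neither belongs to $R'$) one appeals to the contrapositive of transfer-monotonicity to discard the ``superfluous'' transfer pair, exactly as monotonicity is used in the proof of \Cref{prop:mon}. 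The case $q\notin S(T)$ is handled symmetrically with constructive reversal sets.

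I expect the main obstacle to be precisely this transfer analogue of \Cref{prop:mon}: a transfer reverses two edges at once, so the crude estimate only yields $|\mov_S(q,T)-\mov_S(q,T')|\le 2$, and securing the sharp one-sided inequality $\mov_S(q,T')\ge\mov_S(q,T)$ is exactly what the bookkeeping of reversal sets above is for. A secondary point that needs care is the edge-by-edge check that $\sigma$ is indeed an isomorphism $T\to T^F$, which rests on the partition of $V(T)\setminus\{x,y\}$ into $D(y)$, $B$, and $\overline{D}(x)$ induced by the covering of $y$ by $x$.
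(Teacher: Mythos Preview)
Your proposal is correct, and the core isomorphism observation---that the transposition $\sigma$ swapping $x$ and $y$ is an isomorphism $T\to T^F$---is the same one the paper uses. The difference lies in how the inequality $\mov_S(y,T^F)\ge\mov_S(y,T)$ is obtained. The paper splits into the four cases according to whether $x,y\in S(T)$, rules out $x\notin S(T)\wedge y\in S(T)$ via the isomorphism, and in the two nontrivial cases builds a reversal set $R_x$ for~$x$ directly from a minimum reversal set $R_y$ for~$y$ by an explicit edge-by-edge translation (essentially hard-coding the effect of $\sigma$ on $R_y$). You instead factor the step through a clean auxiliary lemma---a transfer analogue of \Cref{prop:mon} stating that $\mov_S(q,\cdot)$ weakly increases under a single transfer into $D(q)$---and then realize $T^F$ from $T$ as a sequence of such transfers followed by one monotone edge reversal. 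Your route is more modular: the transfer analogue is a standalone statement of independent interest, its proof is short (the four-case bookkeeping you sketch goes through; in the three cases where $R'$ meets $\{\bar e_1,\bar e_2\}$ one can even arrange $T^R=T'^{R'}$ exactly, and only the disjoint case needs the contrapositive of transfer-monotonicity), and once it is in hand no case distinction on membership in $S(T)$ is needed. The paper's route is more self-contained in that it avoids stating a separate lemma, at the cost of the somewhat intricate per-edge recipe for $R_x$.
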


\begin{proof}
Let $S$ be a monotonic and transfer-monotonic tournament solution, and suppose that alternative $x$ covers another alternative $y$ in a tournament $T$.
We will show that $\mov_S(x,T)\ge\mov_S(y,T)$.

If $x\in S(T)$ and $y\not\in S(T)$, the statement holds trivially since $\mov_S(x,T) > 0 > \mov_S(y,T)$.
Suppose for contradiction that $x\not\in S(T)$ and $y\in S(T)$.
Consider the tournament $T'$ obtained from $T$ by reversing the edge $(x,y)$ as well as edges $(x,z), (z,y)$ for each $z\in D(x) \setminus (D(y)\cup \{y\})$.
By monotonicity and transfer-monotonicity, $y\in S(T')$.
However, tournaments $T$ and $T'$ are isomorphic, and there is an isomorphism that maps $x\in T$ to $y\in T'$.
Since $x\not\in S(T)$, we must have $y\not\in S(T')$, a contradiction.

The remaining two cases are $x,y\in S(T)$ and $x,y\not\in S(T)$; both can be handled in an analogous manner, so let us focus on the latter case.
It suffices to show that given any CRS for $y$ of minimum size, we can construct a CRS of smaller or equal size for $x$.
Let $R_y$ be a CRS for $y$ of minimum size; we will construct a CRS $R_x$ for $x$ such that $|R_x|\le |R_y|$.

Let $A=V(T)\setminus\{x,y\}$, and partition $A$ into three sets $A_1=D(y)$, $A_2 = D(x)\setminus(D(y)\cup\{y\})$, and $A_3 = \overline{D}(x)$; see \Cref{fig:cover-consistency} for an illustration.
For any edge in $R_y$ between two alternatives of $A$, we add the same edge to $R_x$.
We do not add the edge $(x,y)$ regardless of whether it is present in~$R_y$.
Each remaining edge in $R_y$ is between an alternative in $A$ and one of $x,y$.
Note that $(y,a)\not\in R_y$ for any $a\in A$---otherwise, by monotonicity, removing such an edge would keep $R_y$ a CRS for $y$, contradicting the minimality of $R_y$.

\begin{figure}[!t]
\centering
\scalebox{0.8}{
\begin{tikzpicture}
\draw (6,.9) circle [radius = 0.6];
\draw (6,-.9) circle [radius = 0.6];
\node at (6,.9) {\Large$A_{2}$};
\node at (6,-.9) {\Large$A_{1}$};
\node[circle,fill,inner sep=2pt] at (3,.9)(x){}; 
\node[above=2pt] at (x){\Large$x$};
\node[circle,fill,inner sep=2pt] at (3,-.9)(y){}; 
\node[below=2pt] at (y){\Large$y$};
\draw (0,0) circle [radius = 0.6];
\node at (0,0) {\Large$A_{3}$};

\draw[->, thick] (x) -- (y);
\draw[->, thick] (x) -- (5.4,.9);
\draw[->, thick] (y) -- (5.4,-.9);
\draw[->, thick] (x) -- (5.4,-.65);
\draw[->, thick] (5.4,.65) -- (y);
\draw[->, thick] (0.55,.35) -- (x);
\draw[->, thick] (0.55,-.35) -- (y);

\end{tikzpicture}}

\caption{Illustration of the proof of Lemma~\ref{lem:cover-consistency}.}
\label{fig:cover-consistency}
\end{figure}
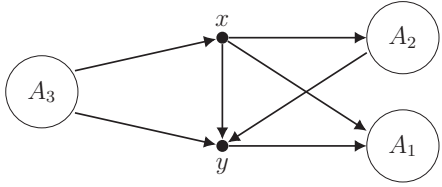

\smallskip \noindent
For each $a\in A$, we add further edges to $R_x$ as follows.
\vspace{-.5em}
\begin{itemize}
\item For $a\in A_1$:
\begin{itemize}
    \item If $(x,a)\in R_y$, add $(y,a)$ to $R_x$.
\end{itemize}
\item For $a\in A_2$:
\begin{itemize}
    \item If $(x,a)\in R_y$ but $(a,y)\not\in R_y$, add $(x,a)$ to $R_x$.
    \item If $(x,a)\not\in R_y$ but $(a,y)\in R_y$, add $(a,y)$ to $R_x$.
\end{itemize}
\item For $a\in A_3$:
\begin{itemize}
    \item If $(a,x)\in R_y$, add $(a,y)$ to $R_x$.
    \item If $(a,y)\in R_y$, add $(a,x)$ to $R_x$.
\end{itemize}
\end{itemize}
Clearly, $|R_x|\leq |R_y|$, and we have $y\in S(T^{R_y})$ by definition of $R_y$.
From $T^{R_y}$, we reverse the edge $(x,y)$ if it is present, and for $a\in A_2$ such that both $(x,a),(a,y)\not\in R_y$, we reverse $(x,a)$ and $(a,y)$.
Let $T'$ be the resulting tournament.
By monotonicity and transfer-monotonicity, we have $y\in S(T')$.
However, one can verify that there exists an isomorphism from $T'$ to $T^{R_x}$ that maps $x$ to $y$, $y$ to $x$, and every other alternative $a$ to itself.
Since $y\in S(T')$, we must have $x\in S(T^{R_x})$, meaning that $R_x$ is indeed a CRS for $x$.
\end{proof}

In \Cref{app:cover-consistency}, we show that neither monotonicity nor transfer-monotonicity can be dropped from the condition of Lemma~\ref{lem:cover-consistency}.
This also means that neither of the two properties implies the other.



We now show that all tournament solutions we consider in this paper satisfy both monotonicity and transfer monotonicity, thereby implying that their \mov{} functions are cover-consistent.

\begin{proposition}
\label{prop:monotonic-proofs}
\cp, \uc, \tc, $k$-kings, and \ba satisfy monotonicity.
\end{proposition}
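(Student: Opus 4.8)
The plan is to prove monotonicity for each of the five solutions separately, using in most cases the reachability characterisations recalled in \Cref{sec:prelims}. Throughout, fix a tournament $T$, an edge $e=(y,x)\in E(T)$, and set $T'=T^e$. Passing from $T$ to $T'$ only replaces the edge $y\succ x$ by $x\succ y$; in particular $y\neq x$, the dominion of $x$ in $T'$ equals $D(x)\cup\{y\}$ (its dominion in $T$), $\outdeg(x)$ increases by $1$ while $\outdeg(y)$ decreases by $1$, and for every vertex $w\notin\{x,y\}$ the dominion of $w$ and every edge incident to $w$ are the same in $T$ and $T'$. The case of \cp is then immediate: if $x\in\cp(T)$ then $\outdeg(x)$ is maximum in $T$, and since reversing $e$ only increases the Copeland score of $x$ and does not increase the Copeland score of any other alternative, $x$ still has maximum score in $T'$, so $x\in\cp(T')$.

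For \tc, \uc, and the $k$-kings, write $\ell$ for the relevant path-length bound ($\ell=2$ for \uc, $\ell=k$ for the $k$-kings, $\ell=n-1$ for \tc), so that $x$ belongs to the solution exactly when $x$ reaches every other vertex by a directed path of length at most $\ell$. Given $x\in S(T)$, I claim $x$ retains this property in $T'$. The key observation is that the only edge deleted from $T$, namely $y\succ x$, is an edge \emph{into} $x$ and hence lies on no simple directed path \emph{starting at} $x$; meanwhile the only edge added, $x\succ y$, can only create new or shorter paths out of $x$. Therefore every directed path of length at most $\ell$ from $x$ in $T$ survives in $T'$, so $x$ still reaches everything within $\ell$ steps in $T'$ and $x\in S(T')$. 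For \uc I would also record the short direct argument via the covering relation: if some $z$ covered $x$ in $T'$, then $z\neq x$; since $x\succ y$ in $T'$ we get $z\succ y$ in $T'$ and hence $z\neq y$, so $z$'s edges are unchanged, and the dominion of $x$ in $T$ is contained in that of $x$ in $T'$, which is contained in that of $z$ (in either tournament); thus $z$ already covers $x$ in $T$, a contradiction.

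For \ba, suppose $x\in\ba(T)$, witnessed by a transitive subtournament $T^\ast$ on a vertex set $U$ with Condorcet winner $x$ that cannot be extended. First note $y\notin U$: if $y\in U\setminus\{x\}$ then $x\succ_T y$, contradicting $e=(y,x)\in E(T)$. Since no edge inside $U$ is touched when passing to $T'$, the restriction of $T'$ to $U$ is again transitive with Condorcet winner $x$. It remains to check that it cannot be extended in $T'$: if some $w\notin U$ dominated all of $U$ in $T'$, then $w\succ x$ in $T'$, so $w\neq y$ (as $x\succ y$ in $T'$) and $w\neq x$; hence $w$'s edges to the vertices of $U$ agree in $T$ and $T'$, so $w$ would already dominate all of $U$ in $T$, contradicting non-extendability of $T^\ast$. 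Therefore $x\in\ba(T')$.

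I do not expect a single decisive obstacle, but the \ba case requires the most care: one must pick a witnessing non-extendable transitive subtournament and verify both that reversing $e$ keeps it transitive (which is where $y\notin U$ is used) and that it remains non-extendable. For the reachability-based solutions the only subtlety is making precise that an in-edge of $x$ is irrelevant to directed paths emanating from $x$.
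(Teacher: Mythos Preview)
Your proof is correct. The paper takes a different route: it simply cites the literature (Laslier 1997; Brandt, Brill, Harrenstein 2016) for the monotonicity of \cp, \uc, \tc, and \ba, and only gives a direct argument for $k$-kings. That $k$-kings argument is essentially identical to yours---a path of length at most $k$ from $x$ cannot use the incoming edge $(y,x)$, hence survives the reversal. What you add is a self-contained treatment of the remaining four solutions: the \cp case via the obvious degree count, \uc and \tc by subsuming them under the same reachability argument (with $\ell=2$ and $\ell=n-1$), and a clean direct argument for \ba. This buys independence from external references and a uniform picture of why all the path-based solutions are monotone; the paper's approach buys brevity. Your extra care in the \ba case (checking $y\notin U$ and that a would-be extender $w$ in $T'$ must differ from both $x$ and $y$) is exactly what is needed and is correct.
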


\begin{proof}
It is already known that \cp, \uc, \tc, and \ba are monotonic \citep{Laslier97,BrandtBrHa16}; hence, it remains to establish the monotonicity of $k$-kings.
Let $x$ be a $k$-king in tournament $T$, and suppose that $T'$ is the tournament obtained by reversing an edge $(y,x)$.
Since any path of length at most $k$ from $x$ to another alternative in $T$ cannot contain the edge $(y,x)$, the same path is also present in $T'$.
Hence $x$ is also a $k$-king in $T'$.
\end{proof}

\begin{proposition}
\label{prop:transfer-proofs}
\cp, \uc, \tc, $k$-kings, and \ba satisfy transfer-monotonicity.
\end{proposition}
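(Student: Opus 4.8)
The plan is to verify transfer-monotonicity for each of the five solutions in turn. Throughout, fix the transfer operation: $T'$ is obtained from $T$ by reversing $(y,z)$ and $(z,x)$, so $T$ contains $y\succ z\succ x$ while $T'$ contains $x\succ z\succ y$, and every other edge is unchanged. Two elementary facts will do most of the work. First, on outdegrees: $\outdeg_{T'}(x)=\outdeg_T(x)+1$, $\outdeg_{T'}(y)=\outdeg_T(y)-1$, and all remaining outdegrees---including that of $z$, which trades the edge to $x$ for the edge to $y$---are unchanged. In particular, if $x$ has maximum outdegree in $T$, then its outdegree strictly increases while nobody else's increases, so $x$ still has maximum outdegree in $T'$; this settles \cp. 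Second, since $z$ dominates $x$ in $T$, we have $z\notin B$ for every transitive subtournament $B$ of $T$ having $x$ as its Condorcet winner; this will be the key to \ba.

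For \uc, $k$-kings, and \tc it suffices to treat ``$k$-kings'' for every $k\ge 2$, since \uc is the case $k=2$ and \tc the case $k=n-1$. All of these are defined by reachability from $x$ via directed paths of length at most $k$, so I would prove: for every $v\neq x$, if $x$ reaches $v$ in $T$ by a path of length at most $k$, then it does so in $T'$ as well (and for $x$ a $k$-king in $T$ the antecedent holds for all $v$). If $v=z$ the edge $x\to z$ of $T'$ already gives a path of length $1\le k$, so assume $v\notin\{x,z\}$ and take a shortest $x$--$v$ path $P$ in $T$; being shortest it is simple, hence uses no edge into $x$ and in particular avoids $(z,x)$. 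If $P$ also avoids $(y,z)$---which is automatic when $v=y$, since a path ending at $y$ uses no edge out of $y$---then $P$ is already a path in $T'$ of the same length. Otherwise $P=x\leadsto y\to z\leadsto v$, and its suffix from $z$ to $v$ avoids both $(y,z)$ (used only once in $P$, inside the prefix) and $(z,x)$ (else $P$ would revisit $x$); prepending the $T'$-edge $x\to z$ to this suffix yields an $x$--$v$ path in $T'$ of length at most $|P|-1\le k$. Either way $x$ still reaches $v$ within $k$ steps, so $x$ remains a $k$-king in $T'$.

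For \ba, let $B$ be a transitive subtournament of $T$ that cannot be extended and has Condorcet winner $x$. As observed, $z\notin B$, so neither reversed edge lies inside $B$; hence the subtournament induced on $B$ is the same in $T'$ as in $T$, and is still transitive with Condorcet winner $x$. To see that $B$ cannot be extended in $T'$ either, note that among the endpoints $\{x,y,z\}$ of the reversed edges only $z$ changes its relation to $B$ (the reversed edge incident to $y$ goes to $z\notin B$), and that change makes $z$ dominated by $x\in B$; so no vertex outside $B$ dominates all of $B$ in $T'$ unless it already did in $T$. Thus $B$ still cannot be extended, and $x$ remains in the Banks set of $T'$.

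The step I expect to require the most care is the path surgery in the second paragraph: one must argue cleanly that a shortest $x$--$v$ path can only be ``damaged'' by the edge $(y,z)$ and never by $(z,x)$, and that rerouting through the new edge $x\to z$ never lengthens the path. Once the observation $z\notin B$ is in hand the Banks case is short, and \cp is essentially degree bookkeeping.
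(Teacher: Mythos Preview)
Your proof is correct and follows essentially the same approach as the paper: the outdegree bookkeeping for \cp, the path-rerouting argument for $k$-kings (with \uc and \tc as the cases $k=2$ and $k=n-1$), and the observation $z\notin B$ for \ba are all exactly what the paper uses. Your write-up is in fact slightly more careful than the paper's---you justify explicitly why a shortest path cannot contain $(z,x)$ and why no vertex other than $z$ changes its relation to $B$---but the underlying ideas are identical.
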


\begin{proof}
We start with \cp. 
If $x\in\cp(T)$ and edges $(y,z)$ and $(z,x)$ are reversed, then the outdegree of $x$ increases by~$1$, that of $y$ decreases by~$1$, while all other alternatives have the same outdegree as before.
Hence $x$ is in the Copeland set of the new tournament.

Next, we turn to $k$-kings. 
Let $x$ be a $k$-king in tournament~$T$, and suppose that $T'$ is the tournament obtained by reversing edges $(y,z)$ and $(z,x)$.
Consider a path of length at most $k$ from $x$ to another alternative $w$ in $T$; this path cannot contain the edge $(z,x)$.
If the path does not contain the edge $(y,z)$, then the same path also exists in $T'$.
Else, the path has the form $x\rightarrow \dots\rightarrow y\rightarrow z\rightarrow\dots\rightarrow w$, where possibly $z=w$.
We may then shorten this path to $x\rightarrow z\rightarrow\dots\rightarrow w$ in $T'$, meaning that $x$ can also reach $w$ in at most $k$ steps in~$T'$.
The proof for \uc and \tc proceeds in a similar manner.

Finally, let $x\in\ba(T)$, and consider an inclusion-maximal transitive subtournament with $x$ as the Condorcet winner.
Define $T'$ as in the previous paragraph.
Since $(z,x)\in E(T)$, $z$ does not belong to the subtournament, so all edges of the subtournament are intact in $T'$.
Since the subtournament is inclusion-maximal in $T$, no alternative different from $z$ extends it in $T'$.
Moreover, since $(x,z)\in E(T')$, $z$ cannot extend the subtournament in $T'$ either.
It follows that the subtournament is inclusion-maximal in $T'$, and therefore $x\in\ba(T')$.
\end{proof}

Lemma~\ref{lem:cover-consistency} and Propositions~\ref{prop:monotonic-proofs} and \ref{prop:transfer-proofs} together imply the following:

\begin{theorem}
\label{thm:cover-consistency-satisfies}
For each $S\in\{\cp,\tc,\uc,k\text{-kings},\ba\}$, $\mov_S$ satisfies cover-consistency.
\end{theorem}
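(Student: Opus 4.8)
The plan is to obtain the theorem as an immediate corollary of the machinery already in place: Lemma~\ref{lem:cover-consistency} reduces cover-consistency of $\mov_S$ to the conjunction of monotonicity and transfer-monotonicity of $S$, and Propositions~\ref{prop:monotonic-proofs} and~\ref{prop:transfer-proofs} establish both properties for each of the five solutions $\cp$, $\tc$, $\uc$, $k$-kings, and $\ba$. So the proof of the theorem itself is a one-line invocation; the real content sits in those two propositions, which I would organize as follows.

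For monotonicity, I would invoke the classical literature for $\cp$, $\uc$, $\tc$, and $\ba$ (e.g., \citet{Laslier97,BrandtBrHa16}) and treat $k$-kings by a direct path argument: reversing an incoming edge $(y,x)$ of $x$ cannot lie on any directed path that \emph{starts} at $x$, so every witnessing path of length at most $k$ from $x$ to another alternative survives the reversal, and $x$ remains a $k$-king. For transfer-monotonicity (reversing $(y,z)$ and $(z,x)$, which moves $z$ from $D(y)$ into $D(x)$): the case of $\cp$ is a degree count, since $x$'s outdegree increases by one, $y$'s decreases by one, and all other outdegrees are unchanged, so $x$ keeps maximal outdegree; for the reachability-based solutions $k$-kings, $\uc$, and $\tc$, I would take any short path from $x$ to an arbitrary target $w$ in $T$, note it cannot use $(z,x)$, and if it uses $(y,z)$ it has the form $x\to\cdots\to y\to z\to\cdots\to w$, which can be shortcut to $x\to z\to\cdots\to w$ using the new edge $(x,z)$, preserving the length bound. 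For $\ba$, I would fix an inclusion-maximal transitive subtournament with $x$ as Condorcet winner; since $(z,x)\in E(T)$ the vertex $z$ is not in it, so its edges are untouched, and I would argue it stays inclusion-maximal in the new tournament — no old alternative can now extend it, and $z$ cannot extend it because the reversal gives $x\to z$.

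The main obstacle is the $\ba$ case of transfer-monotonicity, where one must check that flipping the two edges creates no new extension of the maximal transitive subtournament; the crucial points are that both reversed edges involve $z$ (which is excluded from the subtournament), so its internal structure is preserved, and the new orientation $x\to z$ itself blocks $z$ from extending it. The analogous $\uc$/$\tc$ path-shortcut arguments, and the $\crs$-transplanting bookkeeping inside Lemma~\ref{lem:cover-consistency}, are then routine. Combining Lemma~\ref{lem:cover-consistency} with Propositions~\ref{prop:monotonic-proofs} and~\ref{prop:transfer-proofs} yields the claimed cover-consistency of $\mov_S$ for all $S\in\{\cp,\tc,\uc,k\text{-kings},\ba\}$.
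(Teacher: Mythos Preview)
Your proposal is correct and matches the paper's approach essentially line for line: the theorem is stated as an immediate consequence of Lemma~\ref{lem:cover-consistency} together with Propositions~\ref{prop:monotonic-proofs} and~\ref{prop:transfer-proofs}, and your sketches of those two propositions (citing the literature for monotonicity of $\cp$, $\uc$, $\tc$, $\ba$; the path-survival argument for $k$-kings; the degree count for $\cp$, the path-shortcut for the reachability solutions, and the maximal-transitive-subtournament argument for $\ba$ in the transfer-monotonicity proof) coincide with the paper's own arguments.
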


In light of \Cref{thm:cover-consistency-satisfies}, one may wonder whether a stronger property, in which $x$ covers $y$ implies the strict inequality $\mov_S(x) > \mov_S(y)$, can also be achieved.
However, the answer is negative for all Condorcet-consistent tournament solutions, including all solutions that we consider.
Indeed, in a transitive tournament $x\succ y\succ z$ of size $3$, such a solution only selects $x$.
But since all three alternatives are chosen when they form a cycle (due to symmetry), both $y$ and $z$ can be brought into the winner set by reversing only one edge, so $\mov_S(y) = -1 = \mov_S(z)$ even though $y$ covers $z$.

\subsection{Degree-Consistency}
\label{sec:degree-consistency}

Given a tournament solution $S$ and a tournament $T$, the $\mov_S{}$ values yield a natural \emph{ranking} (possibly including ties) of the alternatives in $T$, where alternative $x$ is ranked higher than $y$ whenever $\mov_S(x,T)>\mov_S(y,T)$. 
We are interested in how closely this ranking by \movvs resembles the ranking by Copeland scores, according to which $x$ is ranked higher than $y$ if $\outdeg(x)> \outdeg(y)$. 

\begin{definition}
For a tournament solution $S$, we say that $\mov_S{}$ is
\begin{itemize}
\item \emph{degree-consistent} if, for any tournament $T$ and any alternatives $x,y\in V(T)$, $\outdeg(x)> \outdeg(y)$ implies $\mov_S(x,T)\geq \mov_S(y,T)$;
\item \emph{equal-degree-consistent} if, for any tournament $T$ and any alternatives $x,y\in V(T)$, $\outdeg(x)= \outdeg(y)$ implies $\mov_S(x,T)= \mov_S(y,T)$; and
\item \emph{strong degree-consistent} if, for any tournament $T$ and any alternatives $x,y\in V(T)$, $\outdeg(x)\geq \outdeg(y)$ implies $\mov_S(x,T)\geq \mov_S(y,T)$.
\end{itemize}
\end{definition}

It follows from the definitions that $\mov_S{}$ is strong degree-consistent if and only if it is both degree-consistent and equal-degree-consistent.
Observe also that cover-consistency is implied by degree-consistency.

We remark that these properties are not necessarily desirable from a normative perspective: 
Whereas the ranking implied by a strongly degree-consistent \mov{} function merely represents a coarsening of the straightforward ranking by outdegree, we are often interested in tournament solutions that take more structure of the tournament into account and, as a consequence, have $\mov$ functions that may violate (equal-)degree-consistency. 
Indeed, since degree-consistent \mov{} functions are in line with Copeland scores, their significance is somewhat limited and there would be little additional value derived from the \mov{} computations, which in some cases are much more involved than simply calculating Copeland scores.

We start by assessing the degree-consistency of $\mov_{\cp}$ and show that it satisfies degree-consistency but not equal-degree-consistency.

\begin{proposition}
$\mov_{\cp}$ does not satisfy equal-degree-consistency.  \label{thm:co-equal-consistency}
\end{proposition}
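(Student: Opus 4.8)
The plan is to construct an explicit counterexample: a tournament $T$ with alternatives $x,y$ such that $\outdeg(x)=\outdeg(y)$ but $\mov_\cp(x,T)\neq\mov_\cp(y,T)$. Since $\cp(T)$ is exactly the set of alternatives of maximum outdegree, $x$ and $y$ must be on the same side of the winner set. I would first dispense with the case $x,y\in\cp(T)$: if both have the common maximum outdegree $D$, there is no Condorcet winner, so the second maximum-outdegree alternative has indegree at least one; reversing one of its in-edges raises its outdegree to $D+1$ and therefore expels the other, giving $\mov_\cp=1$ for both. Hence the counterexample must live among the non-winners.

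The key intuition is that pulling a non-winner $z$ into $\cp$ is cheap precisely when the unique Copeland winner $a$ dominates $z$, because reversing the single edge $(a,z)$ raises $\outdeg(z)$ and lowers $\outdeg(a)$ at once. So I would look for a tournament with a \emph{unique} Copeland winner $a$ of outdegree $D$ and two alternatives $x,y$ of common outdegree $D-2$ with $a\succ x$ but $y\succ a$. Reversing $(a,x)$ gives $\outdeg(x)=\outdeg(a)=D-1$ while every other outdegree is unchanged and at most $D-1$, so $x$ becomes a maximum-outdegree alternative and $\mov_\cp(x,T)=-1$. For $y$, the edge $(a,y)$ is absent, so a single reversal can either raise $\outdeg(y)$ to $D-1$ (leaving $\outdeg(a)=D$) or lower $\outdeg(a)$ to $D-1$ (leaving $\outdeg(y)=D-2$), but never both; in either case $a$ still strictly outdegrees $y$, so no single reversal makes $y$ a winner and $\mov_\cp(y,T)\le-2$. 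Since $-2<-1$, this already establishes the proposition; an explicit two-reversal CRS for $y$ shows moreover that $\mov_\cp(y,T)=-2$.

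It remains to realize such a degree configuration. A short counting argument shows $n\le 5$ is impossible: uniqueness of $a$ forces $3\le D\le n-2$, hence $D=3$ and $n=5$, but then the outdegrees of the two alternatives other than $a,x,y$ would have to sum to $5$ while each stays below $3$. For $n=6$ it works with $D=4$ and degree sequence $(4,3,2,2,2,2)$: let $a$ beat everyone except $y$, and let $y$ be the unique alternative beating $a$ (so within the subtournament on the other five alternatives $x$ has outdegree $2$ and $y$ has outdegree $1$), and realize that five-alternative subtournament in any way consistent with these outdegrees. Writing one such tournament down explicitly and reading off $\mov_\cp(x,T)=-1$ and $\mov_\cp(y,T)=-2$ is then a routine verification.

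The main obstacle is meeting all the constraints simultaneously: $a$ unique of maximum outdegree, the gap $D-\outdeg(y)$ at least $2$ (so that a lone in-edge reversal does not already promote $y$), $y$ dominating the Copeland winner (so the cheap ``reverse the edge from $a$'' move is available to $x$ but not to $y$), and $\outdeg(x)=\outdeg(y)$. It is the interplay of these requirements with the fixed total outdegree $\binom n2$ that pins down $n=6$ as the smallest workable size; everything after that is mechanical.
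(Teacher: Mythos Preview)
Your approach is correct and essentially the same as the paper's: both exhibit a tournament with a unique Copeland winner $a$ and two equal-outdegree non-winners, one dominated by $a$ (so reversing the single edge from $a$ gives $\mov_\cp=-1$) and one dominating $a$ (so no single reversal can close the gap, giving $\mov_\cp\le -2$). Your $n=6$ construction is one smaller than the paper's $n=7$ example, and your preliminary observation that equal-degree Copeland winners always share $\mov_\cp=1$ is a nice extra, but the underlying mechanism is identical.
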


\begin{proof} 
We construct a counterexample with seven alternatives $x,z,y_1,y_2,y_3,y_4$ and $y_5$; see Figure \ref{fig:co-equal-consistency} for an illustration. 
Alternative $z$ is the unique Copeland winner with an outdegree of $5$, alternatives $x,y_3,y_4$ and $y_5$ have outdegree $3$, and $y_1$ and $y_2$ have outdegree $2$. 
We argue that, even though $y_3$ and $x$ have the same outdegree, it holds that $\mov_{\cp}(y_3,T)=-1$ and $\mov_{\cp}(x,T)=-2$. 
The former holds since $y_3$ can be made a Copeland winner by reversing the edge $(y_3,z)$. 
For $x$, however, there does not exist an edge whose reversal simultaneously strengthens $x$ and weakens $z$. 
Hence, we need to reverse at least two edges, e.g., $(x,y_3)$ and $(x,y_4)$, in order to make $x$ a Copeland winner.  
\begin{figure}[!h]
\centering
\scalebox{0.8}{
\begin{tikzpicture}

\node [circle, very thick, draw=black, fill=black!10, inner sep=1.4cm](d) at (0,0){}; 
\node [circle, inner sep=2pt, fill=black](x) at (-3.5,0){};\node [below = 2pt] at (x){\Large$x$};
\node [circle, inner sep=2pt, fill=black](z) at (3.5,0){};\node [below = 2pt] at (z){\Large$z$};
\def \n {5}
\def \m {4}
\def \radius {1.2cm}
\def \margin {8} 
\foreach \s in {1,2,3,4,5}
{ 
      \node[circle,fill,inner sep=2pt](\s) at ({360/5 * (\s - 1)}:\radius) {}; 
      }

\node[above right] at (1){\Large$y_4$};
\node[above right] at (2){\Large$y_3$};
\node[above left] at (3){\Large$y_2$};
\node[below left] at (4){\Large$y_1$};
\node[below right] at (5){\Large$y_5$};

\draw[->,thick] (x) -- (4);
\draw[->,thick] (x) -- (3);
\draw[->,thick] (x) to[bend left=80](z);

\foreach \s/\t in {1/2,1/3,2/3,2/4,3/4,3/5,4/5,4/1,5/1,5/2}{
	\draw[->,thick] (\t) -- (\s);}
\end{tikzpicture}
}
\caption{Illustration of the example in the proof of Proposition~\ref{thm:co-equal-consistency}. Missing edges point from right to left.}
\label{fig:co-equal-consistency}
\end{figure}
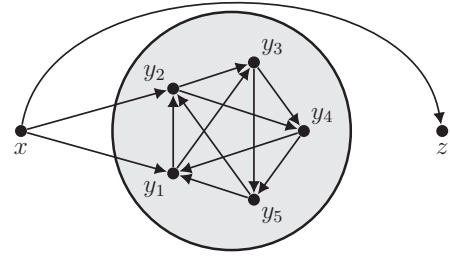
\end{proof}

\begin{proposition}
$\mov_{\cp}$ satisfies degree-consistency.  
\end{proposition}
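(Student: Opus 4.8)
The plan is to show that if $\outdeg(x) > \outdeg(y)$, then $\mov_{\cp}(x,T) \ge \mov_{\cp}(y,T)$, by exhibiting, for any minimum-size constructive or destructive reversal set witnessing the value of $y$, a reversal set for $x$ that is at least as good. Because $\mov_{\cp}$ is cover-consistent and cover-consistency is the ``hard'' structural property already established, one hopes degree-consistency for $\cp$ is more elementary: the Copeland set depends only on the outdegree sequence, so reversing a set $R$ of edges changes outdegrees in a way that can be tracked purely combinatorially. I would first recall that $x \in \cp(T)$ iff $\outdeg(x) = \max_w \outdeg(w)$, and that reversing an edge $(u,v)$ decreases $\outdeg(u)$ by $1$ and increases $\outdeg(v)$ by $1$, leaving all other outdegrees unchanged.

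Next I would split into the natural cases according to the signs of $\mov_{\cp}(x,T)$ and $\mov_{\cp}(y,T)$. The case $x \in \cp(T)$, $y \notin \cp(T)$ is immediate ($\mov$ is positive for $x$, negative for $y$), and since $\outdeg(x) > \outdeg(y)$ we can never have $x \notin \cp(T)$ while $y \in \cp(T)$. So the substantive cases are $x,y \in \cp(T)$ and $x,y \notin \cp(T)$. For the destructive case ($x,y \in \cp(T)$), take a minimum DRS $R_y$ for $y$; after reversing $R_y$, alternative $y$ is no longer a Copeland winner, i.e.\ some alternative has strictly larger outdegree than $y$ in $T^{R_y}$. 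I would construct $R_x$ from $R_y$ by ``swapping the roles'' of $x$ and $y$: since $x$ and $y$ each have an edge to or from every other vertex, and $\outdeg_T(x) \ge \outdeg_T(y) + 1$, one can mimic every reversal in $R_y$ incident to $y$ by a corresponding reversal incident to $x$ (using the edge between $x$ and the third vertex), possibly also toggling the single edge between $x$ and $y$; the key bookkeeping is that this keeps $|R_x| \le |R_y|$ and drops $x$'s Copeland score below the maximum. The constructive case ($x,y \notin \cp(T)$) is symmetric: from a minimum CRS $R_y$ raising $\outdeg(y)$ up to the (possibly also changed) maximum, build $R_x$ raising $\outdeg(x)$ instead, exploiting $\outdeg(x) > \outdeg(y)$ so that $x$ needs ``at most as much help'' as $y$.

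The main obstacle I anticipate is the role of the single edge joining $x$ and $y$ and of the alternatives whose outdegrees are affected by $R_y$ precisely because they lie between $x$ or $y$ and other vertices: one must argue carefully that after transferring reversals from $y$ to $x$, no other alternative's outdegree ends up higher than before relative to $x$, so that the new maximum still strictly exceeds $\outdeg(x)$ in the destructive case (or is still met by $x$ in the constructive case). Rather than a full edge-by-edge transfer, a cleaner route may be to reason directly about outdegree sequences: for the destructive case, observe $\mov_{\cp}(y,T)$ equals the minimum total ``cost'' of redistributing out-degree units so that $y$ is no longer top, and an analogous quantity for $x$; since $x$ starts strictly higher, making $x$ non-top is never cheaper, because any distribution of reversals that pushes some vertex above $\outdeg(x)$ and simultaneously weakens $x$ (or at least does not strengthen it past the new leader) can be re-used, coordinatewise, for $y$ after relabeling $x \leftrightarrow y$ — the extra outdegree unit of $x$ only helps $y$. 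I would phrase the argument via this relabeling-and-monotonicity idea, which sidesteps the messy edge casework while still needing one careful paragraph to verify that the relabeled reversal set is valid and no larger.
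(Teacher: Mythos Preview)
Your case split contains a basic oversight: the case $x,y \in \cp(T)$ with $\outdeg(x) > \outdeg(y)$ is \emph{vacuous}, since all Copeland winners share the same outdegree. Most of your proposal---the edge-by-edge transfer for the destructive case, the relabeling argument in the last paragraph---is therefore spent on a situation that cannot arise. The only nontrivial case is $x,y \notin \cp(T)$, which you treat only in passing (``build $R_x$ raising $\outdeg(x)$ instead'') without spelling out the construction or verifying it.

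The paper's route is much more direct than any swap of reversal sets. For a non-winner $z$ and a Copeland winner $u$, set $\delta_z = \outdeg(u) - \outdeg(z)$; then one shows
\[
-\delta_z \;\le\; \mov_{\cp}(z,T) \;\le\; -(\delta_z - 1).
\]
The left inequality holds because reversing $\delta_z$ incoming edges at $z$ makes $z$ a Copeland winner; the right because each reversal lowers $\outdeg(u)-\outdeg(z)$ by at most $1$ (the single edge $(u,z)$, if present, lowers it by $2$), so at least $\delta_z - 1$ reversals are needed. Applying the two bounds to $v$ and $w$ respectively gives
\[
\mov_{\cp}(v,T) - \mov_{\cp}(w,T) \;\ge\; -\delta_v + (\delta_w - 1) \;=\; \outdeg(v) - \outdeg(w) - 1 \;\ge\; 0.
\]
No relabeling, no edge-by-edge bookkeeping, no worry about the $x$--$y$ edge. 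Your instinct to ``reason directly about outdegree sequences'' was the right one; you just aimed it at the wrong (empty) case.
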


\begin{proof}
Let $T$ be a tournament, $x\in \cp(T)$, $y\in V(T)\setminus \cp(T)$, and $\delta = \outdeg(x) - \outdeg(y)$.
We claim that 
$
-\delta\leq \mov_{\cp}(y,T) \leq -(\delta-1)
$.
The left inequality follows from the fact that if we reverse $\delta$ incoming edges into $y$, then $y$ becomes a Copeland winner.
For the right inequality, note that each edge reversal decreases the difference $\outdeg(x) - \outdeg(y)$ by at most $1$; the only exception is the edge $(x,y)$, in which case the difference decreases by $2$.
Since the difference starts at $\delta$ and must be nonpositive in order for $y$ to become a Copeland winner, at least $\delta-1$ edges must be reversed.

Now, let $v,w$ be arbitrary alternatives in $T$ such that $\outdeg(v) > \outdeg(w)$. 
We have $w\not\in \cp(T)$.
If $v\in\cp(T)$, then $\mov_{\cp}(v,T) > 0 > \mov_{\cp}(w,T)$.
Assume that $v\not\in\cp(T)$.
Considering an alternative $u\in\cp(T)$, we have  \begin{align*}
\mov_{\cp}(v,T) &- \mov_{\cp}(w,T)\\ 
&\geq -(\outdeg(u) - \outdeg(v)) \\
& \qquad + (\outdeg(u) - \outdeg(w) - 1) \\ 
&= \outdeg(v) - \outdeg(w) -1 \\
&\geq 0,
\end{align*} 
meaning that $\mov_{\cp}$ is degree-consistent.
\end{proof}

Next, we consider the top cycle. Recall that, for a given tournament $T$ of size $n$,  $\tc$ coincides with $k$-kings for $k=n-1$. In order to show that $\mov_{\tc}$ satisfies strong degree-consistency, we need two lemmas (one of which is already known). We first introduce some notation. 

Given a tournament $T$ and distinct alternatives $x,y\in V(T)$, an edge set $R\subseteq E(T)$ is said to be a \emph{$k$-length bounded $x$-$y$-cut} if, once $R$ is removed, every path from $x$ to $y$ has length strictly  greater than $k$.
Denote by $\mincut_k(x,y)$ the size of a smallest $k$-length bounded $x$-$y$-cut.
A set $R$ is said to be a \emph{$k$-length bounded $x$-cut} if it is a $k$-length bounded $x$-$y$-cut for some $y\in T$.

\begin{lemma}[Lemma~4 by \citemov]
\label{lem:xcut-kkings-destr}
For any $k\in\{2,3,\dots,n-1\}$, a set $R \subseteq E(T)$ is a minimum DRS for $x$ with respect to $k$-kings if and only if $R$ is a minimum $k$-length bounded $x$-cut in $T$.
\end{lemma}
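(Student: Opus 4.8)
The plan is to prove the equivalence by comparing the two optimal sizes and identifying which sets achieve them. Recall that $x$ is a $k$-king in a tournament $T'$ exactly when every other alternative is reachable from $x$ by a directed path of length at most $k$; hence $R$ is a \drs for $x$ with respect to $k$-kings if and only if $\text{dist}_{T^R}(x,y) > k$ for some $y \in V(T)$, where $\text{dist}$ denotes shortest-path length. Write $\mincut_k(x) := \min_{y \neq x}\mincut_k(x,y)$ for the size of a smallest $k$-length bounded $x$-cut. I will establish (a) every \drs is a $k$-length bounded $x$-cut, so the minimum \drs size is at least $\mincut_k(x)$; and (b) every minimum $k$-length bounded $x$-cut is a \drs, so the minimum \drs size is at most $\mincut_k(x)$. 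Together these force the two quantities to coincide, and then both halves of the ``iff'' follow: a minimum \drs is a $k$-length bounded $x$-cut of the common size, hence a minimum one, and conversely.

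The first direction is the easy one. Let $R$ be a \drs for $x$ with witness $y$, so $\text{dist}_{T^R}(x,y) > k$. Since $E(T)\setminus R \subseteq E(T^R)$, any directed path from $x$ to $y$ in $T\setminus R$ is also a path in $T^R$, and therefore has length strictly greater than $k$. Thus $R$ is a $k$-length bounded $x$-$y$-cut in $T$, so $|R| \ge \mincut_k(x,y) \ge \mincut_k(x)$.

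For the converse direction I would take a minimum $k$-length bounded $x$-cut $R$ --- which is then a minimum $x$-$y$-cut for its witness $y$ --- and show $\text{dist}_{T^R}(x,y) > k$, so that $R$ is a \drs. The only danger is a short path from $x$ to $y$ in $T^R$ using at least one reversed edge $\overline e$ with $e \in R$, and this is what must be ruled out. Two structural facts are available for a minimum (hence inclusion-minimal) cut: (i) $R$ contains no edge entering $x$ and no edge leaving $y$, since such edges lie on no path from $x$ to $y$ and could be removed from $R$; and (ii) every $e=(a,b)\in R$ lies on a path from $x$ to $y$ of length at most $k$ whose other edges avoid $R$, so $\text{dist}_{T\setminus R}(x,a) + 1 + \text{dist}_{T\setminus R}(b,y) \le k$. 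Given a shortest path $P$ from $x$ to $y$ in $T^R$ that traverses a reversed edge $\overline e=(b,a)$ with $e=(a,b)\in R$, one splices the $x$-to-$b$ part of $P$ with a shortest $b$-to-$y$ path in $T\setminus R$, and the $a$-to-$y$ part of $P$ with a shortest $x$-to-$a$ path in $T\setminus R$; since $P$ is shortest in $T^R$, comparing lengths and combining with (ii) bounds $|P|$ and, pushed further, lets one exhibit a strictly smaller $k$-length bounded $x$-cut, contradicting the cardinality-minimality of $R$.

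I expect this last step to be the main obstacle. Inclusion-minimality alone does not suffice: a reversed edge $(b,a)$ can genuinely shortcut a path from $x$ to $y$ in $T^R$ when $b$ is close to $x$ and $a$ is close to $y$ in $T\setminus R$, even though $e=(a,b)$ itself sits on a long detour path of length at most $k$. Excluding this requires exploiting that $R$ has the smallest possible cardinality --- in effect a global exchange argument converting any such shortcut into a smaller cut --- and the delicate part is getting the bookkeeping of path lengths and edge sets exactly right. The remaining details (the degenerate cases where a shortest offending path has length $1$ or $2$, handled as in small $k$, and the passage back to the witness $y$) are routine.
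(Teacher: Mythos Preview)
This lemma is not proved in the present paper; it is quoted as Lemma~4 of the authors' earlier paper (Brill, Schmidt-Kraepelin, and Suksompong, 2020) and used here as a black box. So there is no proof in this paper to compare your attempt against.

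On the substance of your attempt: the overall plan and your observation (ii) --- that every edge $(a,b)$ of a minimum cut lies on an $x$--$y$ path of length at most $k$ whose remaining edges avoid $R$ --- are exactly right, but the splicing you describe does not close the gap. The $x$-to-$b$ and $a$-to-$y$ portions of $P$ lie in $T^R$, not in $T\setminus R$; combining them with the $T\setminus R$ paths from (ii) and invoking that $P$ is shortest in $T^R$ only yields $m+1\le p+s\le k-1$, hence $|P|\le k-2$, which is no contradiction, and there is no evident way to convert this into a strictly smaller cut. What does work is to splice the \emph{segments of $P$ between consecutive reversed edges} --- those segments genuinely live in $T\setminus R$ --- with the prefixes and suffixes supplied by (ii). If $P$ uses $t\ge 1$ reversed edges this produces $t{+}1$ walks from $x$ to $y$ entirely in $T\setminus R$ whose total length is at most $(m-t)+\sum_{j}(p_j+s_j)\le (m-t)+t(k-1)$; were each walk longer than $k$, the total would be at least $(t{+}1)(k{+}1)$, forcing $3t+1\le 0$. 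Hence one of these walks has length at most $k$, contradicting that $R$ is an $x$--$y$ cut at all. Note the contradiction is with $R$ being a cut, not with its cardinality-minimality --- minimality enters only through (ii).
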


The next lemma establishes a surprisingly succinct relation between the sizes of the minimum cuts with respect to a pair of alternatives, and can be shown using the max-flow min-cut theorem.\footnote{The lemma also follows from a more general statement by \citet[Prop.~16]{BubboloniGo18}. We thank Daniela Bubboloni and Michele Gori for pointing this out to us.}
Define $\mincut(x,y) = \mincut_{n-1}(x,y)$.

\begin{restatable}{lemma}{lemtournamentssizexycuts}\label{lem:tournaments-size-xycuts}
Let $T$ be a tournament and $x,y \in V(T)$. Then, \[\mincut(x,y) - \mincut(y,x) = \outdeg(x)-\outdeg(y).\]
\end{restatable}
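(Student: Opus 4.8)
The plan is to massage $\mincut(x,y)$ into an ordinary minimum edge cut, express that cut purely in terms of outdegrees using the tournament structure, and then notice that the resulting formula for $\outdeg(x)-\mincut(x,y)$ is symmetric in $x$ and $y$. First, I would observe that the length restriction is vacuous when $k=n-1$: a directed path visits at most $n$ vertices, so it has length at most $n-1$, and therefore an $(n-1)$-length bounded $x$-$y$-cut is simply a set $R\subseteq E(T)$ whose removal destroys every directed path from $x$ to $y$. Hence $\mincut(x,y)$ is the ordinary minimum $x$-$y$ edge cut, and the max-flow min-cut theorem gives
\[
\mincut(x,y) \;=\; \min\bigl\{\, e_T(S,\,V(T)\setminus S) \;:\; x\in S,\ y\notin S \,\bigr\},
\]
where $e_T(A,B)$ denotes the number of edges of $T$ directed from $A$ to $B$; the minimum is attained by taking $S$ to be the set of vertices reachable from $x$ in the residual graph of a maximum flow.

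Next, I would exploit that the edges inside any vertex set $S$ form a tournament on $S$ and hence number exactly $\binom{|S|}{2}$: since every out-edge of a vertex in $S$ either stays inside $S$ or leaves $S$, we get $e_T(S,V(T)\setminus S)=\sum_{v\in S}\outdeg(v)-\binom{|S|}{2}$. Substituting this and bringing $\outdeg(x)$ inside the optimization yields
\[
\outdeg(x)-\mincut(x,y) \;=\; \max\Bigl\{\, \tbinom{|S|}{2}-\!\!\sum_{v\in S\setminus\{x\}}\!\!\outdeg(v) \;:\; x\in S,\ y\notin S \,\Bigr\}.
\]
Writing $S=\{x\}\cup W$ with $W\subseteq V(T)\setminus\{x,y\}$ and using $\binom{|W|+1}{2}=\binom{|W|}{2}+|W|$, the right-hand side simplifies to
\[
\max\Bigl\{\, |W|+\tbinom{|W|}{2}-\!\sum_{w\in W}\!\outdeg(w) \;:\; W\subseteq V(T)\setminus\{x,y\} \,\Bigr\}.
\]

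In this last expression the alternatives $x$ and $y$ enter only through the constraint $W\cap\{x,y\}=\emptyset$, which is symmetric in $x$ and $y$. Consequently $\outdeg(x)-\mincut(x,y)=\outdeg(y)-\mincut(y,x)$, and rearranging gives the claimed identity $\mincut(x,y)-\mincut(y,x)=\outdeg(x)-\outdeg(y)$. I do not anticipate a real obstacle here: the only points that need a word of justification are that the length bound is vacuous (so that the problem is a genuine min-cut) and that a minimum cut may be taken to be the out-boundary of a vertex set containing $x$ but not $y$. The one non-mechanical idea is to realize that $\outdeg(x)-\mincut(x,y)$ is the right symmetric quantity to track, rather than attempting to compare the optimal cuts for the two directions directly.
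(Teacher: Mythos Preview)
Your argument is correct, and it is genuinely different from the paper's proof. Both proofs invoke the max-flow min-cut theorem, but the paper works on the \emph{flow} side: it partitions $V(T)\setminus\{x,y\}$ into four sets, shows that a maximum system of edge-disjoint $x$-$y$-paths can be assumed to contain all length-$1$ and length-$2$ paths $\mathcal{P}_x$, does the same for $y$-$x$-paths $\mathcal{Q}_y$, and then proves $|\mathcal{P}\setminus\mathcal{P}_x|=|\mathcal{Q}\setminus\mathcal{Q}_y|$ via a path-mirroring construction. You instead work on the \emph{cut} side, using the tournament identity $e_T(S,V(T)\setminus S)=\sum_{v\in S}\outdeg(v)-\binom{|S|}{2}$ and then extracting the symmetric quantity $\outdeg(x)-\mincut(x,y)$.

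Your route is shorter and more conceptual: once one sees that $e_T(S,\overline S)$ depends only on $|S|$ and the outdegrees in $S$, the symmetry falls out without any construction or case analysis. The paper's approach, by contrast, makes the bijection between the two flow values explicit, which is a bit more work but has the virtue of being constructive (one can actually exhibit the matching edge-disjoint path systems). Both are valid; yours is the cleaner proof of the bare identity.
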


\begin{proof}

For ease of presentation, we divide the alternatives in $V(T)\backslash\{x,y\}$ into four sets: 
\begin{itemize}
\item $D_x$ consists of the alternatives dominated by $x$ but not $y$;
\item $D_y$ consists of the alternatives dominated by $y$ but not $x$;
\item $D_{xy}$ consists of the alternatives dominated by both $x$ and~$y$;
\item $D_0$ consists of the alternatives dominated by neither $x$ nor~$y$.
\end{itemize}
See Figure \ref{fig:tournaments-mincuts} for an illustration. 
\begin{figure}[!t]
\centering
\scalebox{0.8}{
\begin{tikzpicture}
\draw (0,0) circle [radius = 0.7];
\draw (-2,0) circle [radius = 0.7];
\draw (2,0) circle [radius = 0.7];
\node at (0,0) {\Large$D_{xy}$};
\node at (-2,0) {\Large$D_{x}$};
\node at (2,0) {\Large$D_{y}$};
\node[circle,fill,inner sep=2pt,above] at (-1,2)(x){}; 
\node[left=2pt] at (x){\Large$x$};
\node[circle,fill,inner sep=2pt,above] at (1,2)(y){}; 
\node[right=2pt] at (y){\Large$y$};
\draw (0,4.2) circle [radius = 0.7];
\node at (0,4.2) {\Large$D_{0}$};
\draw[->, thick] (-0.45,3.65) -- (x);
\draw[->, thick] (0.45,3.65) -- (y);
\draw[->, thick] (x) -- (-0.3,0.7);
\draw[->, thick] (y) -- (0.3,0.7);
\draw[->, thick] (y) -- (1.7,0.7);
\draw[->, thick] (1.5,0.7) -- (-0.8,1.9);
\draw[->, ultra thick,red!70!black,dashed] (-1.5,0.7) -- (0.8,1.9);
\draw[->,ultra thick,red!70!black,dashed] (x) -- (-1.7,0.7);
\end{tikzpicture}
}

\caption{Illustration of the proof of \Cref{lem:tournaments-size-xycuts}.}
\label{fig:tournaments-mincuts}
\end{figure}

Call a path from $x$ to $y$ an \emph{$x$-$y$-path}.
From the max-flow min-cut theorem \citep{FoFu56a}, the size of a minimum cut from $x$ to $y$ equals the maximum number of edge-disjoint $x$-$y$-paths (and analogously for a minimum cut from $y$ to $x$). 
Making use of this fact, we will argue about maximum sets of edge-disjoint paths instead of minimum cuts.
Let $\mathcal{P}_x$ be the set of all paths of length one or two from $x$ to $y$. 
Similarly, let $\mathcal{Q}_y$ be the set of all paths of length one or two from $y$ to $x$.
\begin{claim}
There exists a maximum set of edge-disjoint $x$-$y$ paths, $\mathcal{P}$, such that $\mathcal{P}_x \subseteq \mathcal{P}$, and a maximum set of edge-disjoint $y$-$x$ paths, $\mathcal{Q}$, such that $\mathcal{Q}_y \subseteq \mathcal{Q}$. 
\end{claim}

\begin{proof}[Proof of Claim]
By symmetry, it suffices to prove the former statement.
Let $\mathcal{P}$ be a maximum set of edge-disjoint $x$-$y$ paths.
We show how we can alter $\mathcal{P}$ in an iterative manner so that $\mathcal{P}_x \subseteq \mathcal{P}$ holds while $\mathcal{P}$ remains a maximum set of edge-disjoint $x$-$y$ paths. 

If $(x,y) \in E(T)$, then also $\{x\rightarrow y\} \in \mathcal{P}$, since otherwise $\mathcal{P}$ cannot be maximum. 
Next, consider some $z \in D_x$, and let $F = \{(x,z),(z,y)\}$.
If there exists exactly one path in $\mathcal{P}$ which contains an edge from $F$, we replace this path by the path $x\rightarrow z\rightarrow y$. 
Else, if there exist two paths $P_1$ and $P_2$ which contain an edge from $F$, then we can assume without loss of generality that $P_1$ starts with the edge $(x,z)$ and $P_2$ ends with the edge $(z,y)$. 
In this case, we replace $P_1$ by $x\rightarrow z\rightarrow y$, and construct $P_2$ by joining the remaining parts of the two paths to go from $x$ to $y$ through $z$, possibly omitting any cycles that arise. 
The newly created paths are edge-disjoint with respect to all other paths in $\mathcal{P}$, and the number of paths in $\mathcal{P}$ remains unchanged. 
At the end of this process, we have $\mathcal{P}_x \subseteq \mathcal{P}$.
\end{proof}

Using the Claim, we let $\mathcal{P}$ (resp., $\mathcal{Q}$) be a maximum set of edge-disjoint $x$-$y$-paths (resp., $y$-$x$-paths) such that $\mathcal{P}_x \subseteq \mathcal{P}$ (resp., $\mathcal{Q}_y \subseteq \mathcal{Q}$) holds. 
We next show that $|\mathcal{P} \setminus \mathcal{P}_x| = |\mathcal{Q} \setminus \mathcal{Q}_y|$. 
Suppose that this is not the case, and assume without loss of generality that $|\mathcal{P} \setminus \mathcal{P}_x| > |\mathcal{Q} \setminus \mathcal{Q}_y|$. 
From $\mathcal{P} \setminus \mathcal{P}_x$, we will construct a set of edge-disjoint $y$-$x$ paths, $\mathcal{Q}'$, which is also edge-disjoint to all paths in $\mathcal{Q}_y$ and is of size $|\mathcal{Q}'|=|\mathcal{P} \setminus \mathcal{P}_x|$, so that $\mathcal{Q}_y\cup \mathcal{Q}'$ contradicts the maximality of $\mathcal{Q}$. 

To this end, let $P \in \mathcal{P}\setminus \mathcal{P}_x$. Note that $P$ is of the form $x\rightarrow v_1\rightarrow\dots\rightarrow v_\ell \rightarrow y$ for some $2\leq \ell\leq n-1$. 
Also, $v_1 \in D_{xy}$, since otherwise $P$ would intersect with a path in $\mathcal{P}_x$. 
For the same reason, $v_\ell \in D_0$. 
Hence, $(y,v_1),(v_\ell,x) \in E(T)$ and therefore $y\rightarrow v_1\rightarrow\dots\rightarrow v_\ell \rightarrow x$, where the part between $v_1$ and $v_\ell$ is the same as in $P$, is a $y$-$x$-path in $T$. 
We create $\mathcal{Q}'$ by using this mirroring argument for all paths in $\mathcal{P}\setminus\mathcal{P}_x$. 
By construction, the paths in $\mathcal{Q}'$ are edge-disjoint with respect to the paths in $\mathcal{Q}_y$, so $\mathcal{Q}'$ has the desired property.
Hence, $|\mathcal{P} \setminus \mathcal{P}_x| = |\mathcal{Q} \setminus \mathcal{Q}_y|$.

Finally, we have 
\begin{align*}
\mincut(x,y) &- \mincut(y,x) \\
&= |\mathcal{P}| - |\mathcal{Q}| \\
&= |\mathcal{P}_x| + |\mathcal{P}\setminus \mathcal{P}_x|  - |\mathcal{Q}\setminus \mathcal{Q}_y| - |\mathcal{Q}_y| \\
&= |\mathcal{P}_x| - |\mathcal{Q}_y|\\
&= \outdeg(x) - \outdeg(y),
\end{align*}
as desired.
\end{proof}

\begin{theorem}\label{thm:tc-degree}
$\mov_{\tc}$ satisfies strong degree-consistency. 
\end{theorem}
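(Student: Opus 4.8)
The plan is to determine $\mov_{\tc}$ exactly in the two regimes $x\in\tc(T)$ and $x\notin\tc(T)$, and then conclude by a short vertex-exchange argument; this handles degree-consistency and equal-degree-consistency simultaneously. First, for a non-winner $x\notin\tc(T)$, I would show $\mov_{\tc}(x,T)=-1$. As every constructive reversal set has size at least one, it suffices to exhibit one of size one. Write $U=\tc(T)$ and $B=V(T)\setminus U$, and recall that $U$ is strongly connected, that $U\succ B$ (so no edge goes from $B$ to $U$), and that $x\in B$. Pick any $u_0\in U$ and reverse $(u_0,x)$, obtaining $T'$. In $T'$, $x$ dominates $u_0$; meanwhile $u_0$ still dominates all of $B\setminus\{x\}$, and for each $v\in U$ a $u_0$--$v$ path inside $U$ exists in $T$ (by strong connectivity) and avoids $x$, hence survives in $T'$. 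Consequently $x$ reaches every vertex of $T'$, i.e.\ $x\in\tc(T')$.

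Next, for a winner $x\in\tc(T)$, I would establish the closed form
\[
\mov_{\tc}(x,T)\;=\;\min_{x\in S\subsetneq V(T)}\Big(\textstyle\sum_{u\in S}\outdeg(u)-\binom{|S|}{2}\Big).
\]
Since $\tc$ is the set of $(n-1)$-kings and no path in an $n$-vertex tournament has length exceeding $n-1$, an $(n-1)$-length-bounded $x$-cut is just an ordinary $x$-cut, so Lemma~\ref{lem:xcut-kkings-destr} yields $\mov_{\tc}(x,T)=\min_{w\neq x}\mincut(x,w)$. By the max-flow min-cut theorem, $\mincut(x,w)$ equals the minimum number of edges leaving a set $S$ with $x\in S\not\ni w$; minimizing over $w$ turns this into a minimum over all proper $S\ni x$. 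The tournament identity $\sum_{u\in S}\outdeg(u)=\binom{|S|}{2}+(\text{number of edges leaving }S)$ then gives the displayed formula. (Lemma~\ref{lem:tournaments-size-xycuts} offers an alternative handle on the winner case by relating $\mincut(x,y)$ and $\mincut(y,x)$, but the formula above already suffices.)

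Now assume $\outdeg(x)\ge\outdeg(y)$. If $x\in\tc(T)$ and $y\notin\tc(T)$, then $\mov_{\tc}(x,T)>0>-1=\mov_{\tc}(y,T)$. The case $x\notin\tc(T)$, $y\in\tc(T)$ cannot arise under the hypothesis, since $y\in U$ forces $\outdeg(y)\ge|B|$ whereas $x\in B$ forces $\outdeg(x)\le|B|-1$. If $x,y\notin\tc(T)$, both margins equal $-1$. Finally, for $x,y\in\tc(T)$, let $S$ attain the minimum in the closed form for $x$. If $y\in S$, then $S$ is admissible in the formula for $y$, so $\mov_{\tc}(y,T)\le\sum_{u\in S}\outdeg(u)-\binom{|S|}{2}=\mov_{\tc}(x,T)$. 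If $y\notin S$, put $S'=(S\setminus\{x\})\cup\{y\}$; then $y\in S'\subsetneq V(T)$, $|S'|=|S|$, and $\sum_{u\in S'}\outdeg(u)\le\sum_{u\in S}\outdeg(u)$ because $\outdeg(y)\le\outdeg(x)$, so again $\mov_{\tc}(y,T)\le\mov_{\tc}(x,T)$.

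The main obstacle is the winner formula: one must pass cleanly from Lemma~\ref{lem:xcut-kkings-destr} through Menger to the ``cheapest separating set'' description and then to the degree sum, after which the exchange argument is immediate. The delicate point in the non-winner case is that a single reversal of an edge from the top cycle into $x$ really does pull $x$ into the new top cycle; this hinges precisely on the absence of edges from $B$ back into $U$, which keeps the internal connectivity of $U$ intact.
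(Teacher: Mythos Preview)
Your proof is correct and takes a genuinely different route from the paper's. Both arguments share the easy cases (the impossibility of $x\notin\tc(T)$, $y\in\tc(T)$; the equality $\mov_{\tc}=-1$ for non-winners, which the paper simply cites whereas you reprove it), and both invoke Lemma~\ref{lem:xcut-kkings-destr} to reduce the winner case to a min-cut computation. The divergence is in how the winner--winner case is finished. The paper argues pointwise: take a minimum $x$--$t$ cut $R$; if $R$ already disconnects $y$ from $t$ we are done, otherwise $R$ must be an $x$--$y$ cut, and then Lemma~\ref{lem:tournaments-size-xycuts} (proved via a path-mirroring max-flow argument) gives $\mincut(y,x)\le\mincut(x,y)$. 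You instead pass through max-flow min-cut once to obtain the global closed form $\mov_{\tc}(x,T)=\min_{x\in S\subsetneq V}\big(\sum_{u\in S}\outdeg(u)-\binom{|S|}{2}\big)$ and then use a one-line set-exchange: replace $x$ by $y$ in an optimal $S$ (or keep $S$ if it already contains $y$). Your argument sidesteps Lemma~\ref{lem:tournaments-size-xycuts} entirely and yields, as a by-product, an explicit formula for $\mov_{\tc}$ on winners that makes the degree-monotonicity transparent; the paper's route, on the other hand, isolates a structural identity $\mincut(x,y)-\mincut(y,x)=\outdeg(x)-\outdeg(y)$ that may be of independent use.
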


\begin{proof}
Fix a tournament $T$ and let $x,y \in V(T)$ with $\outdeg(x) \geq \outdeg(y)$. First, we show that $x,y \in \tc(T)$ constitutes the only non-trivial case. 
Since all alternatives in $\tc(T)$ dominate all alternatives outside, it cannot be that $x\not\in \tc(T)$ and $y\in\tc(T)$.
If $x,y \not\in \tc(T)$, \citemov showed that $\mov_{\tc}(x,T)=-1=\mov_{\tc}(y,T)$.
If $x\in \tc(T)$ and $y\not\in \tc(T)$, then $\mov_{\tc}(x) > 0 > \mov_{\tc}(y)$.

Assume now that $x,y \in \tc(T)$. Let $R$ be a minimum DRS for $x$. 
By Lemma \ref{lem:xcut-kkings-destr} with $k=n-1$, we know that $R$ is a minimum $x$-$t$-cut for some $t \in V(T)$. 
We consider two cases. 
First, assume that $R$ is also a $y$-$t$-cut. 
Then, a minimum $y$-$t$-cut $R'\subseteq E(T)$ satisfies $|R'| \leq |R|$, proving that $\mov_{\tc}(x,T)=|R| \geq |R'|\geq \mov_{\tc}(y,T)$. 
For the second case, assume that $R$ is not a $y$-$t$-cut. 
Then, $R$ needs to be an $x$-$y$-cut (since otherwise $x$ can reach~$t$ via~$y$), and therefore it must be a minimum $x$-$y$-cut.
By Lemma~\ref{lem:tournaments-size-xycuts}, since $\outdeg(x)\ge\outdeg(y)$, for a minimum $y$-$x$-cut $R'$ it holds that $|R|\geq|R'|$.
Hence $\mov_{\tc}(x,T) = |R| \geq |R'| \geq \mov_{\tc}(y,T)$.
\end{proof}
On the other hand, we show in the next three propositions that $\uc$, $\ba$, and $k$-kings do not satisfy any of the degree-consistency properties. 

\begin{proposition}
$\mov_{\uc}$ and $\mov_{\ba}$ do not satisfy equal-degree-consistency.  \label{thm:uc-equal-consistency}
\end{proposition}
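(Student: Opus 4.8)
The plan is to prove the proposition by exhibiting an explicit counterexample, and in fact a single small tournament will serve for both $\uc$ and $\ba$: since $\ba(T)\subseteq\uc(T)$ for every tournament~$T$ (\Cref{sec:prelims}), it suffices to construct a tournament $T$ and two alternatives $x,y$ with $\outdeg(x)=\outdeg(y)$ such that $x$ is covered by a third alternative while $y$ lies in $\ba(T)$. The first property gives $x\notin\uc(T)$ and hence $x\notin\ba(T)$, so $\mov_\uc(x,T)<0$ and $\mov_\ba(x,T)<0$; the second gives $y\in\ba(T)\subseteq\uc(T)$, so $\mov_\uc(y,T)>0$ and $\mov_\ba(y,T)>0$. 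Thus the $\mov$ values disagree even though the outdegrees agree, which is exactly a violation of equal-degree-consistency for both solutions.

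First I would fix the tournament. Take the rotational tournament $R$ on $\{1,2,3,4,5\}$ in which $i$ dominates $i+1$ and $i+2$ modulo $5$, and let $T$ be obtained from $R$ by reversing the single edge between $3$ and $5$, so that $5\succ 3$ in $T$. A direct computation gives $\outdeg(1)=\outdeg(2)=\outdeg(4)=2$, $\outdeg(3)=1$, and $\outdeg(5)=3$, with $D(1)=\{2,3\}$, $D(2)=\{3,4\}$, and $D(5)=\{1,2,3\}$. I would then take $x=1$ and $y=2$, both of outdegree $2$.

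Next come three membership checks. (i) Since $D(1)=\{2,3\}\subseteq\{1,2,3\}=D(5)$ and $5\succ 1$, alternative $5$ covers $1$, so $1\notin\uc(T)$, and because $\ba(T)\subseteq\uc(T)$ also $1\notin\ba(T)$; hence $\mov_\uc(1,T)<0$ and $\mov_\ba(1,T)<0$. (ii) The dominators of $2$ are exactly $1$ and $5$, and neither $D(1)=\{2,3\}$ nor $D(5)=\{1,2,3\}$ contains $D(2)=\{3,4\}$ (both miss the element $4$), so no alternative covers $2$; hence $2\in\uc(T)$ and $\mov_\uc(2,T)>0$. (iii) The subtournament on $\{2,3,4\}$ is transitive with $2$ as Condorcet winner ($2\succ 3\succ 4$ and $2\succ 4$), and it cannot be extended, since neither $1$ nor $5$ dominates $4$ and therefore neither dominates all of $\{2,3,4\}$; hence $2\in\ba(T)$ and $\mov_\ba(2,T)>0$. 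Combining (i)--(iii) with $\outdeg(1)=\outdeg(2)$ establishes the claim for both $\mov_\uc$ and $\mov_\ba$.

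I do not expect any genuine obstacle here, as all the checks are finite and mechanical. The only point requiring a little care in designing the example is keeping $x$ and $y$ at exactly the same outdegree while still making $x$ covered, which is why the reversed edge is chosen incident to neither $x$ nor $y$: it only raises the outdegree of the would-be coverer $5$ of $x$, without disturbing $\outdeg(x)$ or $\outdeg(y)$. If one instead wanted a sharper counterexample in which $x$ and $y$ both belong to the choice set (so that equal outdegree fails to force equal $\mov$ even among winners), a larger tournament would be needed, but this is not required for the statement as phrased.
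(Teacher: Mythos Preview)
Your counterexample is correct: in the five-vertex tournament you describe, alternatives $1$ and $2$ both have outdegree $2$, yet $1$ is covered (hence outside both $\uc$ and $\ba$) while $2$ lies in $\ba\subseteq\uc$, so their $\mov$ values have opposite signs. All the membership checks are accurate.

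Your approach differs from the paper's, however. The paper uses a seven-vertex \emph{regular} tournament in which every alternative has outdegree $3$ and every alternative belongs to $\uc$ (and to $\ba$); it then computes that two such winners have distinct positive $\mov$ values ($2$ versus $1$). Your argument is shorter and avoids any $\mov$ computation beyond sign, because it exploits the simpler phenomenon that $\uc$ and $\ba$ can already separate two alternatives of equal outdegree into winner and non-winner. The paper's construction, by contrast, shows something strictly stronger: equal-degree-consistency fails even when restricted to pairs of winners. You anticipated this distinction in your final remark, and you are right that the proposition as stated does not demand the stronger version; your proof suffices.
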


\begin{proof}
\begin{figure}[!t]
\centering
\scalebox{0.8}{
\begin{tikzpicture}

\node [circle, very thick, draw=black, fill=black!10, inner sep=1cm](d) at (0,0){}; 
\node [circle, very thick, draw=black, fill=black!10, inner sep=1cm](d2) at (5,0){};
\node [circle, inner sep=2pt, fill=black](g) at (2.5,-2){};\node [above = 2pt] at (g){\Large$g$};
\def \n {5}
\def \m {4}
\def \radius {.8cm}
\def \margin {8} 
\foreach \s in {1,...,3}
{ 
      \node[circle,fill,inner sep=2pt](\s) at ({360/3 * (\s - 1)}:\radius) {}; 
      }

\foreach \s in {7,...,9}
{ 
      \node[circle,fill,inner sep=2pt,xshift=5cm](\s) at ({360/3 * (\s - 1)+180}:\radius) {};
}

\node[above right] at (1){\Large$a$};
\node[above right] at (2){\Large$b$};
\node[below right] at (3){\Large$c$};
\node[above left] at (7){\Large$d$};
\node[below left] at (8){\Large$f$};
\node[above=.1cm] at (9){\Large$e$};

\foreach \s/\t in {2/1,1/3,3/2,
			7/8,9/7,8/9}{
	\draw[<-,thick] (\s) -- (\t);}
	
\draw[->, thick] (1) -- (7);
\draw[->, thick] (2) -- (9);
\draw[->, thick] (3) -- (8);

\draw[->, thick] (1,-1) to[bend right=10] (g);
\draw[<-, thick] (4,-1) to[bend left=10] (g);

\end{tikzpicture}
}

\caption{Illustration of the example in the proof of Proposition~\ref{thm:uc-equal-consistency}. Missing edges point from right to left.}
\label{fig:uc-equal-consistency}
\end{figure}
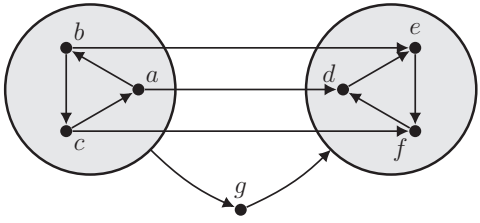
We give a counterexample for both $\mov_{\uc}$ and $\mov_{\ba}$ at once. 
The example tournament $T$ contains seven alternatives, $a,b,c,d,e,f,g$, which all have the same outdegree. 
See Figure \ref{fig:uc-equal-consistency} for an illustration.\footnote{This tournament has been previously considered by \citet[Fig.~7]{BrandtBrSe18}.} 

We start by showing that this is a counterexample for $\mov_{\uc}$. 
Note that all alternatives are in the uncovered set of this tournament. 
We claim that $\mov_{\uc}(g,T) = 2$ while $\mov_{\uc}(d,T) = 1$.
The former claim follows from \Cref{lem:xcut-kkings-destr} and the observation that $g$ has exactly two edge-disjoint paths of length at most two to every other alternative.
For the latter, note that $d\rightarrow b$ is the only path of length at most two from $d$ to $b$.

Next, we show that the counterexample holds for $\mov_{\ba}$ as well. To this end, we show that $g,d \in \ba(T)$ and  $\mov_{\ba}(g,T)>1$. Since $BA(T) \subseteq UC(T)$ and therefore $\mov_{\ba}(d,T)\leq \mov_{\uc}(d,T)=1$, this suffices to proof the claim. 
Since the transitive subtournament consisting of $g,d,e$ cannot be extended, $g \in \ba(T)$.
Likewise, $d\in \ba(T)$ because the subtournament consisting of $d,b,e$ cannot be extended.

It remains to argue that $\mov_{\ba}(g,T) > 1$. Assume for contradiction that $\mov_{\ba}(g,T) = 1$, i.e., there exists an edge whose reversal takes $g$ out of the Banks set.
Suppose first that $g$ still dominates all of $d,e,f$ after the reversal.
Then, since each of $a,b,c$ is dominated by two of $d,e,f$ in $T$, at least one of the three transitive subtournaments with alternative set $\{g,d,e\},\{g,e,f\},\{g,f,d\}$ cannot be extended.
The remaining case is that $g$ no longer dominates all of $d,e,f$, meaning that an edge $(g,x)$ is reversed for some $x\in\{d,e,f\}$.
Assume without loss of generality that $x=d$.
In this case, the transitive subtournament formed by $g,e,f$ still cannot be extended, so $\mov_{\ba}(g,T) > 1$.
This concludes the proof. 
\end{proof}

\begin{proposition}
$\mov_{\uc}$ and $\mov_{\ba}$ do not satisfy degree-consistency. \label{thm:uc-degree-consistency}
\end{proposition}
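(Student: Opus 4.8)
The plan is to exploit the fact that degree-consistency can fail ``for free'' whenever the alternative with the larger outdegree lies \emph{outside} the winner set while the alternative with the smaller outdegree lies \emph{inside} it: in that case the two $\mov$ values have opposite signs, so $\mov_S$ of the higher-outdegree alternative is negative while $\mov_S$ of the lower-outdegree one is positive, directly contradicting degree-consistency. Concretely, it suffices to construct a single tournament $T$ containing an alternative $x$ that is covered (hence $x \notin \uc(T)$, and therefore also $x \notin \ba(T)$ since $\ba \subseteq \uc$) together with an alternative $y$ satisfying $\outdeg(y) < \outdeg(x)$ that is nevertheless uncovered \emph{and} a Banks winner. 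This forces $n \ge 5$; for smaller $n$ a covered alternative cannot have strictly larger outdegree than an uncovered one (an outdegree-$2$ covered alternative would already require a Condorcet winner, which would then be the unique uncovered alternative).

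For the construction I would take $V(T) = \{y, w, p_1, p_2, p_3\}$, where $y$ has outdegree $1$ with $D(y) = \{w\}$, the three dominators $p_1, p_2, p_3$ of $y$ form a transitive triangle with $p_1 \succ p_2 \succ p_3$ and $p_1 \succ p_3$, and $w$ dominates all of $p_1, p_2, p_3$. Then $\outdeg(p_2) = |\{p_3, y\}| = 2 > 1 = \outdeg(y)$, so setting $x := p_2$ gives the desired outdegree relation. The three verifications are: (i) $p_1$ covers $p_2$, since $D(p_2) = \{p_3, y\} \subseteq \{p_2, p_3, y\} = D(p_1)$ and $p_1 \succ p_2$, hence $p_2 \notin \uc(T)$ and thus $p_2 \notin \ba(T)$; (ii) $y \in \uc(T)$, because any alternative covering $y$ would have to dominate $y$, that is, lie in $\{p_1, p_2, p_3\}$, and would also have to dominate $w$, whereas $w$ dominates each $p_i$; and (iii) $y \in \ba(T)$, because the transitive subtournament on $\{y, w\}$ (with $y$ as its Condorcet winner) is inclusion-maximal, as no alternative dominates both $y$ and $w$. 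All three are immediate from the incidence structure.

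Putting this together, $p_2 \notin \uc(T)$ gives $\mov_{\uc}(p_2, T) < 0$ while $y \in \uc(T)$ gives $\mov_{\uc}(y, T) > 0$, and likewise $\mov_{\ba}(p_2, T) < 0 < \mov_{\ba}(y, T)$; since $\outdeg(p_2) > \outdeg(y)$, this witnesses that neither $\mov_{\uc}$ nor $\mov_{\ba}$ is degree-consistent. I do not expect a genuine obstacle here: the only points requiring care are the initial realization that one should \emph{not} try to keep both alternatives inside the winner set (which would make the argument considerably harder), and the routine check that a covered alternative can have strictly larger outdegree than an uncovered Banks winner once $n \ge 5$; both the covering claim and the Banks-maximality claim for the proposed tournament follow directly from its edge set.
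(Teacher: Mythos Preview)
Your construction is correct: the five-alternative tournament does exactly what you claim, and the verifications (i)--(iii) are all sound. With $\outdeg(p_2)=2>1=\outdeg(y)$ yet $\mov_S(p_2,T)<0<\mov_S(y,T)$ for $S\in\{\uc,\ba\}$, degree-consistency fails as stated.

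Your route, however, is genuinely different from the paper's. You exploit the sign discrepancy between a covered non-winner of higher outdegree and an uncovered (Banks) winner of lower outdegree; this makes the counterexample tiny and the proof essentially a one-line check of the dominion structure. The paper instead builds a nine-alternative tournament in which \emph{both} compared alternatives lie in $\uc(T)$ and in $\ba(T)$, and then shows that their positive $\mov$ values are inverted relative to their outdegrees (the outdegree-$4$ alternative has $\mov_{\uc}=1$ while the outdegree-$3$ alternative has $\mov_{\uc}=2$, and analogously for $\ba$). What you gain is brevity and a much smaller witness. What the paper's approach gains is a strictly stronger phenomenon: it shows that degree-consistency fails even when restricted to pairs of winners, so the failure cannot be dismissed as an artifact of the trivial ``winner versus non-winner'' sign gap you leverage. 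In other words, your argument refutes the axiom as literally stated, while the paper's example also refutes the natural weakening that only compares alternatives inside $S(T)$.
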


\begin{proof}
\begin{figure}[!t]
\centering
\scalebox{0.8}{
\begin{tikzpicture}

\node [circle, very thick, draw=black, fill=black!10, inner sep=1cm](d) at (0,3){}; 
\node [circle, very thick, draw=black, fill=black!10, inner sep=1cm](d2) at (0,-1){};
\node [circle, inner sep=2pt, fill=black](y4) at (0,1){};\node [right = 2pt] at (y4){\Large$y_4$};
\node [circle, inner sep=2pt, fill=black](x) at (-3.5,1){};\node [left = 2pt] at (x){\Large$x$};
\node [circle, inner sep=2pt, fill=black](z) at (3.5,1){};\node [right = 2pt] at (z){\Large$z$};
\def \n {5}
\def \m {4}
\def \radius {.8cm}
\def \margin {8} 
\foreach \s in {1,...,3}
{ 
      \node[circle,fill,inner sep=2pt,yshift=3cm](\s) at ({360/3 * (\s - 1)}:\radius) {}; 
      }

\foreach \s in {7,...,9}
{ 
      \node[circle,fill,inner sep=2pt,yshift=-1cm](\s) at ({360/3 * (\s - 1)}:\radius) {};
}

\node[above right] at (1){\Large$y_1$};
\node[above right] at (2){\Large$y_2$};
\node[below right] at (3){\Large$y_3$};
\node[above right] at (7){\Large$y_5$};
\node[above right] at (8){\Large$y_6$};
\node[below right] at (9){\Large$y_7$};

\foreach \s/\t in {2/1,3/1,3/2,
			8/7,7/9,9/8}{
	\draw[<-,thick] (\s) -- (\t);}
	
\draw[->, thick] (0,1.6) -- (0,1.2);
\draw[->, thick] (0,0.8) -- (0,0.5);
\draw[->, thick] (1,0) to[bend right] (1,1.8);
\draw[->,thick] (1) -- (z);
\draw[->, thick] (1.4,-1) -- (z);
\draw[->, thick] (x) -- (-1.5,3);
\draw[->, thick] (x) -- (-0.2,1);
\end{tikzpicture}
}
\caption{Illustration of the example in the proof of Proposition~\ref{thm:uc-degree-consistency}. Missing edges point from right to left.}
\label{fig:uc-degree-consistency}
\end{figure}
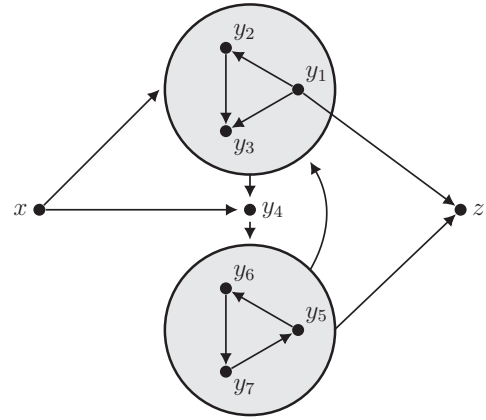
We give a counterexample for both $\mov_{\uc}$ and $\mov_{\ba}$ at once. 
The example tournament $T$ consists of nine alternatives, $x,z$, and $y_i$ for $i=1,\dots,7$. See Figure \ref{fig:uc-degree-consistency} for an illustration. 
Alternative $x$  dominates exactly $y_1,y_2,y_3$ and $y_4$, and alternative $z$ dominates $x,y_2,y_3$ and $y_4$. In general $y_i$ dominates $y_j$ whenever $i <j$, with the exceptions that $y_7$ dominates $y_5$, and $\{y_5,y_6,y_7\}$ dominate $\{y_1,y_2,y_3\}$.

We start by showing that this is a counterexample for $\mov_{\uc}$.
First, observe that both $x$ and $y_4$ belong to $\uc(T)$. 
Moreover, $x$ has outdegree $4$, and $\mov_{\uc}(x,T)=1$ by \Cref{lem:xcut-kkings-destr} since there is only one path of length at most two from $x$ to $z$. 
On the other hand, $y_4$ has outdegree $3$, but its $\mov_{\uc}$ is $2$. 
To see this, note that $y_4$ can reach each alternative in $\{y_5,y_6,y_7\}$ both directly and through another alternative in this latter set.
Moreover, since $y_5,y_6,y_7$ all dominate the remaining alternatives, $y_4$ has three disjoint paths of length two to each of these alternatives.

Next, we show that the counterexample holds for $\mov_{\ba}$ as well. 
To this end we show that $x,y_4 \in \ba(T)$ and  $\mov_{\ba}(y_4,T)>1$.
Since $BA(T) \subseteq UC(T)$ and therefore $\mov_{\ba}(x,T)\leq \mov_{\uc}(x,T)=1$, this suffices to establish the claim. 
In order to show that $y_4 \in \ba(T)$, we define $T_{56}$ to be the subtournament of $T$ induced by the set $\{y_4, y_5, y_6\}$. 
Analogously, we define $T_{67}$ and $T_{75}$. It is easy to see that $T_{56}, T_{67}$ and $T_{75}$ are all transitive and $y_4$ is their maximum element. 
Moreover, none of them can be extended by any other alternative, meaning that $y_4 \in \ba(T)$. 
In order to show that $x \in \ba(T)$, consider the subtournament induced by $\{x,y_1,y_2,y_3,y_4\}$, and observe that it cannot be extended by any other alternative. 

It remains to argue that $\mov_{\ba}(y_4,T) > 1$. 
Assume for contradiction that $\mov_{\ba}(y_4,T) = 1$ and let $\{(a,b)\}$ be a destructive reversal set for $y_4$, i.e., $y_4 \not\in \ba(T')$, where $T'$ is obtained from $T$ by reversing the edge $(a,b)$. 
We do a case distinction on the identity of $a$ and $b$. 
First, consider the cases where $a,b \in \{x,y_1,y_2,y_3,y_4,z\}$ or $a,b \in \{y_5,y_6,y_7\}$. 
Then,  $T'_{56}, T'_{67}$ and $T'_{75}$ (defined analogously as for $T$) are transitive subtournaments with maximal element $y_4$ which cannot be extended. 
Second, let one of $a$ and $b$ be from $\{x,y_1,y_2,y_3,y_4,z\}$ while the other one is from $\{y_5,y_6,y_7\}$, and without loss of generality let $\{a,b\} \cap \{y_5,y_6,y_7\} = \{y_5\}$. 
Then, the subtournament $T'_{67}$ is still transitive, has $y_4$ as a maximal element, and cannot be extended.
It follows that $y_4 \in \ba(T')$, a contradiction to the assumption that $\{(a,b)\}$ is a destructive reversal set. 
This concludes the proof. 
\end{proof}

\begin{figure*}[t]
\centering
\scalebox{0.8}{
\begin{tikzpicture}
\fill [black!20, rounded corners=2ex] (1.5,3) rectangle (2.5,-1);

\node[circle,fill,inner sep=2pt] at (0,2)(x){}; \node[above=2pt] at (x){\Large$x$};
\node[circle,fill,inner sep=2pt] at (0,0)(y){}; \node[below=2pt] at (y){\Large$y$};
\draw[->, thick] (y) -- (x); 

\node[circle,draw,very thick,inner sep=5pt] at (2,2)(a1){$\alpha$}; 
\node[circle,draw, very thick,inner sep=5pt] at (2,0)(b1){$\beta$}; 

\draw[->, thick] (x) -- (a1); 
\draw[->, thick] (y) -- (b1); 

\node at (3.5,2)(da){$\dots$}; 
\node at (3.5,0)(db){$\dots$}; 

\node[circle,draw,very thick,inner sep=5pt] at (5,2)(a2){\Large$\alpha$};
\node[circle,draw, very thick,inner sep=5pt] at (5,0)(b2){\Large$\beta$}; 

\draw[->, thick] (a1) -- (da);  \draw[->, thick] (da) -- (a2); 
\draw[->, thick] (b1) -- (db);  \draw[->, thick] (db) -- (b2); 
\draw[->,thick](a1) -- (b1);

\node[circle,fill,inner sep=2pt] at (7,0)(z){}; \node[below=2pt] at (z){\Large$z$};
\node[circle,draw, very thick,inner sep=5pt] at (7,2)(a3){\Large$\alpha$}; 

\draw[->,thick] (a2) -- (a3); 
\draw[->,thick] (b2) -- (z); 
\draw[->,thick] (a2) -- (b2);

\node[circle,fill,inner sep=2pt] at (9,1)(t){}; \node[above=2pt] at (t){\Large$t$};
\draw[->,thick] (a3) -- (z);
\draw[->,thick] (a3) -- (t); 
\draw[->,thick] (z) -- (t); 

\draw [decorate,thick, black!60, decoration={brace,amplitude=10pt, raise=20pt, mirror}] (b1.west) -- (b2.east);
\node at (3.5,-1.3)(d){\Large$\times (k-2)$};

\end{tikzpicture} \hspace{.5cm} \vline \hspace{1cm}\begin{tikzpicture}

\fill [black!20, rounded corners=2ex] (-1.3,4.5) rectangle (2,-1.5);
\node [circle, very thick, draw=black, fill=white, inner sep=.8cm](d) at (0,3){}; \node[above right=1cm] at (d){\Large$\alpha=5$};
\node [circle, very thick, draw=black, fill=white, inner sep=.8cm](d2) at (0,0){}; \node[below right=1cm] at (d2){\Large$\beta=4$};

\def \n {5}
\def \m {4}
\def \radius {.8cm}
\def \margin {8} 
\foreach \s in {1,...,5}
{ 
      \node[circle,fill,inner sep=2pt,yshift=3cm](\s) at ({360/\n * (\s - 1)}:\radius) {}; 
}

\foreach \s in {6,...,9}
{ 
      \node[circle,fill,inner sep=2pt,yshift=0cm](\s) at ({360/\m * (\s - 1)}:\radius) {};
}

\foreach \s/\t in {1/2,1/3,2/3,2/4,3/4,3/5,4/5,4/1,5/1,5/2,
			6/7,6/8,7/8,7/9,8/9,9/6}{
	\draw[<-,thick] (\s) -- (\t);}
	
\draw[->,ultra thick] (0,1.8) -- (0,1.2);
\end{tikzpicture}
}

\caption{Illustration of the example in the proof of Proposition~\ref{thm:kkings-degree-consistency}. Missing edges point from right to left. The left image gives an overview of the example, while the right image shows a close-up of two ``supernodes'' of size $\alpha=5$ and $\beta=4$, respectively.}
\label{fig:kkings-degree-consistency}
\end{figure*}
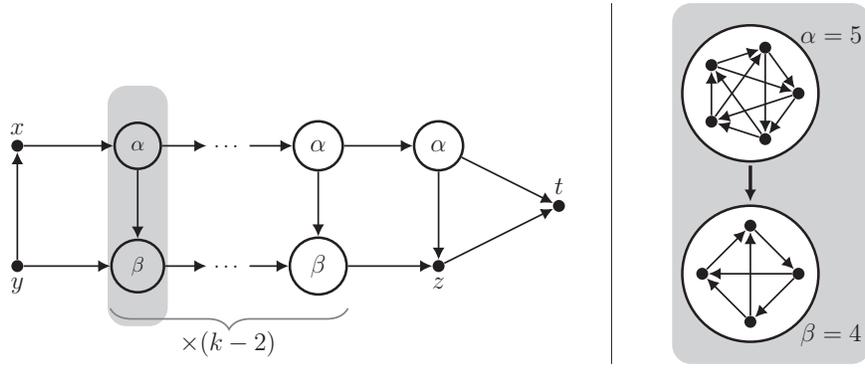

\begin{proposition}
$\mov_{k\emph{-kings}}$ (for constant $k\ge 3$) satisfies neither degree-consistency nor equal-degree-consistency. \label{thm:kkings-degree-consistency}
\end{proposition}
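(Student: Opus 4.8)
The plan is to reduce everything to length‑bounded cuts via \lemref{lem:xcut-kkings-destr}: for a $k$-king $w$ in a tournament $T$ one has $\mov_{k\text{-kings}}(w,T)=\min_{v\ne w}\mincut_k(w,v)$. It therefore suffices to build one tournament (with a size parameter) containing two $k$-kings $x$ and $y$ such that (i) $\mov_{k\text{-kings}}(y)=1$ because a single edge, once removed, pushes some vertex out of $y$'s $k$-step reach; (ii) $\mov_{k\text{-kings}}(x)\ge 2$ because $x$ reaches every vertex along at least two edge-disjoint paths of length at most $k$; and (iii) the out-degrees of $x$ and $y$ can be set equal or made to satisfy $\outdeg(y)>\outdeg(x)$. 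The first tuning refutes equal-degree-consistency and the second refutes degree-consistency.

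Concretely, I would use the tournament of \figref{fig:kkings-degree-consistency}, parametrized by an odd integer $\alpha\ge 3$ and an integer $\beta\ge 2$. Besides $x$ and $y$ (with $y\succ x$), it has a ``top'' chain of $k-1$ supernodes $A_1\succ A_2\succ\dots\succ A_{k-1}$, each a regular tournament on $\alpha$ vertices, with $x\succ A_1$; a ``bottom'' chain of $k-2$ supernodes $B_1\succ\dots\succ B_{k-2}$ on $\beta$ vertices each, with $A_i\succ B_i$ for every $i$ and $y\succ B_1$; a single vertex $z$ with $A_{k-1}\succ z$ and $B_{k-2}\succ z$; and a terminal $t$ with $A_{k-1}\succ t$ and $z\succ t$. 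All remaining edges point ``backwards'' (from the later vertex of each pair to the earlier one), so that no forward shortcuts exist. One reads off $\outdeg(x)=\alpha$ and $\outdeg(y)=\beta+1$.

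The verification then splits into three parts. (a) Both $x$ and $y$ are $k$-kings: $x$ reaches $t$ through $x\succ A_1\succ\dots\succ A_{k-1}\succ t$ in exactly $k$ steps and $y$ reaches $t$ through $y\succ B_1\succ\dots\succ B_{k-2}\succ z\succ t$ in exactly $k$ steps, while all other vertices are reached strictly faster. (b) $\mov_{k\text{-kings}}(y)=1$: the only length-$\le k$ route from $y$ to $t$ uses the edge $(z,t)$, since reaching $t$ via $A_{k-1}\succ t$ would require first reaching the top chain through $x$, which already costs $k$ steps, so that route has length $k+1$; thus $\{(z,t)\}$ is a minimum $k$-length bounded $y$-cut. (c) $\mov_{k\text{-kings}}(x)\ge 2$ (in fact $\alpha$ for $k\ge 4$ and $(\alpha+1)/2$ for $k=3$): every path out of $x$ begins with one of the $\alpha$ parallel edges $x\succ A_1$, and since consecutive layers are joined by complete bipartite dominance and each $A_i$ is regular, one can route at least two (and with more care $\min(\alpha,\dots)$) edge-disjoint length-$\le k$ paths from $x$ to any vertex; crucially, $x$ cannot reach $t$ through $z$ within $k$ steps, so the fragile edge $(z,t)$ is irrelevant to $x$. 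Finally, taking $(\alpha,\beta)=(3,2)$ gives $\outdeg(x)=\outdeg(y)=3$ but $\mov_{k\text{-kings}}(x)\ge 2>1=\mov_{k\text{-kings}}(y)$, refuting equal-degree-consistency; taking $(\alpha,\beta)=(3,3)$ gives $\outdeg(y)=4>3=\outdeg(x)$ while still $\mov_{k\text{-kings}}(y)=1<\mov_{k\text{-kings}}(x)$, refuting degree-consistency.

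The main obstacle I expect is step (c): establishing the lower bound $\mov_{k\text{-kings}}(x)\ge 2$, i.e., that no single edge deletion can push a vertex out of $x$'s $k$-step reach. This needs a careful length count — a forced detour must still fit within the budget $k$ — and it is precisely here that regularity of the supernodes matters (with a transitive $A_1$, a Condorcet winner inside $A_1$ would be reachable from $x$ only along the single direct edge when $k=3$, and the example would collapse). Dually, step (b) requires checking exactly that \emph{every} alternative route from $y$ to $t$ — the ``up and over through $x$'' route and all bottom-chain detours into $z$ — has length at least $k+1$; this is again an exact counting argument rather than an asymptotic one.
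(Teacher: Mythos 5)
Your proposal is correct and follows essentially the same route as the paper: the same parametrized construction (the tournament of Figure~\ref{fig:kkings-degree-consistency} with top chain of $k-1$ regular $\alpha$-supernodes and bottom chain of $k-2$ $\beta$-supernodes), the same reduction via Lemma~\ref{lem:xcut-kkings-destr}, the single-edge cut $(z,t)$ certifying $\mov_{k\text{-kings}}(y,T)=1$, and the edge-disjoint length-$\le k$ paths argument giving $\mov_{k\text{-kings}}(x,T)\ge(\alpha+1)/2\ge 2$, followed by tuning $(\alpha,\beta)$ to violate each property. The only quibble is the parenthetical in step (c): the dichotomy between the bounds $\alpha$ and $(\alpha+1)/2$ is governed by whether the target vertex lies inside the supernode dominated by $x$, not by whether $k\ge 4$ or $k=3$, but this does not affect the bound $\mov_{k\text{-kings}}(x,T)\ge 2$ that your argument needs.
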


\begin{proof}
Let $k \ge 3$ be a constant. 
We describe a family of examples which, after specifying two parameters, allows us to disprove the degree-consistency as well as the equal-degree-consistency of $\mov_{k\text{-kings}}$.
The high-level idea of the instance is as follows: There exist two alternatives, $x$ and $y$, both of which are currently $k$-kings. Moreover, $\outdeg(x)=\alpha$ and $\outdeg(y) = \beta+1$, where $\alpha$ is any odd positive integer while $\beta$ can be any positive integer. 
The example is constructed in such a way that the $\mov_{k\text{-kings}}$ of $x$ is at least $(\alpha+1)/2$ while that of $y$ is $1$. 
Setting $\alpha\geq 3$ and $\beta=\alpha-1$ yields a violation of equal-degree-consistency, and setting $\alpha\geq3$ and $\beta\geq \alpha$ yields a violation of degree-consistency. 

We now describe the construction in more detail; see \Cref{fig:kkings-degree-consistency} for an illustration. 
The tournament $T$ consists of four singleton alternatives, $x,y,z$ and $t$, and $2k-3$ \emph{supernodes}.
These supernodes are tournaments themselves, where all alternatives in the supernode have the same relation to each alternative outside of the supernode. 
In \Cref{fig:kkings-degree-consistency}, we depict supernodes by large circles. 
In our construction there exist two different types of supernodes: those with parameter $\alpha$ and those with parameter $\beta$. 
Each supernode with parameter $\alpha$ contains $\alpha$ singleton alternatives and have a specific structure. 
More precisely, the alternatives are arranged on a cycle and each alternative dominates exactly the $(\alpha-1)/2$ alternatives following it on the cycle. (This structure is called a ``cyclone'' in \Cref{app:tc-example}.)
We make fewer specifications for supernodes of parameter $\beta$ and simply require that each of them corresponds to a tournament of size $\beta$, but their inner structure can be chosen arbitrarily. 
For the relationships between alternatives and supernodes, we refer to the left image of Figure \ref{fig:kkings-degree-consistency}.

\begin{claim}
Let $\alpha$ be an odd positive integer, $\beta$ be a positive integer, and $T$ be a tournament with parameters $\alpha$ and $\beta$ as described in Figure \ref{fig:kkings-degree-consistency}. 
Then, $\outdeg(x)=\alpha$, $\outdeg(y)=\beta+1$, $\mov_{k\text{-kings}}(x,T)\geq (\alpha+1)/2 $, and $\mov_{k\text{-kings}}(y,T)=1$. 
\end{claim}
 
 \begin{proof}[Proof of Claim]
 The outdegrees of $x$ and $y$ follow by construction, and it can be verified that both $x$ and $y$ are $k$-kings.
 We start by showing that the $\mov_{k\text{-kings}}$ of $y$ in the constructed example is $1$. 
 By Lemma \ref{lem:xcut-kkings-destr}, it suffices to show that there exists a $k$-length bounded $y$-$t$-cut of size $1$. 
 The edge $(z,t)$ forms such a cut: after deleting it, all paths from $y$ to $t$ have length at least $k+1$. 
 
Next, we show that the $\mov_{k\text{-kings}}$ of $x$ is at least $(\alpha+1)/2$. 
We do so by arguing that for any $w \in V(T)$, the size of a minimum $k$-length bounded $x$-$w$-cut is at least $(\alpha+1)/2$. 
To this end, we give a lower bound on the number of edge-disjoint paths of length at most $k$ from $x$ to $w$; clearly, any $k$-length bounded $x$-$w$-cut must have size at least this latter number. 
First, let $w$ be any alternative that is not included in the supernode dominated by $x$. 
In this case, there exist at least $\alpha$ disjoint paths from $x$ to $w$. 
This is because there exists a path from $x$ to $w$ containing at least three alternatives which uses only alternatives from supernodes of size $\alpha$ in its interior (in other words, all alternatives besides $x$ and $w$ belong to such supernodes) and does not use more than one alternative from the same supernode.
By construction, such a path gives rise to $\alpha$ edge-disjoint paths from $x$ to $w$.
Second, let $w$ be an alternative in the supernode dominated by $x$. 
Then, due to the structure of the supernode, there exist $(\alpha-1)/2$ disjoint two-step $x$-$w$-paths and one direct $x$-$w$-path, i.e., the edge $(x,w)$. Hence, the \mov{} of $x$ is at least $(\alpha+1)/2$. 
 \end{proof}
 As discussed earlier, this Claim concludes the proof of \Cref{thm:kkings-degree-consistency}. 
\end{proof}

\begin{corollary}
$\mov_{\cp}, \mov_{\uc}, \mov_{k\emph{-kings}}$ (for constant $k\ge 3$),  and $\mov_{\ba}$ do not fulfill strong degree-consistency.
\end{corollary}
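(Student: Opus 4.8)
The plan is to derive this corollary immediately from the three preceding propositions together with the observation, already noted right after the definition of the degree-consistency properties, that $\mov_S$ is strong degree-consistent \emph{if and only if} it is both degree-consistent and equal-degree-consistent. Hence it suffices, for each of the four tournament solutions in the statement, to point to a violation of \emph{at least one} of these two weaker properties; the contrapositive of the equivalence then rules out strong degree-consistency.

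Concretely, I would argue as follows. For $\mov_\cp$, Proposition~\ref{thm:co-equal-consistency} shows that equal-degree-consistency already fails (via the seven-alternative example in which $x$ and $y_3$ have equal outdegree but $\mov_\cp(x,T) = -2 \neq -1 = \mov_\cp(y_3,T)$); since strong degree-consistency would in particular imply equal-degree-consistency, $\mov_\cp$ cannot satisfy it. The same reasoning applies to $\mov_\uc$ and $\mov_\ba$ using Proposition~\ref{thm:uc-equal-consistency} (or, alternatively, Proposition~\ref{thm:uc-degree-consistency}, which already refutes degree-consistency). Finally, for $k$-kings with constant $k \ge 3$, Proposition~\ref{thm:kkings-degree-consistency} establishes that both degree-consistency and equal-degree-consistency fail, so strong degree-consistency fails \emph{a fortiori}.

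There is no real obstacle here: the corollary is a one-line logical consequence of the equivalence ``strong $\Leftrightarrow$ degree $\wedge$ equal-degree'' and the negative results obtained in Propositions~\ref{thm:co-equal-consistency}, \ref{thm:uc-equal-consistency}, \ref{thm:uc-degree-consistency}, and~\ref{thm:kkings-degree-consistency}. The only thing worth stating explicitly in the write-up is precisely which proposition is being invoked for which solution, so that the reader need not re-derive the implication chain. I would therefore keep the proof to two or three sentences, simply citing the relevant propositions and the equivalence.

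\begin{proof}
Recall that $\mov_S$ is strong degree-consistent if and only if it is both degree-consistent and equal-degree-consistent; in particular, strong degree-consistency implies equal-degree-consistency. By Proposition~\ref{thm:co-equal-consistency}, $\mov_\cp$ is not equal-degree-consistent, and by Proposition~\ref{thm:uc-equal-consistency}, neither $\mov_\uc$ nor $\mov_\ba$ is equal-degree-consistent; hence none of these three can be strong degree-consistent. For $k$-kings with constant $k\ge 3$, Proposition~\ref{thm:kkings-degree-consistency} shows that $\mov_{k\text{-kings}}$ satisfies neither degree-consistency nor equal-degree-consistency, so it is not strong degree-consistent either.
\end{proof}
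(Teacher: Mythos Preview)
Your proposal is correct and matches the paper's approach: the corollary is stated there without proof, as an immediate consequence of the equivalence ``strong $\Leftrightarrow$ degree $\wedge$ equal-degree'' together with Propositions~\ref{thm:co-equal-consistency}, \ref{thm:uc-equal-consistency}, \ref{thm:uc-degree-consistency}, and~\ref{thm:kkings-degree-consistency}. Your write-up is, if anything, more explicit than the paper about which proposition handles which solution.
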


\subsection{A Probabilistic Result}
\label{sec:probabilistic}

In this section, we establish a simple formula for the \mov{} of $\tc$ and $k$-kings for $k\ge 4$ that works ``with high probability'', i.e., the probability that the formula holds converges to $1$ as $n$ grows.
We assume that the tournament is generated using the \emph{uniform random model}, where each edge is oriented in either direction with equal probability independently of other edges; this model has been studied, among others, by \citet{Fey08} and \citet{ScottFe12}.

\begin{restatable}{theorem}{thmtchighprobability}\label{thm:tc-high-probability}
Let $S\in\{\tc, k\text{-kings}\}$, where $4\leq k\leq n-1$.
Assume that a tournament $T$ is generated according to the uniform random model.
Then, with high probability, the following holds for all $x\in V(T)$ simultaneously:
\[
\mov_S(x,T) = \min\left(\outdeg(x), \min_{y\in V(T): y\neq x}\indeg(y)\right).
\]
\end{restatable}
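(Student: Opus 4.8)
The plan is to reduce, via \Cref{lem:xcut-kkings-destr}, to understanding minimum length-bounded cuts, and then to pin those down exactly with high probability. Concretely, I will show that with high probability (a) every alternative is a winner of $S$, and (b) for every ordered pair of distinct alternatives $x,y$, the minimum $k$-length bounded $x$-$y$-cut has size exactly $\min(\outdeg(x),\indeg(y))$ (taking $k=n-1$ when $S=\tc$). Granting both, \Cref{lem:xcut-kkings-destr} yields, simultaneously for all $x$, $\mov_S(x,T)=\min_{y\neq x}\mincut_k(x,y)=\min_{y\neq x}\min(\outdeg(x),\indeg(y))=\min\bigl(\outdeg(x),\min_{y\neq x}\indeg(y)\bigr)$, which is the assertion.

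Part (a) follows from the fact---due to \citet{Fey08}, and also provable by a one-line union bound---that with high probability every alternative is a $2$-king, since the $2$-kings are contained in the $k$-kings for every $k\ge 2$ and in $\tc$. One direction of part (b) is deterministic: deleting all edges out of $x$, or all edges into $y$, destroys every $x$-$y$-path, so $\mincut_k(x,y)\le\min(\outdeg(x),\indeg(y))$ always.

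The crux is the reverse inequality $\mincut_k(x,y)\ge\min(\outdeg(x),\indeg(y))$, for which it suffices to pack that many edge-disjoint $x$-$y$-paths of length at most $3$ (and hence at most $k$, as $k\ge 4$). Partition $V(T)\setminus\{x,y\}$ into $S_1=D(x)\cap\overline{D}(y)$, $S_2=D(x)\cap D(y)$, $S_3=\overline{D}(x)\cap\overline{D}(y)$, and $S_4=\overline{D}(x)\cap D(y)$, so that $\outdeg(x)=[x\succ y]+|S_1|+|S_2|$ and $\indeg(y)=[x\succ y]+|S_1|+|S_3|$. The packing uses the direct edge $x\to y$ (if present), the $|S_1|$ paths $x\to v\to y$ with $v\in S_1$, and one path $x\to a\to b\to y$ for each edge of a matching between $S_2$ and $S_3$ in the random bipartite graph where $a\in S_2$ is joined to $b\in S_3$ iff $a\succ b$; these are edge-disjoint since the three groups use pairwise disjoint edge sets and distinct endpoints. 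Conditioned on all edges incident to $x$ or $y$ (hence on the partition), this bipartite graph is uniformly random with edge probability $1/2$, so a Hall-type estimate gives a matching saturating the smaller side except with probability $e^{-\Omega(n)}$; together with Chernoff bounds ensuring $|S_2|,|S_3|=\Theta(n)$, a union bound over the $n(n-1)$ pairs still tends to $0$. On the good event the packing has size $[x\succ y]+|S_1|+\min(|S_2|,|S_3|)=\min(\outdeg(x),\indeg(y))$, as desired.

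Combining (a) and (b) proves the theorem. The main obstacle is quantitative: the Hall-violator probability for the $S_2$-$S_3$ graph---together with the probability that $|S_2|$ or $|S_3|$ is atypically small---must be exponentially small in $n$ so that the union bound over the $\Theta(n^2)$ pairs survives; the remaining steps are either deterministic or routine concentration. I note that the construction uses only paths of length at most $3$.
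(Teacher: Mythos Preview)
Your proposal is correct and takes a genuinely different route from the paper's proof.

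The paper also reduces via \Cref{lem:xcut-kkings-destr} to showing $\mincut_k(x,y)=\min(\outdeg(x),\indeg(y))$ for all pairs with high probability, but it establishes the lower bound by contradiction rather than by an explicit path packing. It fixes three high-probability events---all alternatives are winners, all degrees lie in $[0.49r,0.51r]$, and any two disjoint sets of size $\ge 0.1r$ have $\Omega(r^2)$ edges between them---and then argues that a hypothetical cut $R$ of size below $\min(\outdeg(x),\indeg(y))$ must leave a short $x$-$y$-path intact. This requires a case split on how many edges of $R$ touch $x$ and $y$; in the harder case the argument passes through an intermediate vertex $z\in D(x)$ and ends up needing a path of length four, which is why the paper only claims the result for $k\ge 4$.

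Your approach instead exhibits $\min(\outdeg(x),\indeg(y))$ edge-disjoint $x$-$y$-paths directly: the edge $x\to y$ if present, the $|S_1|$ length-two paths through $S_1=D(x)\cap\overline{D}(y)$, and length-three paths $x\to a\to b\to y$ coming from a matching in the random bipartite graph between $S_2=D(x)\cap D(y)$ and $S_3=\overline{D}(x)\cap\overline{D}(y)$. Since $|S_2|,|S_3|$ are each $\mathrm{Bin}(n-2,1/4)$ and hence $\Theta(n)$ with probability $1-e^{-\Omega(n)}$, and a bipartite graph on parts of size $\Theta(n)$ with edge probability $1/2$ has a matching saturating the smaller side with probability $1-e^{-\Omega(n)}$ (the standard Hall union bound $\sum_a \binom{s}{a}\binom{t}{a-1}2^{-a(t-a+1)}$ indeed gives an exponentially small failure probability when $s,t=\Theta(n)$), the union bound over $n(n-1)$ ordered pairs survives. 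The packing then has size $[x\succ y]+|S_1|+\min(|S_2|,|S_3|)=\min(\outdeg(x),\indeg(y))$, as you compute.

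What each approach buys: the paper's argument is more ``cut-side'' and uses only coarse expansion of the random tournament, at the cost of the extra case analysis and the restriction $k\ge 4$. Your packing argument is cleaner, avoids the case split, and---as you note---uses only paths of length at most three, so it in fact proves the statement for all $k\ge 3$, slightly strengthening the theorem.
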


Theorem \ref{thm:tc-high-probability} suggests that when tournaments are generated according to the uniform random model, $\mov_{\tc}$ and $\mov_{k\text{-kings}}$ for $k\ge 4$ can likely be computed by a simple formula based on the degrees of the alternatives.
In particular, even though the problem is computationally hard for $\mov_{k\text{-kings}}$ for any constant $k\ge 4$ \citemovfull, there exists an efficient heuristic that correctly computes the \mov{} value in most cases.
In Appendix~\ref{app:tc-example}, we give an example showing that the heuristic is not always correct. 
More precisely, for any positive integer $\ell$, we construct a tournament such that $\{\mov_{\tc}(x,T) \mid x \in V(T)\}$ contains the values $1,2,\dots,\ell$ whereas the formula in \Cref{thm:tc-high-probability} predicts that all alternatives have the same (arbitrarily large) $\mov_{\tc}$ value.

At a high level, to prove this theorem, we first observe that by a result of \citet{Fey08}, it is likely that $S(T) = V(T)$, i.e., all alternatives are chosen by $S$.
In order to remove alternative $x$ from the winner set, one option is to make it a Condorcet loser---this requires $\outdeg(x)$ reversals---while another option is to make another alternative $y$ a Condorcet winner---this requires $\indeg(y)$ reversals.
Hence, the left-hand side is at most the right-hand side.
To establish that both sides are equal with high probability, we need to show that the aforementioned options are the best ones for making~$x$ a non-winner---by \Cref{lem:xcut-kkings-destr}, this requires making some~$y$ unreachable from $x$ in four steps.
The intuition behind this claim is that the tournament resulting from the uniform random model is highly connected, with many paths of length at most four from $x$ to $y$.
As a result, if we want to make $y$ unreachable from $x$, it is unlikely to be beneficial to destroy intermediate edges instead of edges adjacent to~$x$ or~$y$.

To prove the theorem, we first state the Chernoff bound, a standard tool for bounding the probability that the value of a random variable is far from its expectation.

\begin{lemma}[Chernoff bound] \label{lem:chernoff}
Let $X_1, \dots, X_k$ be independent random variables taking values in $[0, 1]$, and let $S := X_1 + \cdots + X_k$. Then, for any $\delta \geq 0$,
$$\Pr[S \geq (1 + \delta)\E[S]] \leq \exp\left(\frac{-\delta^2 \E[S]}{3}\right)$$
and
$$\Pr[S \leq (1 - \delta)\E[S]] \leq \exp\left(\frac{-\delta^2 \E[S]}{2}\right).$$
\end{lemma}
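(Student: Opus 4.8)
The plan is to use the exponential moment method (Chernoff's method): rather than controlling $S$ directly, I would control the nonnegative random variable $e^{tS}$ and apply Markov's inequality, leaving the parameter $t$ free and optimizing it at the end. I treat the upper tail first. For any $t>0$, Markov's inequality applied to $e^{tS}$ gives $\Pr[S\ge(1+\delta)\mu]=\Pr[e^{tS}\ge e^{t(1+\delta)\mu}]\le e^{-t(1+\delta)\mu}\,\E[e^{tS}]$, where $\mu:=\E[S]$. Since the $X_i$ are independent, the moment generating function factorizes as $\E[e^{tS}]=\prod_i\E[e^{tX_i}]$, so it suffices to bound each factor.

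The key step is a per-variable bound exploiting that each $X_i$ takes values in $[0,1]$. Since $x\mapsto e^{tx}$ is convex, on $[0,1]$ it lies below the chord joining its endpoints, which gives $e^{tX_i}\le 1+X_i(e^t-1)$ pointwise; taking expectations and writing $p_i:=\E[X_i]$ yields $\E[e^{tX_i}]\le 1+p_i(e^t-1)\le\exp\bigl(p_i(e^t-1)\bigr)$, where the last step uses $1+z\le e^z$. Multiplying over $i$ gives $\E[e^{tS}]\le\exp\bigl(\mu(e^t-1)\bigr)$, so $\Pr[S\ge(1+\delta)\mu]\le\exp\bigl(\mu(e^t-1-t(1+\delta))\bigr)$. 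Minimizing the exponent over $t>0$ leads to the choice $t=\ln(1+\delta)$ and the sharp bound $\Pr[S\ge(1+\delta)\mu]\le\bigl(e^{\delta}/(1+\delta)^{1+\delta}\bigr)^{\mu}$. The lower tail is handled symmetrically: for $t>0$ I would apply Markov to $e^{-tS}$, run the identical convexity and independence argument to obtain $\E[e^{-tS}]\le\exp\bigl(\mu(e^{-t}-1)\bigr)$, and optimize at $t=\ln\frac{1}{1-\delta}$ to reach $\Pr[S\le(1-\delta)\mu]\le\bigl(e^{-\delta}/(1-\delta)^{1-\delta}\bigr)^{\mu}$.

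What remains is to relax these sharp exponential bounds to the clean Gaussian-type forms in the statement, and this elementary calculus is where I would be most careful. Writing out the sharp exponents, the upper-tail claim reduces to the scalar inequality $(1+\delta)\ln(1+\delta)-\delta\ge\delta^2/3$ and the lower-tail claim to $(1-\delta)\ln(1-\delta)+\delta\ge\delta^2/2$. Each is proved by defining the corresponding difference function $f$, checking $f(0)=f'(0)=0$, and analyzing the sign of $f''$. For the lower tail, $f''(\delta)=\delta/(1-\delta)\ge0$ throughout $[0,1)$, so $f$ is increasing and hence nonnegative on the entire meaningful range $\delta\in[0,1)$ (the inequality being vacuous for $\delta\ge1$ since $S\ge0$). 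For the upper tail the second derivative changes sign, so I would instead verify directly that $f'(\delta)=\ln(1+\delta)-2\delta/3\ge0$ on the relevant interval. The main obstacle is precisely pinning down this range: the constant $1/3$ form of the upper bound is the standard one for $\delta\le1$ — which is the regime in which it is subsequently applied — so I would either restrict attention to $0\le\delta\le1$ or, for larger $\delta$, fall back on the sharp bound $\bigl(e^{\delta}/(1+\delta)^{1+\delta}\bigr)^{\mu}$ directly.
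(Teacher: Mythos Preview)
The paper does not actually prove this lemma; it is quoted without proof as a standard probabilistic tool (the Chernoff bound) and then invoked in the proof of the subsequent theorem. Your exponential-moment-method argument is the standard derivation and is correct in outline and in the details you spell out. You are also right to flag the range issue for the upper tail: the scalar inequality $(1+\delta)\ln(1+\delta)-\delta\ge\delta^2/3$ in fact fails once $\delta$ exceeds roughly $1.8$, so the statement as written (``for any $\delta\ge0$'') is slightly stronger than what the standard proof yields with constant $3$; the universally valid form has $2+\delta$ in place of $3$, or one switches to the linear exponent $-\delta\mu/3$ for $\delta>1$. Since the paper only applies the lemma with $\delta=0.02$ and $\delta=0.2$, your proposed restriction to $0\le\delta\le1$ is entirely sufficient for its purposes.
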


\begin{proof}[Proof of \Cref{thm:tc-high-probability}]
Let $r := n-1$, and consider the following three events:
\begin{enumerate}
\item $S(T)=V(T)$;
\item For every $x\in V(T)$, it holds that $ \outdeg(x),\indeg(x)\in[0.49r,0.51r]$;
\item For every pair of disjoint sets $A,B\subseteq V(T)$ such that $|A|,|B|\geq 0.1r$, the number of edges directed from an alternative in $A$ to an alternative in $B$ is at least $0.004r^2$.
\end{enumerate}

We claim that with high probability, all three events occur simultaneously. 
By union bound, it suffices to prove this claim for each event separately.
The claim for event~(i) follows from Theorem~1 of \citet{Fey08}, which shows that the Banks set includes all alternatives with high probability in a random tournament, along with the fact that in any tournament, the Banks set is contained in the uncovered set, which is in turn contained in our tournament solution $S$.

Fix $x\in V(T)$, and let $X_1,\dots,X_{n-1}$ be indicator random variables that indicate whether $x$ dominates each of the remaining $n-1$ alternatives or not; $X_i$ takes the value $1$ if so, and $0$ otherwise.
Let $X:=\sum_{i=1}^{n-1}X_i$.
We have $\E[X_i]=0.5$ for each $i$, and so $\E[X]=0.5r$.
By Lemma~\ref{lem:chernoff}, it follows that
$$
\Pr[X\geq 0.51r] \leq \exp\left(-\frac{0.02^2\cdot 0.5r}{3}\right)\leq \exp(10^{-5}r).
$$
Similarly, by applying the other inequality in Lemma~\ref{lem:chernoff}, we have $\Pr[X\leq 0.49r] \leq \exp(10^{-5}r)$.
Taking a union bound over these two events and over all $x\in V(T)$, the probability that $\outdeg(x)\not\in [0.49r,0.51r]$ for some $x$ is at most $2n\cdot \exp(10^{-5}r)\leq 4r\cdot \exp(10^{-5}r)$, which converges to $0$ as $r\rightarrow\infty$ (equivalently, as $n\rightarrow\infty$).
Since $\outdeg(x)+\indeg(x)=r$ for each $x$, having $\outdeg(x)\in[0.49r,0.51r]$ implies $\indeg(x)\in[0.49r,0.51r]$ as well.
This means that event (ii) occurs with high probability.

Next, fix a pair of disjoint sets $A,B\subseteq V(T)$ such that $|A|,|B|\geq 0.1r$.
Let $t$ be the number of edges between $a$ and $b$, and let $Y_1,\dots,Y_t$ be indicator random variables that indicate whether each edge is oriented from $A$ to $B$; $Y_i$ takes the value $1$ if so, and $0$ otherwise.
Let $Y := \sum_{i=1}^t Y_i$.
We have $\E[Y_i]=0.5$ for each $i$, and so $\E[Y]=0.5t$.
Writing $t=cr^2$ for some $c\geq 0.01$, it follows by Lemma~\ref{lem:chernoff} that
\begin{align*}
\Pr[Y\leq 0.004r^2]
&= \Pr\left[Y\leq\frac{0.004}{0.5c}\cdot\E[Y]\right] \\
&\leq \Pr[Y\leq 0.8\cdot \E[Y]] \\
&\leq \exp(-0.02\cdot\E[Y])
\leq \exp(-10^{-4}r^2).
\end{align*}
Since there are no more than $2^n$ choices for each of $A$ and $B$, by union bound, the probability that event (iii) fails for some pair $A,B$ is at most $2^{2n}\cdot \exp(-10^{-4}r^2)\leq \exp(4r-10^{-4}r^2)$, which again vanishes for large $r$.
We have therefore established that events (i), (ii), and (iii) occur simultaneously with high probability.

Assume from now on that all three events occur, and let $r\geq 130$.
We will show that under these conditions, it always holds that $$\mov_S(x,T) = \min\left(\outdeg(x), \min_{y\in V(T): y\neq x}\indeg(y)\right).$$
This suffices to finish the proof of the theorem.

First, since event (i) occurs, $\mov_S(x,T)$ is positive for every $x\in V(T)$.
We claim that for any distinct $x,y\in V(T)$, it holds that \begin{equation}
\label{eq:mincut}
\mincut_k(x,y) = \min\left(\outdeg(x),\indeg(y)\right).  
\end{equation}
If (\ref{eq:mincut}) holds, we would have that the size of a minimum $k$-length bounded $x$-cut is
\begin{align*}
\min_{y\neq x} (\mincut_k(x,y)) 
&= 
\min_{y\neq x} (\min\left(\outdeg(x),\indeg(y)\right)) \\
&= \min\left(\outdeg(x), \min_{y\neq x}\indeg(y)\right).
\end{align*}
By Lemma~\ref{lem:xcut-kkings-destr}, this size is equal to the size of a minimum DRS for $x$ with respect to $S$, i.e., $\mov_S(x,T)$.
To finish the proof, it therefore remains to establish (\ref{eq:mincut}).

Fix a pair $x,y\in V$.
Observe that the following two sets are $k$-length bounded $x$-$y$-cuts:
\begin{itemize}
\item The set of all outgoing edges from $x$ (since $x$ cannot reach any other alternative upon the removal of these edges);
\item The set of all incoming edges into $y$ (since $y$ cannot be reached by any other alternative upon the removal of these edges).
\end{itemize}
The former set has size $\outdeg(x)$ and the latter set has size $\indeg(y)$, implying that $\mincut_k(x,y) \leq \min\left(\outdeg(x),\indeg(y)\right)$. 

Assume now for the sake of contradiction that this inequality is strict, i.e., there exists a $k$-length bounded $x$-$y$-cut $R$ of size less than $\min\left(\outdeg(x),\indeg(y)\right)$.
Let $E(x,D(x))$ denote the set of edges between $x$ and its dominion $D(x)$, and let $E(y,\overline{D}(y))$ denote the set of edges between $y$ and its set of dominators $\overline{D}(y)$.
Since event (ii) occurs, we have $\left|E(x,D(x))\right|,\left|E(y,\overline{D}(y))\right|\in [0.49r,0.51r]$. 
Moreover, $|R|\leq 0.51r$.
Let $V_x\subseteq D(x)$ be the set of alternatives that $x$ can still directly reach after the removal of $R$.
Similarly, let $V_y\subseteq \overline{D}(y)$ be the set of alternatives that can directly reach $y$ after the removal of $R$.
We consider two cases:

\emph{Case 1:} $R$ contains at most $0.39r$ edges in each of $E(x,D(x))$ and $E(y,\overline{D}(y))$.
This means that $|V_x|,|V_y|\geq 0.1r$.
If $V_x\cap V_y\neq\emptyset$, then $x$ can reach $y$ via a path of length two even after the removal of $R$, a contradiction.
So $V_x$ and $V_y$ must be disjoint.
Since event (iii) occurs, there are at least $0.004r^2$ edges directed from an alternative in $V_x$ to an alternative in $V_y$.
In addition, from $r\geq 130$ we have $0.004r^2 > 0.51r$, so at least one of these edges is not included in $R$.
It follows that after $R$ is removed, there still exists a path of length three from $x$ to $y$, a contradiction.

\emph{Case 2:} $R$ contains at least $0.39r$ edges in either $E(x,D(x))$ or $E(y,\overline{D}(y))$.
Assume without loss of generality that it contains at least $0.39r$ edges in $E(x,D(x))$; the other case can be handled analogously.
Since $|R|\leq 0.51r$, $R$ contains at most $0.12r$ edges in $E(y,\overline{D}(y))$.
So $|V_y|\geq 0.49r-0.12r = 0.37r$.
Now, since $|R| < \min(\outdeg(x),\indeg(y))\leq \outdeg(x)$, we have $|V_x|\geq 1$.
Let $z$ be an arbitrary alternative in $V_x$, and let $E(z,D(z))$ denote the set of edges between $z$ and its dominion $D(z)$.
Let $V_z\subseteq D(z)$ be the set of alternatives that $z$ can reach directly after $R$ is removed.
Repeating our argument for $V_y$, we get $|V_z|\geq 0.37r$.

The rest of the argument in Case~2 mirrors that of Case~1, with $V_z$ taking the role of $V_x$.
If $V_z\cap V_y\neq\emptyset$, then $x$ can reach $y$ via a path of length three even after the removal of $R$, a contradiction.
Else, $V_z$ and $V_y$ are disjoint.
Since event (iii) occurs, there are at least $0.004r^2$ edges directed from an alternative in $V_z$ to an alternative in $V_y$, so at least one of these edges is not included in $R$.
It follows that after $R$ is removed, there still exists a path of length four from $x$ to $y$, a contradiction.

It follows that we reach a contradiction in both cases, and the proof is complete.
\end{proof}

\section{Experiments}
\label{sec:experiments}

In order to better understand how \mov{} values of tournament solutions behave in practice, we conducted computational experiments using randomly generated tournaments. For the sake of diversity of the generated instances, we implemented six different stochastic models to generate tournaments. 
To make our study comparable to the experiments presented by \citet{BrandtSe16}, we selected a similar set of stochastic models and parameterizations. 

Given a tournament solution $S$ and a tournament $T$, we are interested in 
\begin{itemize}
\item the number $|\arg \max_{x \in V(T)} \mov_S(x,T)|$ of alternatives with maximum $\mov_S{}$ value, and
\item the number $|\{\mov_S(x,T): x \in V(T)\}|$ of different \mov{} values taken by all alternatives in the tournament.
\end{itemize}
The first value directly measures the discriminative power of the refinement of $S$ that only selects alternatives with a maximal $\mov_S$ value, whereas the second value measures more generally the ability of the \mov{} notion to distinguish between the alternatives in a tournament. 

\paragraph{Set-up}
We used six stochastic models to generate preferences: 
the uniform random model (which was used in \Cref{sec:probabilistic}),
two variants of the \emph{Condorcet noise model} (with and without voters), 
the \emph{impartial culture} model,
the P\'olya-Eggenberger \textit{urn model}, and the
\textit{Mallows} model. 

We first describe two models that directly create tournaments without creating a preference profile of a set of voters beforehand. 
The simplest way to create a tournament is to start with a complete undirected graph and decide the direction of each edge independently by flipping a fair coin---we call this the \emph{uniform random model}. 
The \emph{Condorcet noise model} is similar but slightly more biased: Here, we start with an initial order $\succ$ on the alternatives and some fixed parameter $1/2\leq p \leq 1$. Then, for two different alternatives $a$ and $b$ where $a \succ b$, the edge $(a,b)$ is included in the tournament with probability $p$; otherwise, the edge $(b,a)$ is included. 

For the remaining four stochastic models, we first create a preference profile of a set of voters, i.e., each voter has a complete and antisymmetric (but not necessarily transitive) preference relation over the set of alternatives. 
Like \citet{BrandtSe16}, we set the number of voters to $51$. 
Then, we consider the majority relation, which induces a tournament when there are an odd number of voters. 
One way to generate a preference profile is similar to the previously discussed Condorcet noise model, i.e., the \emph{Condorcet noise model with voters}. 
Again, we start with a random order $\succ$ on the alternatives and some fixed parameter $1/2\leq p \leq 1$. Now, the preference relation for each voter is created just as we created the tournament in the Condorcet noise model, before we take the majority among these preferences.

The other three stochastic models all assign a ranking to each voter, i.e., the individual preference relations are now required to be transitive. 
In the \emph{impartial culture} model, for each voter, the probability of obtaining a ranking is uniformly distributed over all possible rankings and thereby independent of the selection for other voters. 
A similar but more correlated way to select rankings is the P\'olya-Eggenberger \emph{urn model}, suggested by \citet{Berg85a}. 
For this model, imagine an urn which initially contains each possible ranking exactly once. 
Then, after each voter has drawn a ranking from the urn, the ranking is placed back together with $\alpha$ copies of it. Naturally, the parameter $\alpha$ controls the degree of similarity among the voters. 
Lastly, we also applied the \emph{Mallows} model \citep{Mall57a}. 
Assuming a ground truth ranking, the probability that a voter is assigned a particular ranking in this model grows when the Kendall tau distance to the ground truth ranking becomes smaller. 
The dispersion parameter $\phi \in (0,1]$ controls the concentration of the probability mass on rankings that are close to the ground truth ranking. 
More precisely, $\phi=1$ corresponds to the uniform distribution over all possible rankings, while $\phi \rightarrow 0$ concentrates more probability on the ground truth ranking and rankings close to it.

For each stochastic model and each number of alternatives $n \in \{5,10,15,20,25,30\}$, we sampled $100$ tournaments. 
Using the methods described by \citemov, we implemented algorithms to calculate the \mov{} values for \cp, \uc, $3$-kings, and \tc.
Due to their computational intractability, we did not implement procedures to calculate the \mov{} values for \ba and $k$-kings for $k \geq 4$. 

The experiments were carried out on a system with 1.4 GHz Quad-Core Intel Core i5 CPU, 8GB RAM,
and macOS 10.15.2 operating system. 
The software was implemented in Python 3.7.7 and the libraries networkx 2.4, matplotlib 3.2.1, numpy 1.18.2, and pandas 1.0.3 were used.
For implementing the Mallows and urn models, we utilized implementations contributed by \citet{MaWa13a}.
The code for our implementation can be found at \url{http://github.com/uschmidtk/MoV}.

\begin{figure*}[!ht]
\begin{minipage}{\textwidth}
\center{\textbf{Average Size of Maximum Equivalence Class}}
\end{minipage}
\vspace{.5cm}
\begin{minipage}{0.5\textwidth}
\includegraphics[width=\textwidth]{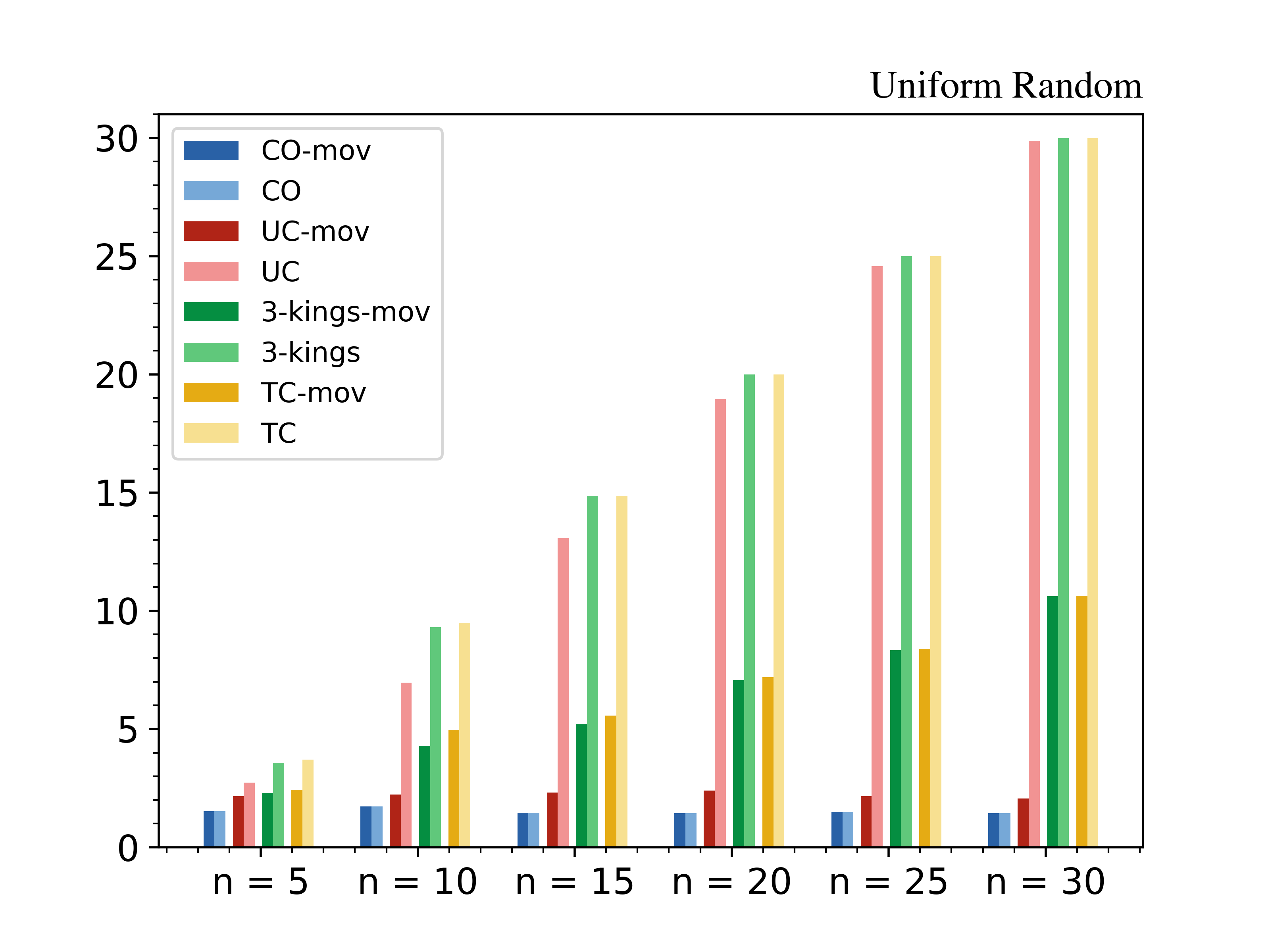}
\end{minipage}\begin{minipage}{0.5\textwidth}
\includegraphics[width=\textwidth]{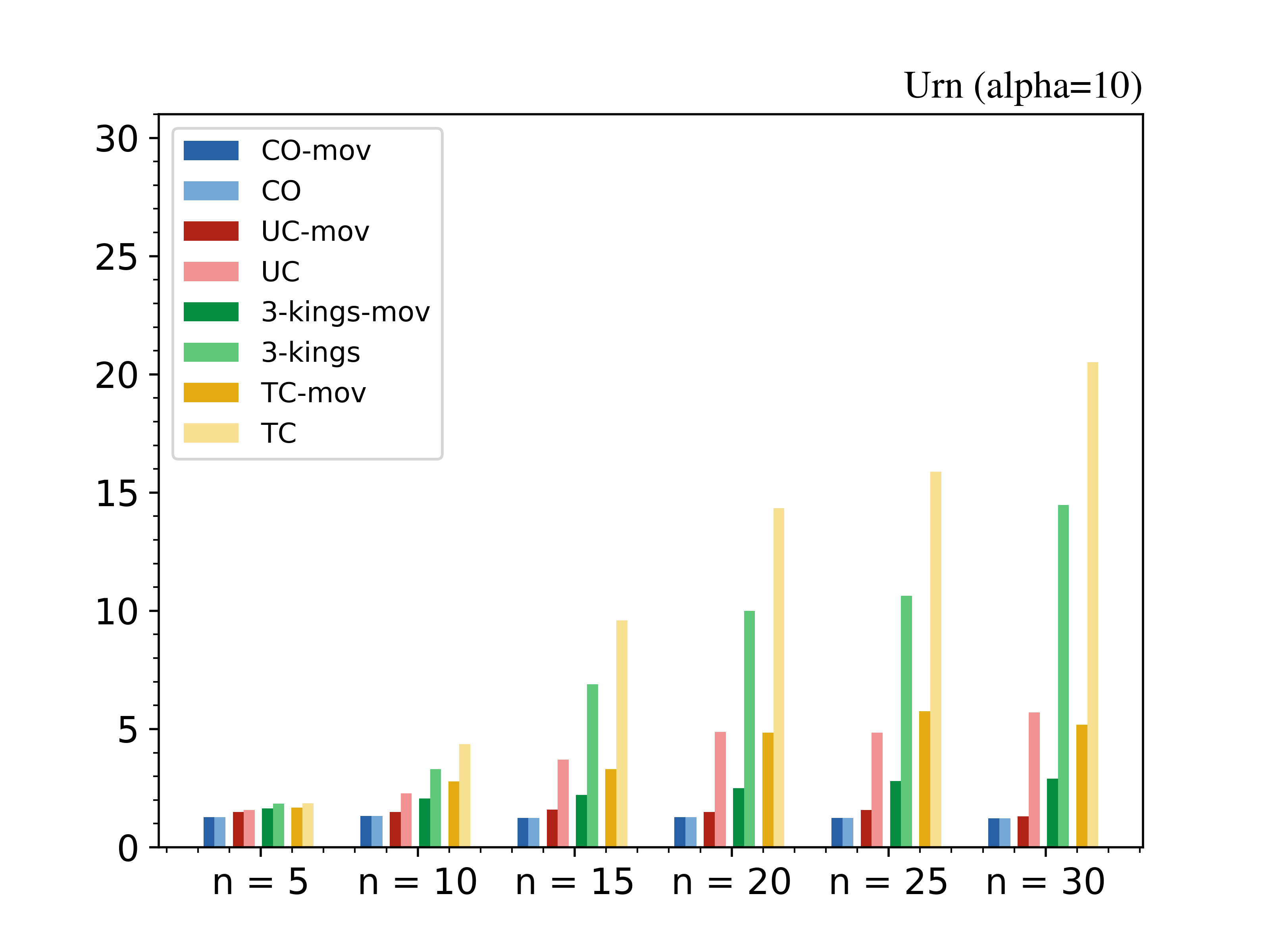}
\end{minipage}

\begin{minipage}{0.5\textwidth}
\includegraphics[width=\textwidth]{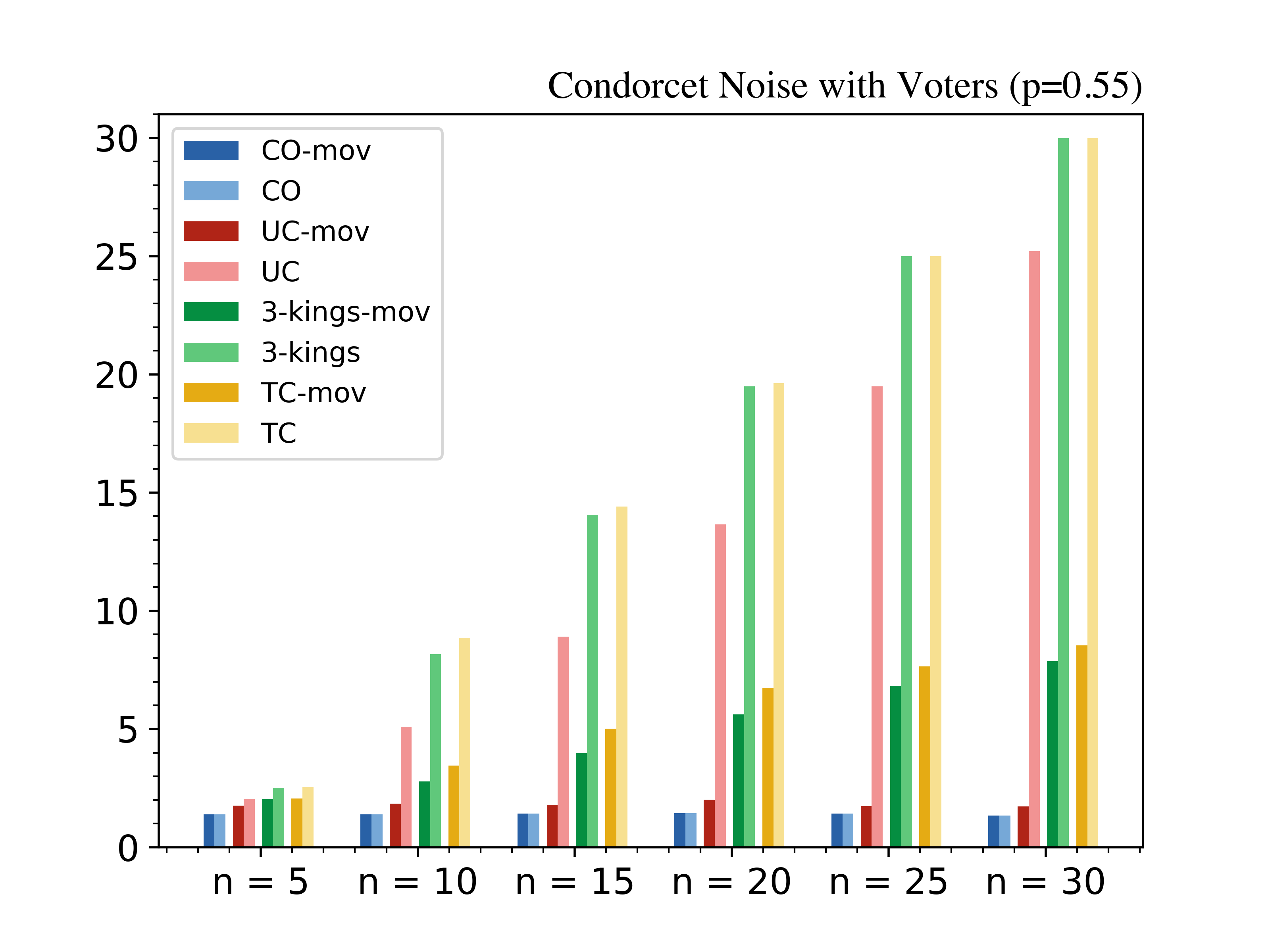}
\end{minipage}
\begin{minipage}{0.5\textwidth}
\includegraphics[width=\textwidth]{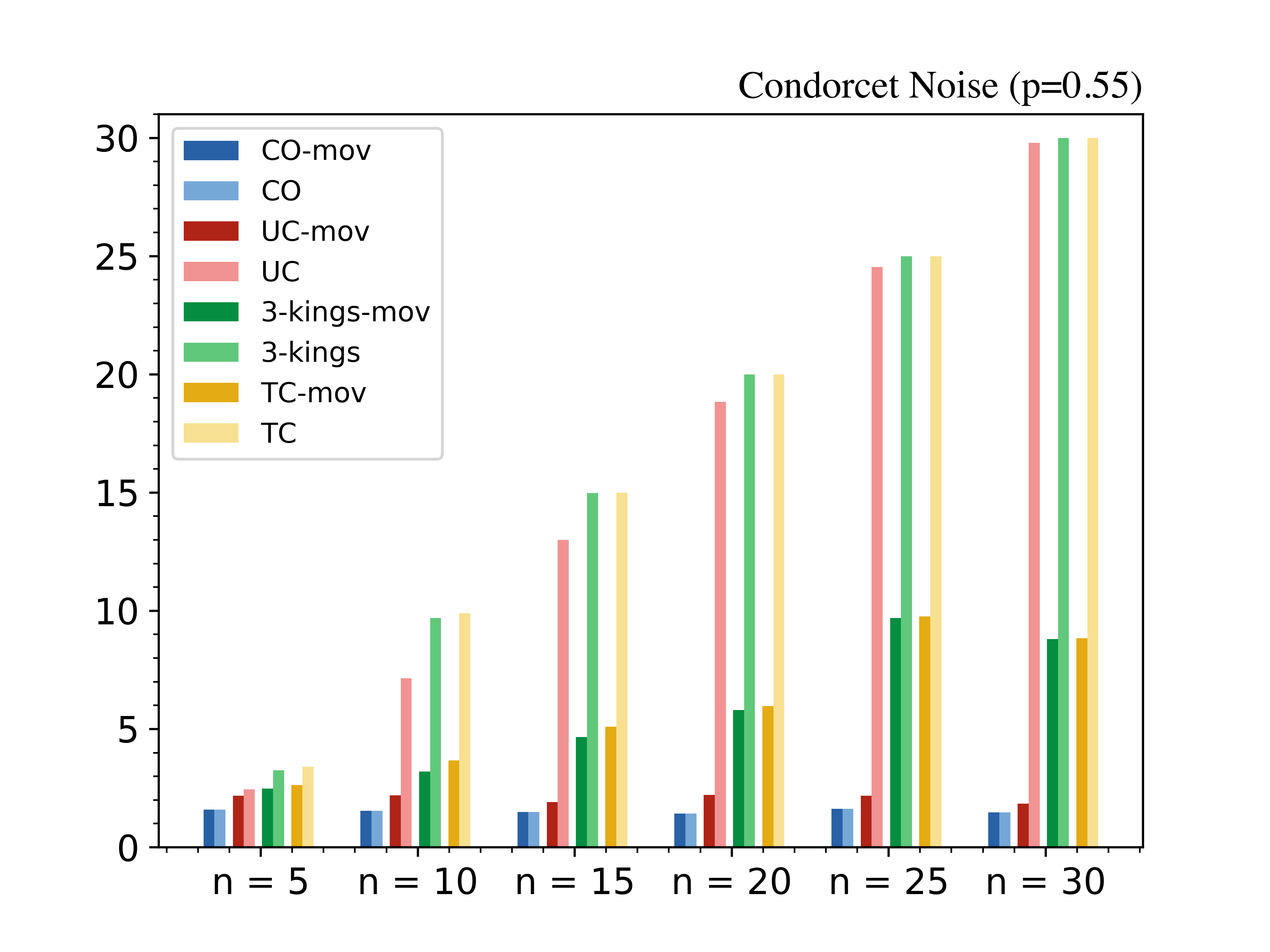}
\end{minipage}

\begin{minipage}{0.5\textwidth}
\includegraphics[width=\textwidth]{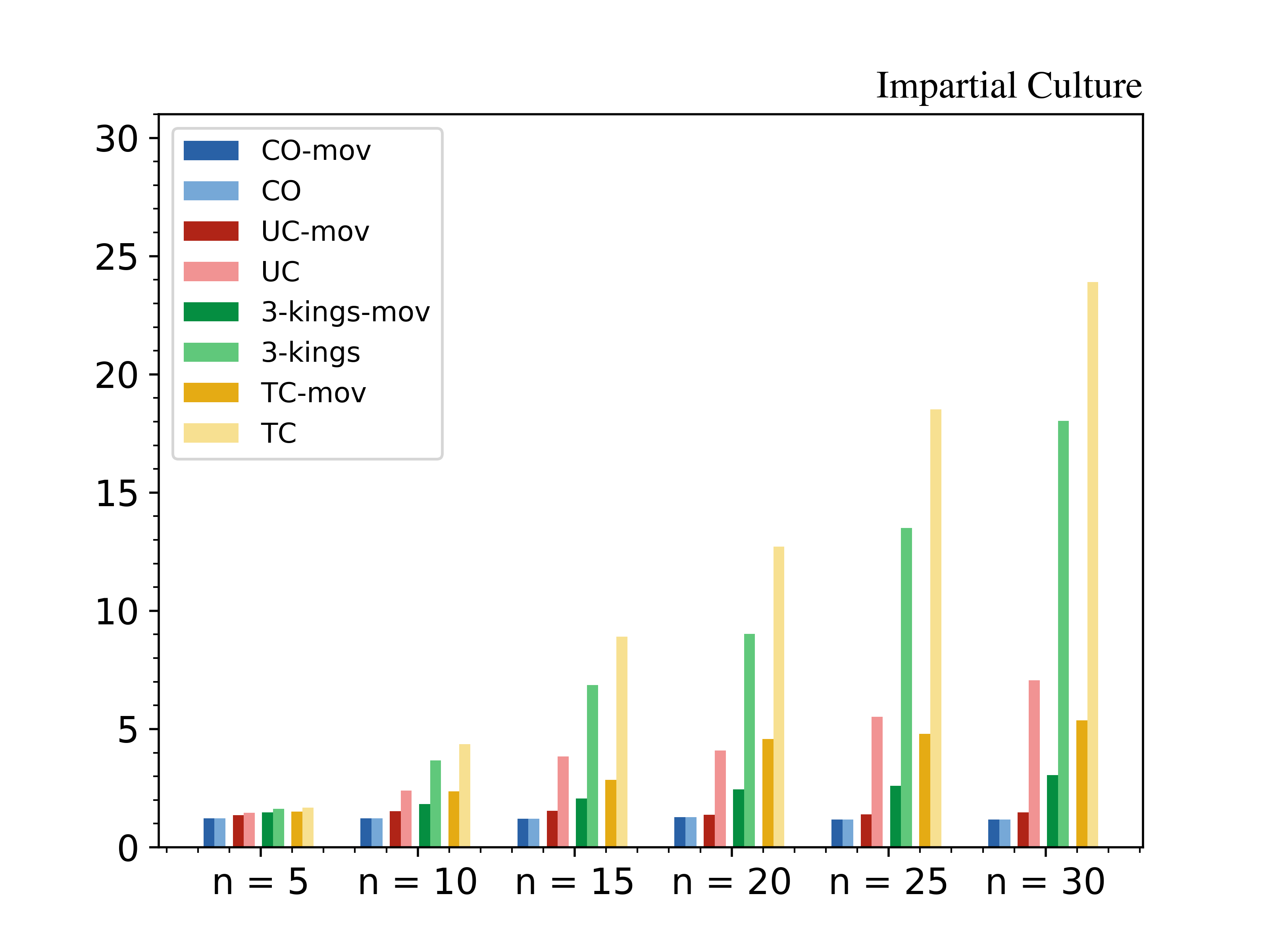}
\end{minipage}
\begin{minipage}{0.5\textwidth}
\includegraphics[width=\textwidth]{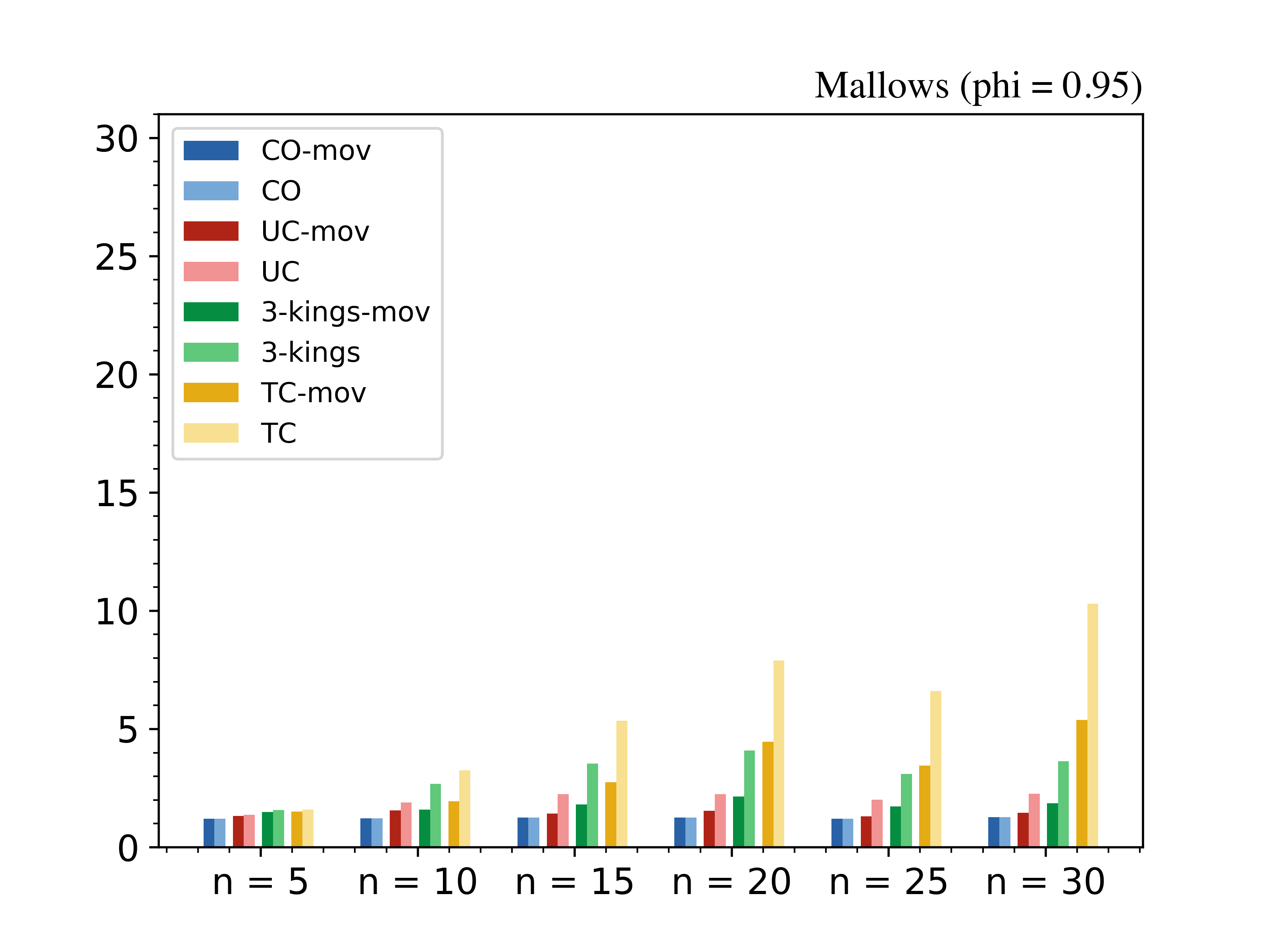}
\end{minipage}
\caption{The illustrations show the average number of alternatives with maximum \mov{} value for different stochastic models, tournament solutions and sizes. 
For comparison, the average size of the entire winning set of the corresponding original tournament solution is depicted by a lighter shade.}\label{fig:experiments-max}
\end{figure*}

\begin{figure*}[!ht]
\begin{minipage}{\textwidth}
\center{\textbf{Average Number of Unique Values}}
\end{minipage}
\vspace{.5cm}
\begin{minipage}{0.5\textwidth}
\includegraphics[width=\textwidth]{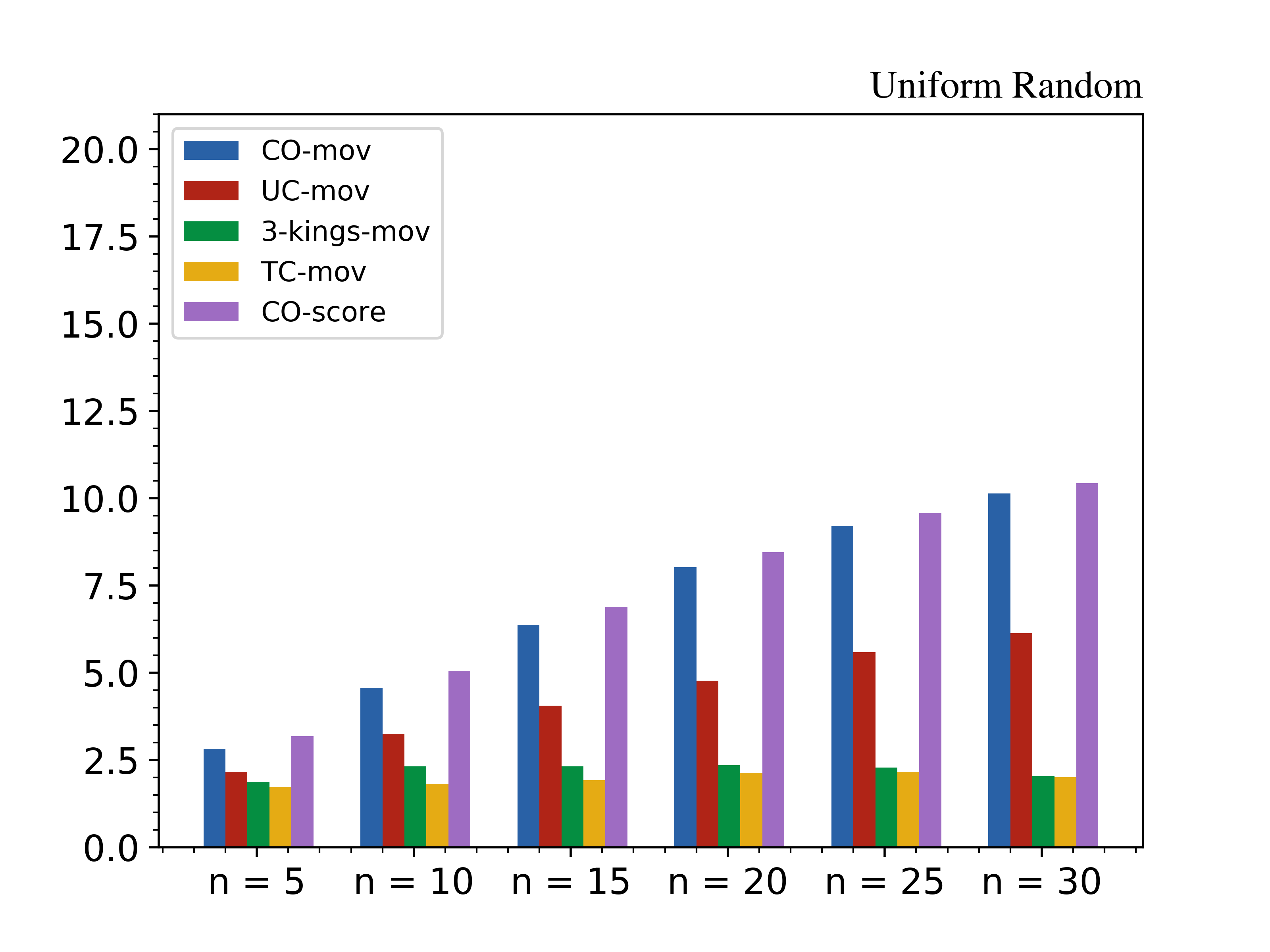}
\end{minipage}\begin{minipage}{0.5\textwidth}
\includegraphics[width=\textwidth]{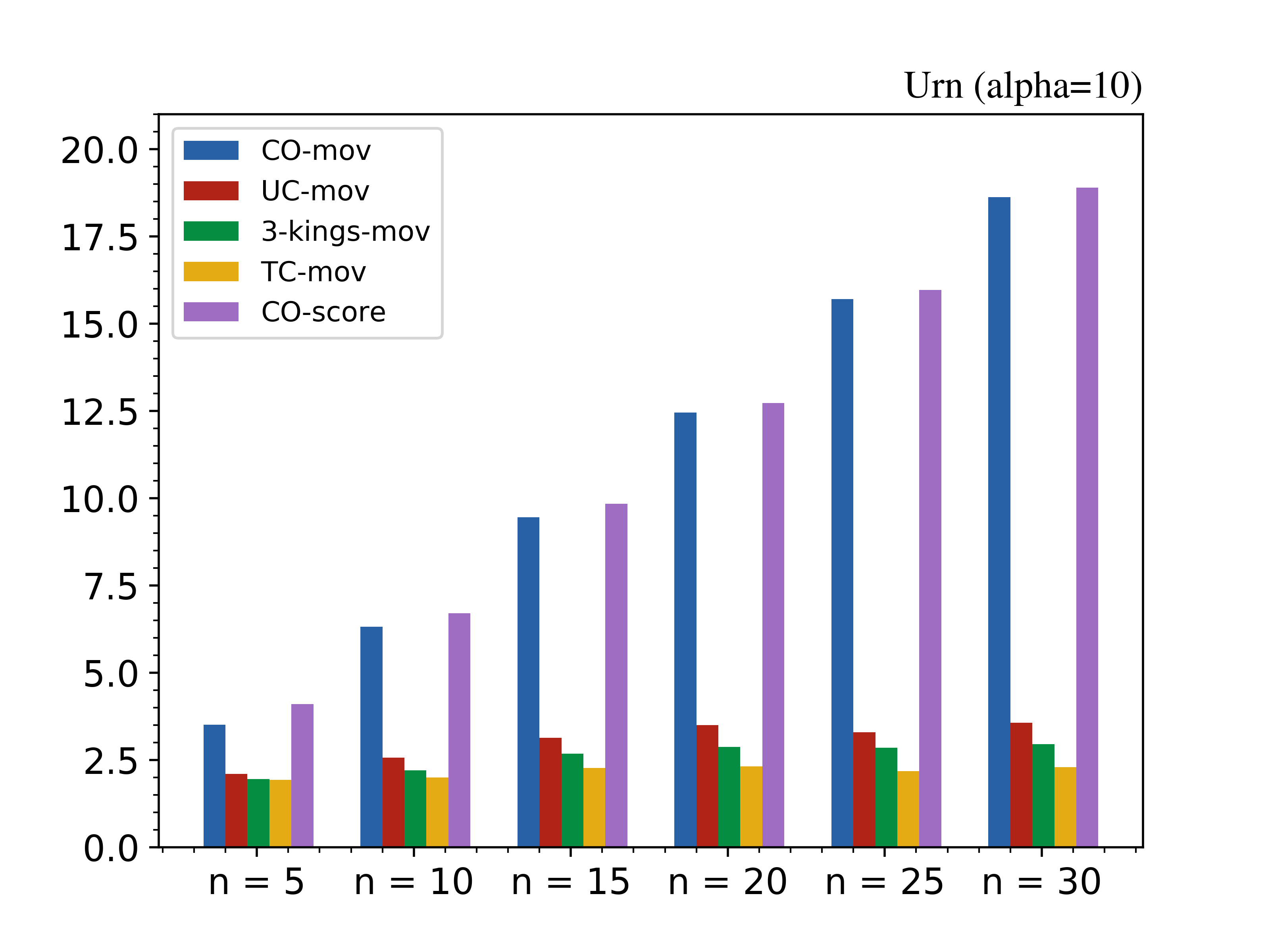}
\end{minipage}

\begin{minipage}{0.5\textwidth}
\includegraphics[width=\textwidth]{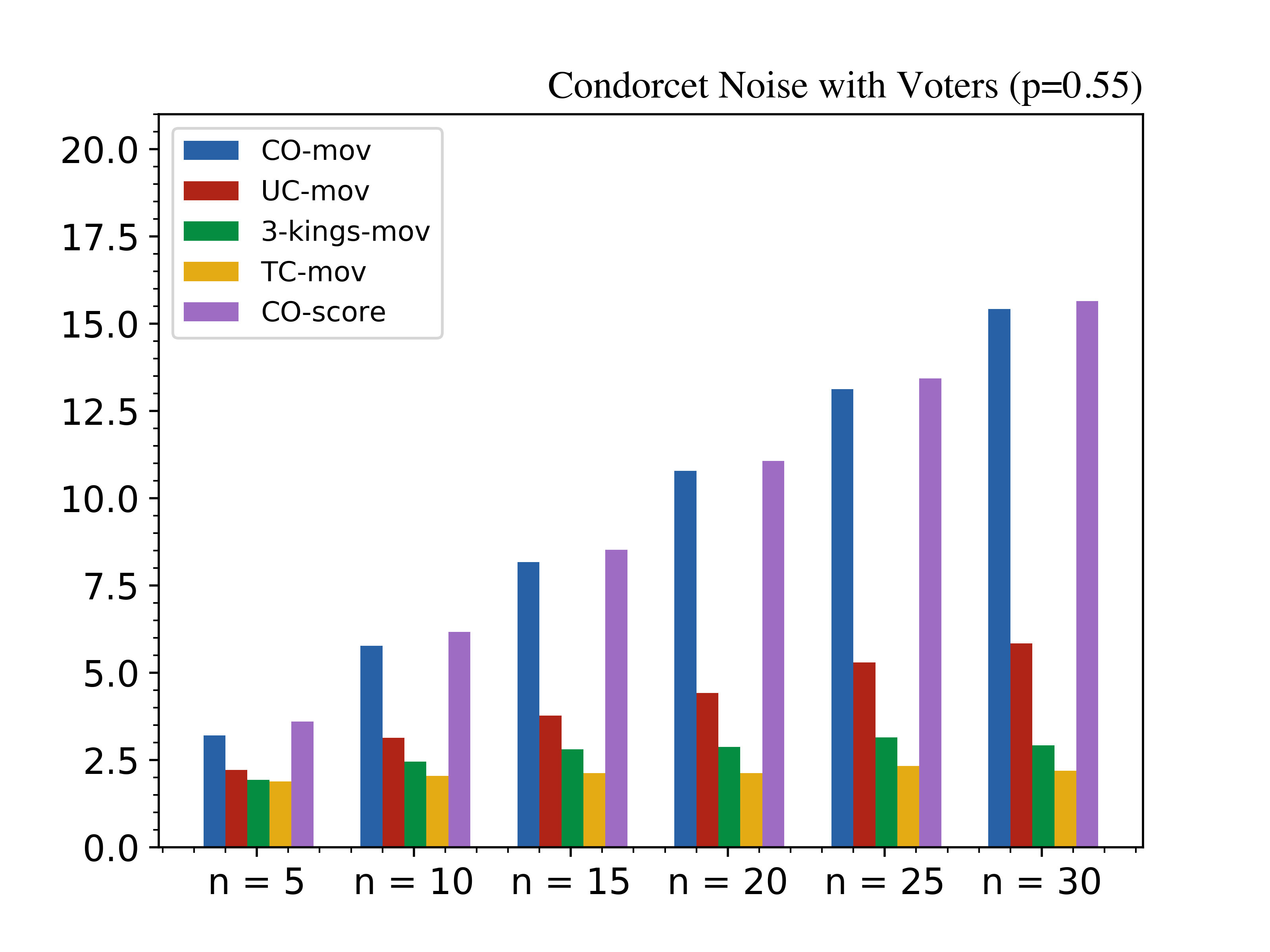}
\end{minipage}
\begin{minipage}{0.5\textwidth}
\includegraphics[width=\textwidth]{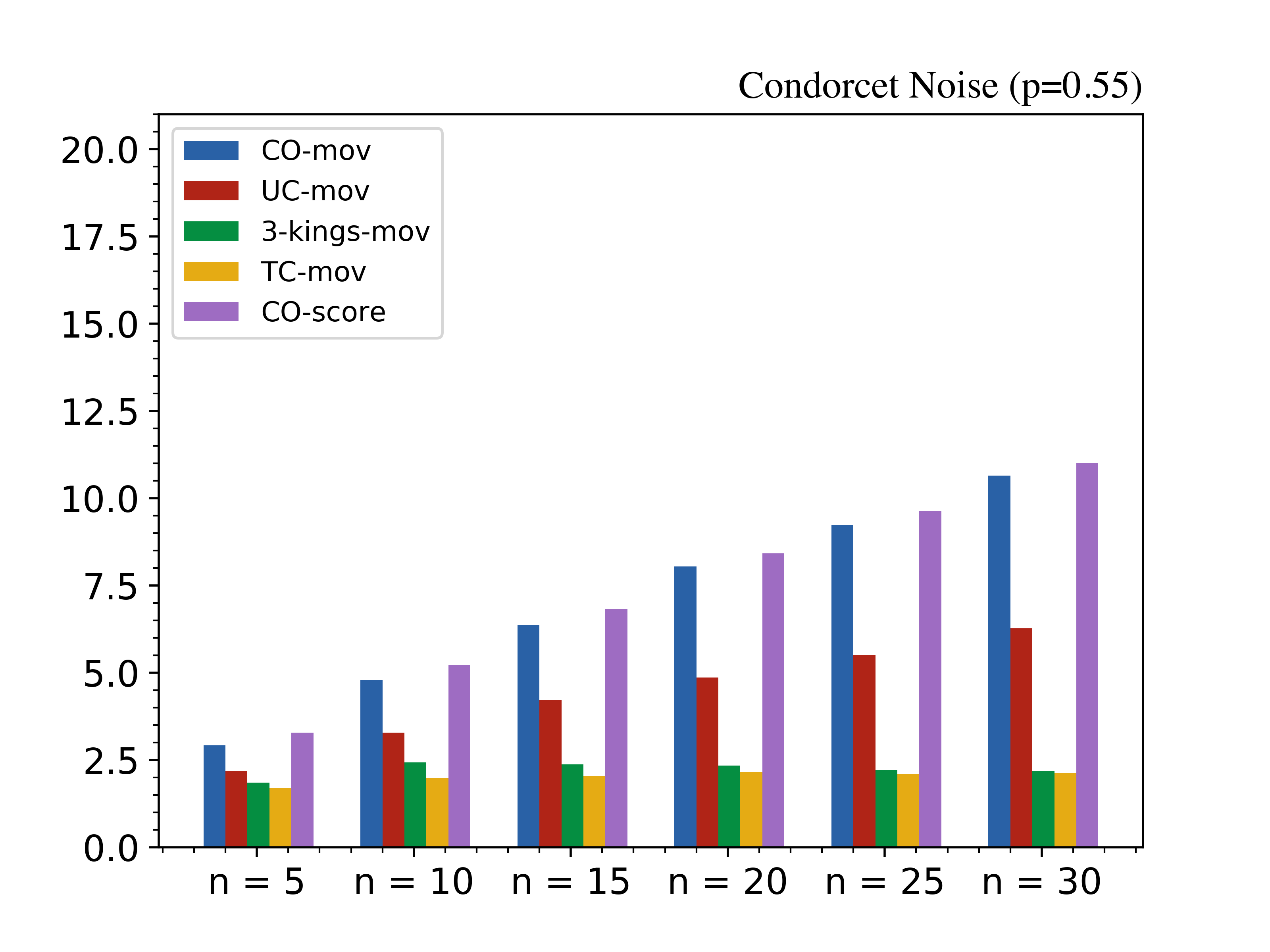}
\end{minipage}

\begin{minipage}{0.5\textwidth}
\includegraphics[width=\textwidth]{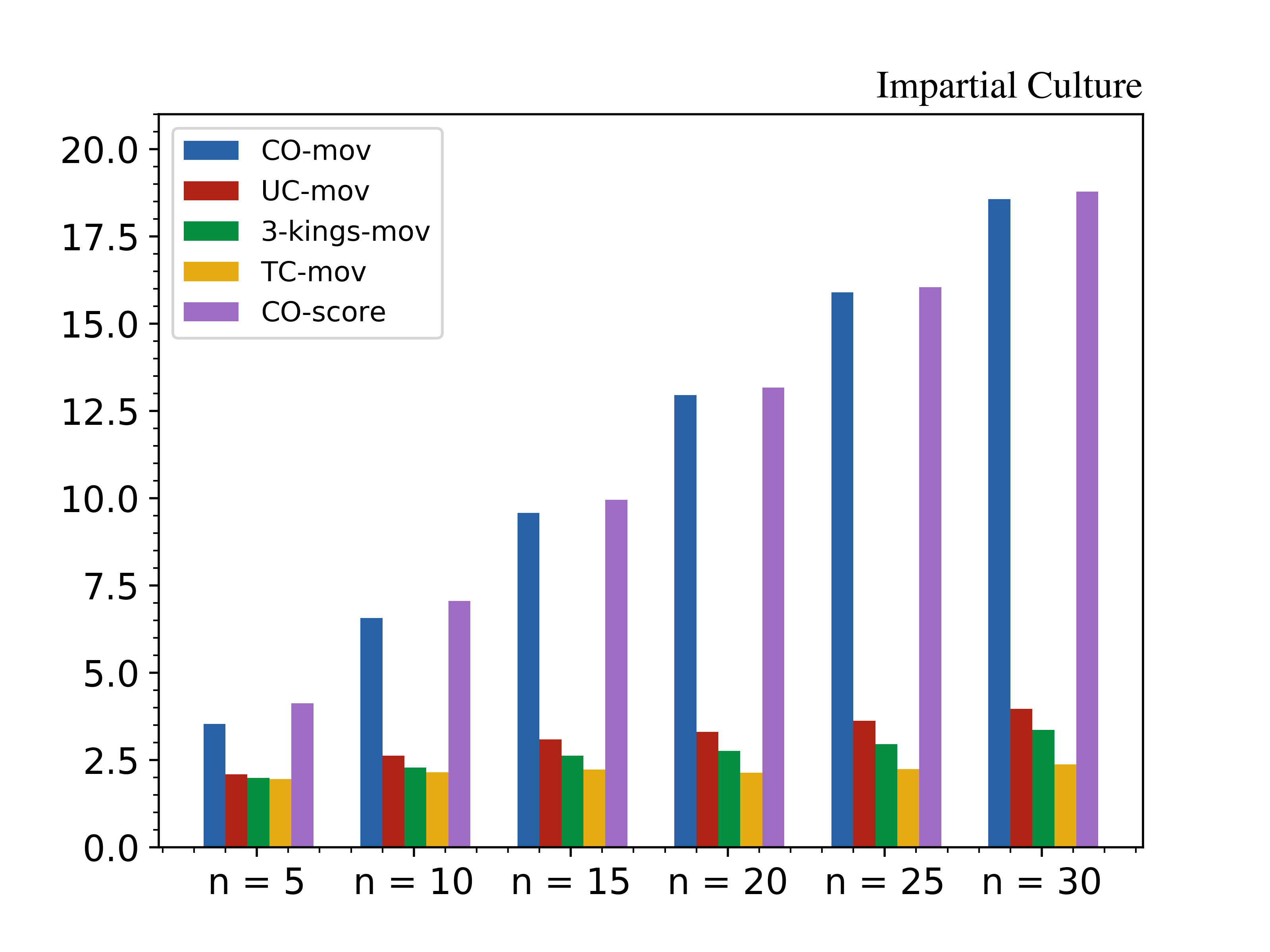}
\end{minipage}
\begin{minipage}{0.5\textwidth}
\includegraphics[width=\textwidth]{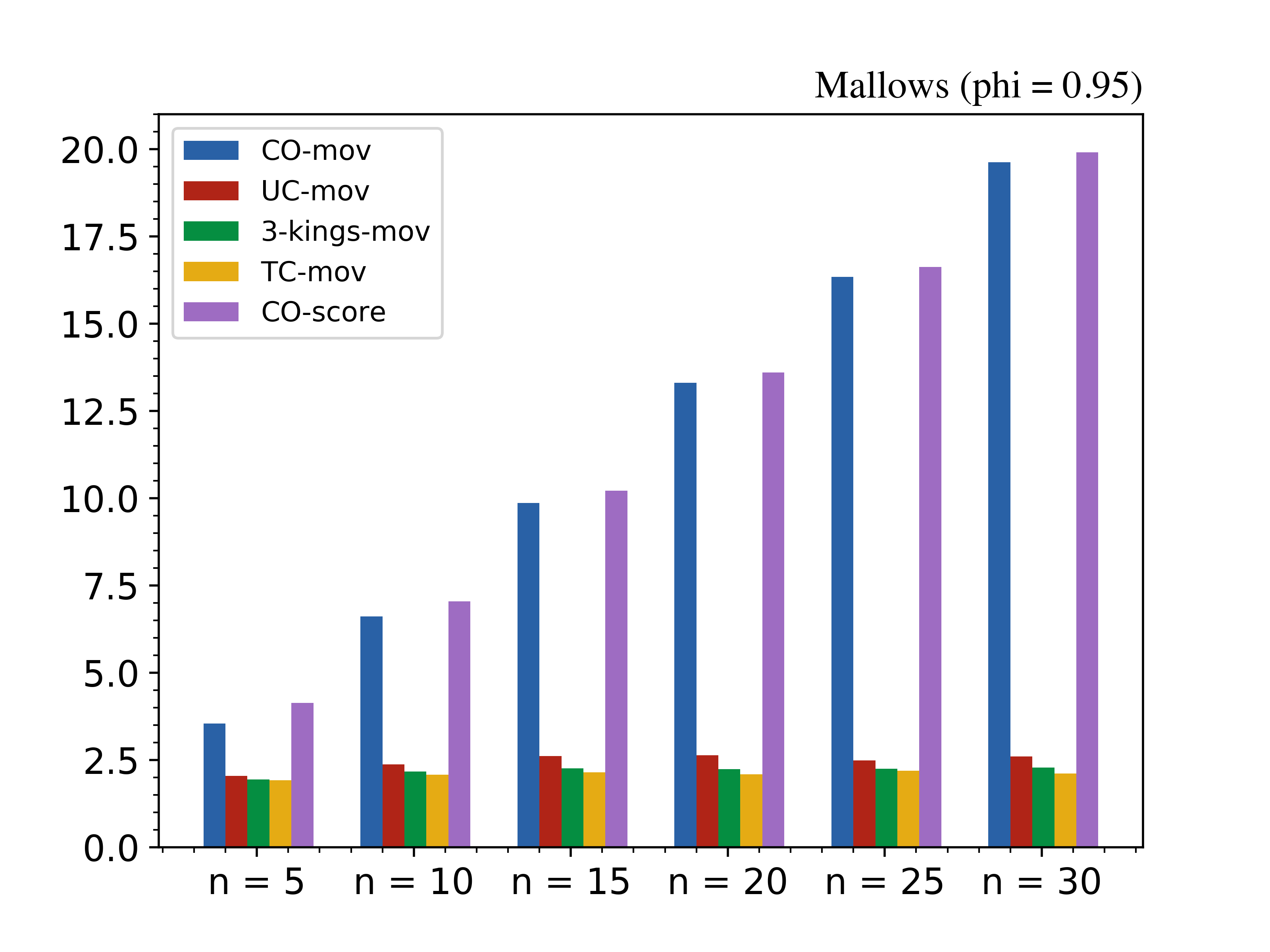}
\end{minipage}
\caption{The illustrations show the average number of unique \mov{} values for different stochastic models, tournament solutions and sizes.  
For comparison, the average number of unique Copeland scores is shown in violet. 
} \label{fig:experiments-unique}
\end{figure*}

\paragraph{Results}
Figure \ref{fig:experiments-max} depicts the average size of the set of alternatives with maximum \mov{} value, and Figure \ref{fig:experiments-unique} shows the average number of unique \mov{} values. 

\paragraph{Observations}

The first observation we make is that $\mov_{3\text{-kings}}$ behaves rather similarly to $\mov_{\tc}$: the average number of alternatives with maximum \mov{} grows with increasing $n$, and this number is on average slightly less than half of the number of $3$-kings and \tc winners, respectively. 
However, this ratio becomes smaller for tournaments where the number of $3$-kings or \tc winners is already large. 
For example, when we only consider tournaments where the number of \tc{} winners is greater than $10$, only one-third of the \tc winners have a maximum $\mov_{\tc}$ value on average; the same holds for $3$-kings. 
However, a more detailed look at the experimental results show that for both $3$-kings and \tc, the set of alternatives with maximum $\mov{}$ consists of only one alternative in around 73\%  of all instances, while in the remaining instances this set is typically large. 
This particular behavior for \tc and the uniform random model can be explained by \Cref{thm:tc-high-probability}: With high probability, the \mov{} values for \tc winners follow a specific formula based on the degrees, which leads to the set of alternatives with maximum \mov{} containing either a single alternative or a large number of alternatives in most cases.\footnote{Indeed, if there is a unique Copeland winner, that winner will be the unique alternative with the largest \mov{} according to the formula. Otherwise, for several alternatives (including the Copeland winners), it can be the case that their \mov{} is equal to $\indeg(y)$ for a Copeland winner $y$.} 
Our experiments show that this behavior is also present in tournaments generated by other stochastic models as well as for $3$-kings; formalizing the behavior theoretically is an interesting future direction.

Our second main observation is that $\mov_{\uc}$ behaves quite differently from $\mov_{3\text{-kings}}$ and $\mov_\tc$. Most importantly, the number of \uc winners with maximum $\mov_{\uc}$ does \textit{not} increase with a growing number of alternatives, but remains more or less constant for each stochastic model. 
For the uniform random model and the Condorcet noise models, this value is around $2$, while it is roughly $1.4$ for Mallows, the urn model, and the impartial culture model. 
As can be seen in \Cref{fig:experiments-max}, the set of alternatives maximizing $\mov_{\uc}$ is almost as discriminative as the Copeland set (all of whose alternatives maximize $\mov_{\cp}$).
However, we observe in \Cref{fig:experiments-unique} that the number of unique values of the Copeland score is notably higher than that of $\mov_{\uc}$. 
The latter is particularly low for models which tend to create tournaments with small \uc, including Mallows, impartial culture and the urn model. 
Both of these effects can be explained by the observation that $\mov_{\uc}$ is significantly better at distinguishing between \uc winners than it is at distinguishing between \uc non-winners.\footnote{\citemov{} showed that the smallest $\mov_{\uc}$ value in a tournament is bounded below by $-\lceil\log_2(n)\rceil$, and that this bound is asymptotically tight.
In our experiments, we observed that in most generated tournaments, the smallest $\mov_{\uc}$ value is much higher than this lower bound, namely either $-1$ or $-2$.} 
As a consequence, tournaments with a small uncovered set generally give rise to a small number of unique $\mov_{\uc}$ values.

\section{Discussion}

The recently introduced notion of margin of victory (\mov) provides a generic framework for refining any tournament solution. 
In this paper, we have contributed to the understanding of the \mov{} by providing not only structural insights but also experimental evidence regarding the extent to which it refines winner sets in stochastically generated tournaments. 
We established that the $\mov$ is consistent with the covering relation for all considered tournament solutions.
Moreover, we have identified a number of tournament solutions, including the uncovered set and the Banks set, for which the corresponding \mov{} values give insights into the structure of the tournament that go beyond simply comparing the outdegrees of alternatives, as witnessed by the fact that these \mov{} functions do not satisfy degree-consistency. 

\newcommand{\maxmovuc}{max-$\mov_{\uc}$\xspace}

In our experiments, the \mov{} function corresponding to the uncovered set ($\uc$) stands out for its discriminative power: not only is the set \maxmovuc (containing all alternatives with maximal $\mov_{\uc}$ score) consistently small, but the number of distinct $\mov_{\uc}$ scores is also relatively high in general. 
It is consequently tempting to suggest \maxmovuc as a new tournament solution. Besides its discriminative power and structural appeal, it can be computed efficiently (Brill et al. 2020) and inherits Pareto optimality from the uncovered set, which it refines \cite{BGH14a}. However, a thorough axiomatic analysis of \maxmovuc, as well as max-$\mov_{S}$\xspace for other tournament solutions $S$, is still outstanding. 

For tournaments with several highest-scoring alternatives (\ie several alternatives whose minimal destructive reversal sets are of the same maximal size), the \textit{number} of distinct destructive reversal sets may serve as a further criterion for distinguishing between winners. 
It would therefore be interesting to determine the complexity of computing such numbers, and also to study the size of the resulting refined winner set experimentally in future work.

\section*{Acknowledgments}

This work was partially supported by the Deutsche
Forschungsgemeinschaft under grant BR 4744/2-1, by
the European Research Council (ERC) under grant number
639945 (ACCORD), and by an NUS Start-up Grant. We would like to thank the anonymous reviewers for their valuable feedback.


\bibliography{abb,main}


\appendix

\section{Additional Results}

\subsection{Cover-Consistency}
\label{app:cover-consistency}

In the following two propositions, we show that neither monotonicity nor transfer-monotonicity can be dropped from the condition of Lemma~\ref{lem:cover-consistency}.
This also means that neither of the two properties implies the other.

\begin{proposition}
\label{prop:monotonic-cover-consistency}
There exists a monotonic tournament solution $S$ such that $\mov_S$ does not satisfy cover-consistency.
\end{proposition}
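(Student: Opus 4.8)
The plan is to construct a monotonic tournament solution $S$ together with a tournament $T$ in which some alternative $x$ covers another alternative $y$, yet $\mov_S(x,T) < \mov_S(y,T)$. The natural strategy is to start from a standard monotonic solution (say the uncovered set or the top cycle) and modify its behavior on a single small tournament — or on a family of isomorphism classes of small tournaments — in a way that preserves monotonicity but breaks cover-consistency. The key tension to exploit: monotonicity only constrains how $S$ changes when the dominion of a fixed alternative grows; it says nothing that would force $\mov_S(x)\ge\mov_S(y)$ when $x$ covers $y$, since covering is a relation between \emph{two} alternatives in a \emph{fixed} tournament.

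Concretely, I would look for a fixed tournament $T_0$ of small size in which $x$ covers $y$, and then define $S$ to agree with some monotonic baseline everywhere except that on the isomorphism class of $T_0$ (and whatever other tournaments are forced by monotonicity to change along with it) we remove $x$ from the choice set — or more robustly, we \emph{add} $y$ to the choice set while keeping $x$ in, then arrange the neighboring tournaments so that reversing a single edge incident to $x$ drops $x$ out while $y$ requires more reversals. The cleanest version: design $S$ so that $x\in S(T_0)$ with $\mov_S(x,T_0)=1$ (some single edge reversal, necessarily one that shrinks $D(x)$, kicks $x$ out), while $y\in S(T_0)$ with $\mov_S(y,T_0)\ge 2$. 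Monotonicity is only violated if enlarging $D(x)$ (or equivalently, making $x$ lose an in-edge) ever drops $x$; so I must check that the single bad edge for $x$ is one whose reversal \emph{removes} an out-edge of $x$, which is consistent with monotonicity, and that no in-edge reversal at $x$ ever removes $x$. Verifying monotonicity globally for the patched solution — i.e., checking all edge reversals in all affected tournaments — is the main obstacle, because patching $S$ on $T_0$ forces constraints on $S$ at tournaments reachable from $T_0$ by one reversal.

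A likely concrete realization: take $S$ to be the top cycle on all tournaments, except on the $3$-cycle-free... — better, take a size-$4$ or size-$5$ tournament where $x$ covers $y$, declare $S$ on that specific tournament to be $\{$everything except $x\}$ if that is monotonicity-safe, or declare $S$ to select all alternatives except on that one class where it selects $V\setminus\{x\}$. Then $\mov_S(x,T_0)$ could even be undefined/negative-adjacent, but more carefully I want $x\in S$ on $T_0$ with low destructive cost. The honest approach is: pick the covering pair, identify which single edge reversal I want to be a DRS for $x$ (it must reduce $\outdeg(x)$, else monotonicity blocks it), then define $S(T^e)=S(T)\setminus\{x\}$ for that particular $e$, propagate the forced changes, and confirm no reversal ever adds $x$ back in a way that contradicts monotonicity, i.e. confirm that along every ``dominion-enlarging'' chain $x$'s membership is nondecreasing. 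I expect the construction to require only a handful of tournaments and a short exhaustive check; the genuine difficulty is bookkeeping the monotonicity closure of the patch rather than any deep idea.
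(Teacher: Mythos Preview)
Your strategy---start from a baseline monotonic solution and patch it on a small isomorphism class---is different from the paper's approach and, as you yourself note, leaves the hard part (propagating the patch through the ``monotonicity closure'' while respecting isomorphism invariance) unresolved. You never actually exhibit a concrete $S$ and verify monotonicity, so as written this is a plan rather than a proof.

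The paper sidesteps the bookkeeping entirely by defining $S$ via a single uniform rule rather than by local patching: for tournaments of size at least four, exclude an alternative $x$ if and only if $x$ is dominated by some alternative of outdegree exactly~$1$. Monotonicity is then a one-line check (if $x$ is excluded because some $y$ of outdegree~$1$ dominates it, and $x$ acquires an additional dominator, then $y$ still has outdegree~$1$ and still dominates $x$). For the violation, take a regular tournament $T'$ of odd size $\ge 7$ together with a Condorcet loser $x$ dominated by all of $T'$; then $S(T)=V(T)$, every $y\in T'$ has $\mov_S(y,T)=1$ (reverse $(y,x)$ to give $x$ outdegree~$1$), while $\mov_S(x,T)\ge 2$ since every dominator of $x$ currently has outdegree $\ge 4$. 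Thus $y$ covers $x$ but $\mov_S(y,T)<\mov_S(x,T)$.

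The contrast worth absorbing: a globally defined rule makes monotonicity a property you can verify once and for all from the definition, whereas your patching approach forces you to chase monotonicity constraints across every tournament one edge-reversal away from the patched class (and their isomorphic copies). The paper's construction also has a pleasant twist you did not anticipate---it is the \emph{covered} alternative (indeed the Condorcet loser) that ends up with the larger \mov, not the covering one.
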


\begin{proof}
Let $S$ be a tournament solution such that an alternative $x$ is excluded if and only if it is dominated by an alternative of outdegree $1$ and the tournament has size at least four.\footnote{The latter condition is needed to ensure that the choice set is always nonempty.}
Suppose that an excluded alternative $x$ is dominated by an alternative $y$ of outdegree $1$. 
If $x$ becomes dominated by an additional alternative, $x$ remains dominated by $y$, whose outdegree is still $1$, so it remains excluded. Hence $S$ is monotonic.

To see that $\mov_S$ does not satisfy cover-consistency, consider a tournament $T$ composed of a regular tournament $T'$ of size $2k+1\geq 7$, along with an additional alternative $x$ which is dominated by all alternatives in $T'$.
Note that $S(T) = V(T)$.
For any $y\in T'$, the edge $(y,x)$ alone constitutes a DRS for $y$, so $\mov_S(y,T) = 1$.
On the other hand, since every alternative in $T'$ has outdegree at least $3$, in order for $x$ to be dominated by an alternative of outdegree $1$, at least two edges need to be reversed.
This means that $\mov_S(x,T) > \mov_S(y,T)$ even though $y$ covers $x$.
\end{proof}

\begin{proposition}
\label{prop:transfer-cover-consistency}
There exists a transfer-monotonic tournament solution $S$ such that $\mov_S$ does not satisfy cover-consistency.
\end{proposition}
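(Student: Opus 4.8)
The plan is to exhibit a solution that coincides with the ``all alternatives'' rule except in the presence of a Condorcet loser. Concretely, I would let $S$ be the tournament solution that removes an alternative $x$ from the choice set precisely when the tournament contains a Condorcet loser $c$ and $D(x)=\{c\}$ (that is, $x$ dominates $c$ and nothing else), and chooses every other alternative. This is a bona fide tournament solution: the definition refers only to the dominance relation, hence is isomorphism-invariant; and since a Condorcet loser has outdegree $0$ it is never removed, so $S(T)\neq\emptyset$ always (and $S(T)=V(T)$ whenever $T$ has no Condorcet loser).

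The crux is to verify transfer-monotonicity, for which I would actually prove the stronger claim that a transfer never removes its ``winner'' vertex. Fix $T$ with edges $(y,z),(z,x)\in E(T)$ and let $T'$ arise by reversing both; note that $x,y,z$ are distinct, that $\outdeg_{T'}(x)=\outdeg_T(x)+1$, $\outdeg_{T'}(y)=\outdeg_T(y)-1$, all other outdegrees are unchanged, and $\outdeg_T(y)\ge 1$ since $y\succ z$. If $\outdeg_T(x)\ge 1$ then $\outdeg_{T'}(x)\ge 2$, and as $S$ only ever removes outdegree-$1$ alternatives, $x\in S(T')$. Otherwise $\outdeg_T(x)=0$, so $x$ is the unique Condorcet loser of $T$; then in $T'$ every alternative other than $y$ still has positive outdegree, so the only possible Condorcet loser of $T'$ is $y$; but the edge between $x$ and $y$ is neither $(y,z)$ nor $(z,x)$ and is therefore unchanged, so from $y\succ x$ in $T$ we get $y\succ x$ in $T'$, i.e.\ $x$ does not dominate $y$---hence $x$ does not dominate the Condorcet loser of $T'$ (if any) and is not removed. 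Thus $x\in S(T')$ in all cases, so $S$ is transfer-monotonic. The delicate point is exactly this last case, where the transfer simultaneously destroys the old Condorcet loser and may create a new one at $y$; I expect this to be the main obstacle in carrying out the argument cleanly.

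Finally I would violate cover-consistency with the three-alternative tournament $T$ given by $x\succ z_0$, $w\succ x$, $w\succ z_0$. Here $z_0$ is the Condorcet loser and $D(x)=\{z_0\}$, so $x$ is removed while $z_0,w\in S(T)$, giving $\mov_S(x,T)<0<\mov_S(z_0,T)$; yet $D(z_0)=\emptyset\subseteq\{z_0\}=D(x)$, so $x$ covers $z_0$, and cover-consistency would require $\mov_S(x,T)\ge\mov_S(z_0,T)$---a contradiction. (Together with Lemma~\ref{lem:cover-consistency}, this also shows that $S$ is not monotonic; directly, reversing the edge $(x,z_0)$ makes $x$ the new Condorcet loser and forces $z_0$, previously chosen, out of $S$.)
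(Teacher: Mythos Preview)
Your construction is correct and coincides with the paper's own proof: the tournament solution you define (exclude $x$ iff there is a Condorcet loser $c$ and $D(x)=\{c\}$) is exactly the one the paper uses, stated there as ``exclude $x$ iff $\outdeg(x)=1$ and some alternative has outdegree~$0$''---the two formulations are equivalent since every alternative dominates a Condorcet loser. Your counterexample (the transitive $3$-tournament) is also the paper's, and your transfer-monotonicity argument, though split into cases on $\outdeg_T(x)$, reaches the same conclusion via the same key observation; the paper's version is slightly more direct, noting simply that whenever $\outdeg_{T'}(x)=1$ one has $D_{T'}(x)=\{z\}$ and $z$ has positive outdegree in $T'$, so $x$ cannot be excluded.
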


\begin{proof}
Let $S$ be a tournament solution such that an alternative $x$ is excluded if and only if it has outdegree $1$ and there is another alternative of outdegree $0$ (so, in particular, $D(x)$ consists only of that alternative).
Assume that a tournament $T'$ is obtained by reversing edges $(y,z)$ and $(z,x)$ in a tournament $T$.
Note that in $T'$, if $x$ has outdegree $1$, then it dominates only $z$ which does not have outdegree $0$, so $x\in S(T')$ regardless of whether $x\in S(T)$.
Hence $S$ is transfer-monotonic.

To see that $\mov_S$ does not satisfy cover-consistency, consider any transitive tournament $T$.
Let $x$ and $y$ be the alternative of outdegree $1$ and $0$, respectively.
Then $\mov_S(x,T) < 0 < \mov_S(y,T)$ even though $x$ covers $y$.
\end{proof}

\subsection{Example showing that $\mov_{\tc}$ does not always follow the formula in Theorem~\ref{thm:tc-high-probability}}
\label{app:tc-example}

The tournament consists of $\ell$ different subtournaments $T_1, \dots, T_{\ell}$, each of which corresponds to a \emph{cyclone} of size $m$, where $m$ is an odd positive integer which is sufficiently larger than $\ell$. 
A cyclone of size $m$ is a tournament in which the $m$ alternatives are arranged on a cycle and each alternative dominates its $(m-1)/2$ successors on the cycle. 
For ease of presentation, each $T_i$ has one distinguished alternative which we call $v_i$. For two alternatives $u$ and $v$ from distinct subtournaments, say $u \in V(T_j)$ and $v \in V(T_{j'})$, it holds in general that $u$ dominates $v$ if and only if $j < j'$. However, there are $\ell - 1$ exceptions: all distinguished alternatives dominate $v_1$, i.e., $(v_2,v_1), (v_3,v_1)\dots, (v_{\ell},v_1) \in E(T)$; we call these \emph{backward edges}.
See \Cref{fig:tc-diversity} for an illustration.

\begin{figure}[!ht]
\centering
\scalebox{0.55}{
\begin{tikzpicture}

\def \n {7}
\def \m {4}
\def \radius {.8cm}
\def \margin {8} 
\foreach \s in {1,...,5}
{ 
      \node[circle,fill,inner sep=2pt](\s) at ({360/\n * (\s - 1)-13}:\radius) {}; 
}

\node[circle,inner sep=2pt](7) at ({275}:\radius) {\small$\dots$}; 

\foreach \s in {6,...,10}
{ 
      \node[circle,fill,inner sep=2pt,xshift=3.5cm](\s) at ({360/\n * (\s - 6)-13}:\radius) {}; 
}

\node[circle,inner sep=2pt,xshift=3.5cm](7) at ({275}:\radius) {\small$\dots$}; 

\foreach \s in {11,...,15}
{ 
      \node[circle,fill,inner sep=2pt,xshift=7cm](\s) at ({360/\n * (\s - 11)-13}:\radius) {}; 
}

\node[circle,inner sep=2pt,xshift=7cm](7) at ({275}:\radius) {\small$\dots$}; 
\foreach \s in {16,...,20}
{ 
      \node[circle,fill,inner sep=2pt,xshift=12.5cm](\s) at ({360/\n * (\s - 16)-13}:\radius) {}; 
}

\node[circle,inner sep=2pt,xshift=12.5cm](7) at ({275}:\radius) {\small$\dots$};

\node[below=.1cm] at (3){\Large$v_1$};
\node[below=.1cm] at (8){\Large$v_2$};
\node[below=.1cm] at (13){\Large$v_3$};
\node[below=.1cm] at (18){\Large$v_{\ell}$};

\node[circle,draw, very thick,inner sep=.8cm] at (0,0)(a1){};  \node[below=1.5cm] at (a1){\Large$T_1$};
\node[circle,draw,very thick,inner sep=.8cm] at (3.5,0)(a2){}; \node[below=1.5cm] at (a2){\Large$T_2$};

\node[circle,draw, very thick,inner sep=.8cm] at (7,0)(a3){}; \node[below=1.5cm] at (a3){\Large$T_3$};

\node at (9.8,0)(d){\Large$\dots$};
\node[circle,draw, very thick,inner sep=.8cm] at (12.5,0)(a4){}; \node[below=1.5cm] at (a4){\Large$T_{\ell}$};

\draw[->,ultra thick] (a1) to (a2);
\draw[->,ultra thick] (a2) to (a3);
\draw[->,ultra thick] (a3) to (d);
\draw[->,ultra thick] (d) to (a4);

\draw[->,thick] (8) to[bend right] (3);
\draw[->,thick] (13) to[bend right] (3);
\draw[->,thick] (18) to[bend right] (3);
\end{tikzpicture}
}

\caption{Illustration of the example showing that $\mov_{\tc}$ does not always follow the formula in Theorem \ref{thm:tc-high-probability}. Each $T_i$ is a ``cyclone'' of size $k$ and has one distinguished alternative~$v_i$.}
\label{fig:tc-diversity}
\end{figure}
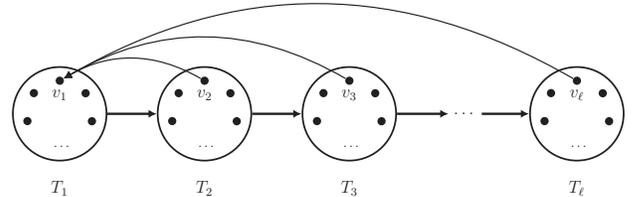

One can check that all alternatives belong to $\tc(T)$.
We claim that $\mov_{\tc}(x,T)=\ell-(i-1)$ if $x \in V(T_i)$ for $i\geq 2$.
Since reversing the backward edges $(v_i,v_1), \dots, (v_{\ell},v_1)$ makes $v_1$ unreachable from $x$, we have $\mov_{\tc}(x,T)\leq\ell-(i-1)$. 
For the other direction, by \Cref{lem:xcut-kkings-destr}, it suffices to show that even if $\ell-i$ edges are removed, $x$ can still reach every other alternative via some directed path.
Suppose that $\ell-i$ edges are removed.
We first claim that in any subtournament $T_j$, every alternative can still reach every other alternative.
Indeed, if the cyclone $T_j$ consists of the alternatives $z_1,\dots,z_m$ in this order, then before the edges are removed, $z_1$ can reach $z_{(m+3)/2}$ via $(m-1)/2$ (disjoint) paths of length two; at least one of these paths remains intact after the edge removal as long as $m > 2\ell+1$.
Similarly, $z_{(m+3)/2}$ can still reach $z_2$, meaning that $z_1$ can also reach $z_2$.
Applying the same argument repeatedly, in $T_j$, every alternative can still reach every other alternative.
Now, since $\ell-i$ edges are removed, one of the backward edges $(v_i,v_1), \dots, (v_{\ell},v_1)$ remains intact, say $(v_s,v_1)$.
By going to $v_s$ via some alternative in $T_s$, our alternative $x$ can then reach $v_1\in T_1$, from where it can also reach all other alternatives in $T$.
Hence $\mov_{\tc}(x,T)=\ell-(i-1)$.
In particular, $\{\mov_{\tc}(x,T) \mid x \in V(T)\}\supseteq \{1,2,\dots,\ell-1\}$.
This example therefore also shows that $\mov_{\tc}$ can take on an arbitrary large number of values in a tournament.

Finally, note that the formula in Theorem \ref{thm:tc-high-probability} predicts a \mov{} value of $ (m-1)/2$ for all alternatives.
This can be made arbitrarily larger than $\ell-1$ by choosing $m$ to be as large as desired.

\end{document}